\documentclass[acmsmall,screen]{acmart}
\AtBeginDocument{%
  \providecommand\BibTeX{{%
    Bib\TeX}}}
    
\setcopyright{acmlicensed}
\copyrightyear{2026}
\acmYear{2026}
\acmDOI{XXXXXXX.XXXXXXX}

\acmJournal{TECS}
\acmVolume{37}
\acmNumber{4}
\acmArticle{111}
\acmMonth{8}

\usepackage{xcolor}
\usepackage{listings}
\usepackage{multirow}
\usepackage{balance}
\usepackage{xspace}
\usepackage[algoruled,longend,linesnumbered]{algorithm2e}
\BeforeBeginEnvironment{algorithm}{\setlength{\intextsep}{0pt}\setlength{\floatsep}{0pt}\setlength{\textfloatsep}{10pt}}
\usepackage{makecell}
\usepackage{forest}

\newcommand{\synprob}{{\sc OptiCC}\xspace}

\newcommand{\tool}{{\sc CiSC}\xspace}
\newcommand{\toolsol}{{\sc CiSC-Solver}\xspace}

\newcommand{\myvec}[1]{\vec{\bf #1}}
\newcommand{\hatvec}[1]{\hat{\bf #1}}

\newcommand{\angleb}[1]{\langle#1\rangle} 
 
\newcommand{\partb}[1]{{\llbracket#1\rrbracket}}

\newcommand{\dom}{\mathbb{B}}
\newcommand{\domC}{\mathbb{C}}
\newcommand{\Gone}{{\tt 1}\xspace}
\newcommand{\Gzero}{{\tt 0}\xspace}

\newcommand{\hide}[1]{}
\newcommand{\hideBFtoCirc}[1]{\hide{#1}}

\newcommand{\Gand}{{\tt and}\xspace}

\newcommand{\Gor}{{\tt or}\xspace}

\newcommand{\Gnot}{{\tt not}\xspace}

\newcommand{\Gates}{\mathbb{G}}
\newcommand{\gate}{{\tt \ell}}

\newcommand{\sem}[1]{\llbracket#1\rrbracket}

\newcommand{\ccirc}{{\bf H}}
\newcommand{\code}{{\bf C}}
\newcommand{\mmat}{{\bf M}}
\newcommand{\idmat}{{\bf I}}
\newcommand{\gmat}{{\bf G}}
\newcommand{\parti}{\mathbb{P}}

\definecolor{codegreen}{rgb}{0,0.6,0}
\definecolor{codegray}{rgb}{0.5,0.5,0.5}
\definecolor{codepurple}{rgb}{0.58,0,0.82}
\definecolor{backcolour}{rgb}{0.95,0.95,0.92}

\SetKwComment{Comment}{$\triangleright$\ }{}
\SetCommentSty{itshape}

\lstdefinestyle{mystyle}{
  backgroundcolor=\color{backcolour}, commentstyle=\color{codegreen},
  keywordstyle=\color{magenta},
  numberstyle=\tiny\color{codegray},
  stringstyle=\color{codepurple},
  basicstyle=\ttfamily\footnotesize,
  breakatwhitespace=false,         
  breaklines=true,                 
  captionpos=b,                    
  keepspaces=true,                 
  numbers=left,                    
  numbersep=5pt,                  
  showspaces=false,                
  showstringspaces=false,
  showtabs=false,                  
  tabsize=2
}

\lstdefinestyle{mycustomstyle}{
    backgroundcolor=\color{white},   
    basicstyle=\ttfamily\footnotesize,
    breaklines=true,
    frame=single,  
    numbers=none,
    keywordstyle=\color{blue}\bfseries,
    commentstyle=\color{gray},
    stringstyle=\color{red},
    morekeywords={synth-fun, declare-var, constraint, check-synth},
    framerule=0.8pt, 
    rulecolor=\color{black}  
}

\def\BibTeX{{\rm B\kern-.05em{\sc i\kern-.025em b}\kern-.08em
    T\kern-.1667em\lower.7ex\hbox{E}\kern-.125emX}}

\begin{document}
\title{Optimal Circuit Synthesis of Linear Codes for Error Detection and Correction}
%

\author{Xi Yang}
\affiliation{
  \institution{ShanghaiTech University}
  \city{Shanghai}
  \country{China}
}
\email{yangxi2023@shanghaitech.edu.cn}
\orcid{0009-0009-6648-0047}

\author{Taolue Chen}
\affiliation{
  \institution{Birkbeck, University of London}
  \city{London}
  \country{UK}
  }
\email{t.chen@bbk.ac.uk}
\orcid{0000-0002-5993-1665}

\author{Yuqi Chen}
\affiliation{
  \institution{ShanghaiTech University}
  \city{Shanghai}
  \country{China}
}
\email{chenyq@shanghaitech.edu.cn}
\orcid{0000-0003-2988-6012}

\author{Fu Song}
\authornote{Corresponding author.}
\affiliation{
  \institution{Key Laboratory of System Software (Chinese Academy of Sciences), Institute of Software, Chinese Academy of Sciences}
  \city{Beijing}
  \country{China}
  }
\email{songfu@ios.ac.cn}
  \additionalaffiliation{
  \institution{State Key Laboratory of Cryptology}
  \city{Beijing}
  \country{China}
  }
  \additionalaffiliation{
  \institution{University of Chinese Academy of Sciences}
  \city{Beijing}
  \country{China}
 }
\additionalaffiliation{
\institution{Nanjing Institute of Software Technology}
  \city{Nanjing}
  \country{China}
 }
\orcid{0000-0002-0581-2679}

\author{Chundong Wang}
\affiliation{
  \institution{ShanghaiTech University}
  \city{Shanghai}
  \country{China}
}
\email{wangchd@shanghaitech.edu.cn}
\orcid{0000-0001-9069-2650}

\author{Zhilin Wu}
\affiliation{ 
  \institution{Key Laboratory of System Software (Chinese Academy of Sciences), Institute of Software, Chinese Academy of Sciences}
  \city{Beijing}
  \country{China}
  } 
\email{wuzl@ios.ac.cn}
  \additionalaffiliation{
\institution{University of Chinese Academy of Sciences}
  \city{Beijing}
  \country{China}
  }
\orcid{0000-0003-0899-628X}

\renewcommand{\shortauthors}{Yang et al.}

\begin{abstract}
Fault injection attacks deliberately inject faults into a device via physical channels to disturb its regular execution. Adversaries can effectively deduce secrets by analyzing both the normal and faulty outputs, posing serious threats to cryptographic primitives implemented in hardware. An effective countermeasure to such attacks is via redundancy, commonly referred to as concurrent error detection schemes, where 
Binary linear codes have been used to defend against fault injection attacks. 
However, designing an optimal code circuit is often time-consuming, error-prone, and requires substantial expertise. In this paper, we formalize the optimal code circuit synthesis problem (\synprob) based on two domain-specific minimization objectives on individual inputs and parity size. We then propose a novel algorithm \tool for solving \synprob, prioritizing the minimization of individual inputs. Our approach features both correct-by-construction and secure-by-construction.
In a nutshell, \tool gradually reduces individual inputs and parity size by checking, via SMT solving, the existence of feasible Boolean functions for implementing a desired code. We further present an effective technique to lazily generate  combinations of inputs to Boolean functions, while quickly identify equivalent ones.
We implement our approach in a tool \tool, and evaluate it on practical benchmarks. Experimental results show our approach can synthesize code circuits that significantly outperform those generated by the latest state-of-the-art techniques.
\end{abstract}

\begin{CCSXML}
<ccs2012>
   <concept>
       <concept_id>10010583.10010682</concept_id>
       <concept_desc>Hardware~Electronic design automation</concept_desc>
       <concept_significance>500</concept_significance>
       </concept>
   <concept>
       <concept_id>10010583.10010750</concept_id>
       <concept_desc>Hardware~Robustness</concept_desc>
       <concept_significance>500</concept_significance>
       </concept>
   <concept>
       <concept_id>10003752.10003790.10003794</concept_id>
       <concept_desc>Theory of computation~Automated reasoning</concept_desc>
       <concept_significance>500</concept_significance>
       </concept>
   <concept>
       <concept_id>10002978.10002986</concept_id>
       <concept_desc>Security and privacy~Formal methods and theory of security</concept_desc>
       <concept_significance>500</concept_significance>
       </concept>
 </ccs2012>
\end{CCSXML}

\ccsdesc[500]{Hardware~Electronic design automation}
\ccsdesc[500]{Hardware~Robustness}
\ccsdesc[500]{Theory of computation~Automated reasoning}
\ccsdesc[500]{Security and privacy~Formal methods and theory of security}

\keywords{Cryptographic circuit,
Correct-by-construction synthesis,
Secure-by-construction synthesis,
Error detection and correction,
Binary linear code}

\received{20 February 2007}
\received[revised]{12 March 2009}
\received[accepted]{5 June 2009}
%
%
\maketitle              
%
%
%

\section{Introduction}

Cryptographic circuits are hardware implementations of cryptographic primitives, which play an essential role in ensuring secure communication, secure authentication, privacy 
and data integrity, due to rising security risks in, e.g., 
embedded devices, 
cyber-physical systems, 
and the Internet of Things~\cite{AtzoriIM10,TYAGI202122}. %
Research has showed that fault injection (FI) attacks~\cite{BonehDL97} pose serious threats to such circuits. FI attacks deliberately disrupt their regular execution 
by introducing faults, allowing adversaries to infer secret information via an analysis of both normal and faulty outputs. Extensive studies have been done on FI attacks, with most focusing on the various physical channels used to inject faults (e.g., clock glitch~\cite{AgoyanDNRT10}, underpowering~\cite{elmaneGD08}, voltage glitch~\cite{zussa2013power}, electromagnetic pulse~\cite{DehbaouiDRT12}, or laser beam~\cite{skorobogatov2003optical}), as well as  
analysis techniques (e.g., differential fault analysis~\cite{biham1997differential},
differential fault intensity analysis~\cite{GhalatyYTS14},
fault sensitivity analysis~\cite{LiSGFTO10},
statistical fault analysis~\cite{fuhr2013fault}, 
(statistical) ineffective fault analysis~\cite{clavier2007secret,   
 dobraunig2018statistical}).

Mitigation schemes have been studied to defend against
FI attacks, typically achieved by incorporating redundancy to detect and/or correct injected faults during computation~\cite{Malkin2006ACC,KulikowskiWK08,AkdemirWKS12,AzziBCV17,AghaieMRSSS20,ShahmirzadiR020,RasoolzadehS021,MullerM24,BertoniBKMP03}. These approaches are commonly referred to as error-detecting and error-correcting schemes. \emph{Binary linear codes} (BLCs) are one of the most promising 
techniques, owing to their high performance efficiency in symmetric cryptography and a straightforward integration with other countermeasures against related side-channel attacks (e.g., masking schemes for power-based side-channel attacks)~\cite{AzziBCV17,AghaieMRSSS20,ShahmirzadiR020,RasoolzadehS021,MullerM24}. 
In a nutshell, BLCs are an error-correcting code where messages are encoded as codewords such that any linear combination of codewords is itself a codeword. They can be configured to detect/correct more faults by increasing the code's minimum (Hamming) distance (albeit at the cost of overhead). 

To implement a BLC-based fault-resistant cryptographic circuit, it is common practice for an engineer to select a concrete BLC for a given cryptographic primitive, which is then implemented as a \emph{code circuit}~\cite{AghaieMRSSS20, ShahmirzadiR020,RasoolzadehS021}. The code circuit must be correct and secure, while introducing minimal overhead in terms of, e.g., the area and latency.
As a result, this process is usually time-consuming, error-prone and expertise-demanding, especially when optimization is a concern, because it involves a large joint search space: the space of linear code (with a given message size and minimum distance) and the space of possible implementations with multiple optimization objectives (e.g., area and latency). Even worse, the optimal implementation of an optimal code (i.e., minimal parity size) does not necessarily yield an optimal solution,
due to additional domain-specific requirements. 
For instance, to satisfy the independence requirement, the input signals of some sub-circuits must be individually generated; the sum of the individual inputs should ideally be minimized~\cite{MullerM24}. As a result, conventional circuit optimization techniques in EDA tools~\cite{Tu2024} fail to generate optimal solutions for this needle-in-the-haystack problem. To our knowledge, only a greedy algorithm and a brute-force algorithm were proposed recently~\cite{MullerM24}, 
which searches for a code according to the message size and minimum distance, and synthesizes the circuit using sum-of-product (SOP) minimization~\cite{quine1952,mccluskey1956}. 
However, neither the code nor the code circuit is guaranteed to be optimal.

In this paper, we formulate the problem of {\sc Opti}mal Circuit synthesis of linear Code for error detection/correction (\synprob), with two domain-specific minimization objectives, i.e., individual inputs and parity size, which play a key role for code circuit. 
To solve this problem, we propose a novel algorithm, 
prioritizing the minimization of individual inputs.
In a nutshell, we begin with the code circuit produced by the greedy algorithm~\cite{MullerM24} and gradually reduce individual inputs and parity size by checking, via SMT solving, whether there exists feasible Boolean functions
satisfying the requirements of error detection/correction code (e.g., linearity, injectivity, minimum distance, number of individual inputs). 
As the SMT constraints enforce both the correctnest and security requirements, our approach features both correct-by-construction and secure-by-construction.

However, exhaustively exploring all possible combinations of inputs to Boolean functions according to the number of individual inputs is computationally expensive and does not scale well. To tackle this issue, 
inspired by symmetry reduction in model checking~\cite{ClarkeEJS98},
we propose an effective technique to lazily generate 
input combinations and quickly identify equivalent ones by storing 
previously explored combinations in a tree structure.
Our algorithm provides local optimality guarantees: once the Boolean functions are identified with $n_{\tt in}$ individual inputs and parity size $r$, there are no solutions that use $n_{\tt in}'<n_{\tt in}$ individual inputs 
and 
parity size $r'\leq r+1$. 
Finally, 
we synthesize the code circuit using SOP minimization, as in \cite{MullerM24}.  
As such, we obtain an optimal solution for the \synprob problem instance
which not only minimizes the number of individual inputs and parity size, but also reduces the number of gates in the resulting circuit.

We implement our algorithm in a prototype tool,
C{\sc i}rcuit Synthesizer of Linear Codes for error detection/correction (\tool),
based on the SMT solver CVC5~\cite{BarbosaBBKLMMMN22}.
We conduct extensive experiments using the recommended message sizes 
with various minimum distances to evaluate 
the effectiveness of our approach.
The experimental results show that \tool can synthesize code
circuits that significantly outperform those generated by the state-of-the-art greedy algorithm and its optimized variant via brute-force search~\cite{MullerM24}, achieving reductions in the number of individual inputs,   outputs and gates, as well as lower latency. 

In summary, our main contributions include 
\begin{itemize}
    \item A formulation of 
    \synprob, addressing the minimization of individual inputs and parity sizes, as well as correctness and security requirements;
    \item A novel algorithm for solving the \synprob problem via 
    SMT solving, featuring both correct-by-construction and secure-by-construction.
    \item A prototype tool implementing our approach, with extensive experiments confirming its effectiveness and efficiency.
\end{itemize} 

\smallskip
\noindent
  {\bf Outline.}
  Section~\ref{sec:preliminary}  briefly
  introduces circuits and linear codes.
 Section~\ref{sec:problem} formulates the \synprob problem.
 Section~\ref{sec:method} solves the \synprob problem.
 Section~\ref{sec:experiments} reports the experimental results. Finally, we discuss related work in Section~\ref{sec:relatedwork}
 and conclude the paper in Section~\ref{sec:conclusion}.


\section{Preliminaries}
\label{sec:preliminary}
 
Let $\dom$ denote the Boolean domain $\{\Gzero,\Gone\}$ (i.e., $\mathbb{F}_2$) and $[n]$ denote $\{1,2, \cdots, n\}$. 
Given a Boolean vector $\myvec{x}\in\dom^n$, $\myvec{x}_i$ denotes the $i$-th entry for $i\in [n]$.

\subsection{Circuits}
We consider three types of logic gates:  
logical-\Gand ($\wedge$), logical-\Gor ($\vee$), 
and inverter gate \Gnot ($\neg$), with fan-in 2, 2 and 1, respectively. 
Let $\Gates$ denote the set of gates $\{\neg,\wedge,\vee\}$; 
other logic gates can be easily represented. 

\begin{definition}
A (combinational logic) circuit $\ccirc$ is a tuple $(I,O,V, E,\gate)$, where
\begin{itemize}
   \item $I$ and $O$ are finite sets of input and output variables, respectively;
   \item $V$ is a finite set of vertices;
   \item $E\subseteq V\times V$ is a finite set of edges, where each $(v_1,v_2)\in E$ transmits the signal over $\dom$ from $v_1$ to $v_2$;
   \item $\gate:V\rightarrow I\cup O\cup \Gates$
   is a labeling function associating each vertex $v$ with an input variable $x\in I$, an output variable $y\in O$ or a logic gate $g\in \Gates$. Moreover, each input/output variable is associated to a unique vertex without any incoming/outgoing edges, and each vertex associated with the logic gate has the same number of incoming edges as the fan-in of the logic gate;
  \item $(V, E)$ forms a Directed Acyclic Graph (DAG). 
\end{itemize}
\end{definition}
 
A circuit $\ccirc$ represents a (multi-input, multi-output) Boolean function $\sem{\ccirc}:\dom^{|I|}\rightarrow \dom^{|O|}$ such that for any input signals $\myvec{x}\in \dom^{|I|}$, $\sem{\ccirc}(\myvec{x})$ is the output of $\ccirc$
under the input $\myvec{x}$.
 
To prevent fault propagation, the concept of \emph{independence} was proposed~\cite{AghaieMRSSS20,ShahmirzadiR020,RasoolzadehS021}
to implement cryptographic primitives that are resilient to FI attacks. It requires that the sub-circuit computing each output must be implemented independently, namely, no two sub-circuits share any logic gates. (Under this condition, a  fault injected to any single gate can only affect at most one output.)

\begin{definition}\label{def:independence}
  A circuit $\ccirc=(I,O,V, E,\ell)$ is \emph{independent} if it is a 
  composition of the sub-circuits $\{\ccirc_i\}_{i=1}^{|O|}$ such that 
\begin{itemize}
    \item for any $\myvec{x}\in \dom^{|I|}$: $\sem{\ccirc}(\myvec{x})=(\sem{\ccirc_1}(\myvec{x}),\cdots, \sem{\ccirc_{|O|}}(\myvec{x}))$; 
    \item for any $1\leq i\neq j\leq |O|$: $\ccirc_i$ and $\ccirc_j$   share no vertices other than those which are associated with input variables.
\end{itemize}   
\end{definition}
Intuitively, the single-output
Boolean functions $\sem{\ccirc_1},\cdots, \sem{\ccirc_{|O|}}$ form
the coordinate functions of the multi-output
Boolean function $\sem{\ccirc}$.

\subsection{Binary Linear Codes}  
A binary linear code (BLC~\cite{MacWilliamsS77}) is an encoding scheme that represent a message as a binary sequence, enabling the detection and correction of errors 
caused by accidental bit flips during transmission. They have been extensively used in communication and data storage to mitigate noise and prevent data corruption.
Recently, BLCs have been adopted to design and implement fault-resistant cryptographic circuits to defend against FI attacks~\cite{AghaieMRSSS20,ShahmirzadiR020,RasoolzadehS021,MullerM24}.  

\begin{definition}
A \emph{binary code} $\code_{n,k}:\dom^k\rightarrow \dom^n$ of length $n$ and rank $k$ (with $k<n$) is an injection, i.e., each message $\myvec{x}\in\dom^k$
is mapped to a unique codeword $\code_{n,k}(\myvec{x})\in\domC\subset\dom^n$, where $\domC$ forms the codeword space, i.e., the range of $\code_{n,k}$. 
$r:=n-k$ is referred to as the \emph{parity size}.

$\code_{n,k}$ is \emph{linear} if
$\domC$ is a linear subspace over $\dom$. 

$\code_{n,k}$ is \emph{systematic}
if for each message $\myvec{x}\in\dom^k$,
$\code_{n,k}(\myvec{x})$ is the concatenation
$\myvec{x}\| \myvec{x}'$ of the message $\myvec{x}$ and its  parity $\myvec{x}'$ 
derived from $\myvec{x}$.
\end{definition}
Henceforth, code refers to binary linear systematic code unless otherwise stated. 
A code $\code_{n,k}$ can alternatively be represented by
a  $k\times n$ matrix $\mmat_{k\times n}$,  
referred to as \emph{generator matrix}, which has full row rank (as $\code_{n,k}$ is injective), i.e., 
its rows form a basis for the codeword space 
$\domC$. Thus, $\code_{n,k}(\myvec{x})=\myvec{x}\cdot \mmat_{k\times n}$ for each message $\myvec{x}\in\dom^k$. 
Furthermore, for a systematic code $\code_{n,k}$, 
the generator matrix $\mmat_{k\times n}$
can be 
represented in 
the form $[\idmat_{k},\gmat_{k\times r}]$,
where $\idmat_{k}$ is the $k\times k$ identity matrix,
and $\gmat_{k\times r}$ is the \emph{parity generator matrix} (PGM) of $\code_{n,k}$. 
In other words, $\code_{n,k}(\myvec{x})= \myvec{x}\| \myvec{x}'$ where 
$\myvec{x}'=\myvec{x}\cdot \gmat_{k\times r}$.
Systematic codes enable a separation between the message and its parity.
To defend against FI attacks,
the PGM $\gmat_{k\times r}$ must be injective~\cite{BartkewitzBMMS22}, 
implying that $r\geq k$ and $n\geq 2k$.

The error detection/correction capability of $\code_{n,k}$ is determined by its minimum (Hamming) distance, defined as
\[d:= \min_{\myvec{x},\myvec{y}\in\domC.\myvec{x}\neq\myvec{y}} {\tt HW}(\myvec{x}\oplus\myvec{y}), \ \text{or equivalently,}\  d:=\min_{\myvec{x}\in \domC.\myvec{x}\neq \Gzero^n} {\tt HW}(\myvec{x})\]
where ${\tt HW}(\myvec{z})$ denotes the Hamming weight (i.e., the number of $1$-bits) of
$\myvec{z}\in\dom^n$. 
The larger the minimum distance is, the more flipped bits the code can detect/correct. 
%
The Singleton bound~\cite{Lint88} entails that $d\leq n-k+1$,  thus $r=n-k\geq d-1$.
%
Furthermore,  any linear non-systematic code admits
a systematic encoding with the same minimum distance~\cite{Blahut2003}.

When a codeword $\myvec{x}\in\domC$ is transmitted, some bits may be flipped, resulting in a faulty codeword $\hatvec{{x}}\in\dom^n$, and $\myvec{e}:=\myvec{x}\oplus\hatvec{{x}}$ is called the \emph{error vector} in which its $\Gone$-bits indicate the positions
of the flipped bits. Clearly, $\hatvec{{x}}=\myvec{x}\oplus\myvec{e}$.
\begin{proposition}[\cite{ShahmirzadiR020}]\label{prop:distancebounds}
Consider a code $\code_{n,k}$ with minimum distance $d$.
For any $\hatvec{{x}}=\myvec{x}\oplus\myvec{e}$ resulting from the  codeword $\myvec{x}\in\domC$ and error vector $\myvec{e}\in\dom^n$, the following statements hold:
\begin{itemize} 
    \item  $\code_{n,k}$ can determine whether 
   $\hatvec{x}$ is correct or faulty, if
    ${\tt HW}(\myvec{e})<d$;
   \item  $\code_{n,k}$ can recover the
   correct codeword $\myvec{x}$ from $\hatvec{{x}}$, if
    ${\tt HW}(\myvec{e})<d/2$.
\end{itemize}
\end{proposition}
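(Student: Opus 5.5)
The plan is to use the two standard procedures, membership test and nearest-codeword decoding, together with the fact that the Hamming distance ${\tt HW}(\myvec{u}\oplus\myvec{v})$ is a metric on $\dom^n$. Only two inputs are needed: the definition of $d$ as the minimum distance between distinct codewords, and the triangle inequality ${\tt HW}(\myvec{u}\oplus\myvec{w})\leq {\tt HW}(\myvec{u}\oplus\myvec{v})+{\tt HW}(\myvec{v}\oplus\myvec{w})$. The triangle inequality holds because $\myvec{u}\oplus\myvec{w}=(\myvec{u}\oplus\myvec{v})\oplus(\myvec{v}\oplus\myvec{w})$, and the $\Gone$-bits of an XOR are contained in the union of the $\Gone$-bits of its summands.

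For detection, the procedure declares $\hatvec{x}$ correct iff $\hatvec{x}\in\domC$. If $\myvec{e}=\Gzero^n$, then $\hatvec{x}=\myvec{x}\in\domC$ and the procedure correctly accepts. If $0<{\tt HW}(\myvec{e})<d$, suppose for contradiction that $\hatvec{x}\in\domC$. By linearity of $\domC$, $\myvec{e}=\myvec{x}\oplus\hatvec{x}$ is then a nonzero codeword of weight less than $d$, which contradicts the definition of $d$. Equivalently, $\myvec{x}$ and $\hatvec{x}$ would be distinct codewords at distance less than $d$. So $\hatvec{x}\notin\domC$ and the fault is flagged.

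For correction, the procedure returns the codeword nearest to $\hatvec{x}$ in Hamming distance. We have ${\tt HW}(\myvec{x}\oplus\hatvec{x})={\tt HW}(\myvec{e})<d/2$. Take any other codeword $\myvec{y}\neq\myvec{x}$. The triangle inequality gives
\[
d\leq{\tt HW}(\myvec{x}\oplus\myvec{y})\leq {\tt HW}(\myvec{x}\oplus\hatvec{x})+{\tt HW}(\hatvec{x}\oplus\myvec{y}),
\]
so ${\tt HW}(\hatvec{x}\oplus\myvec{y})>d-d/2=d/2>{\tt HW}(\hatvec{x}\oplus\myvec{x})$. Hence $\myvec{x}$ is the unique nearest codeword to $\hatvec{x}$, and decoding recovers it.

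The only step needing care is the strict inequality in the correction case, which must exclude ties so that the nearest codeword is unique. Since ${\tt HW}(\myvec{e})<d/2$ is strict, the chain above is strict throughout and no tie can occur. Everything else follows directly from the definitions.
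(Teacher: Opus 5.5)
Your proof is correct and complete: the membership test handles detection, since a nonzero error of weight below $d$ cannot turn $\myvec{x}$ into another codeword, and the strict triangle-inequality bound makes $\myvec{x}$ the unique nearest codeword when ${\tt HW}(\myvec{e})<d/2$. The paper does not prove this statement itself but cites it from \cite{ShahmirzadiR020} as a standard coding-theory fact, and your argument is the usual proof of it; as your ``distinct codewords'' phrasing shows, neither part actually needs linearity of $\domC$.
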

%
It follows from Proposition~\ref{prop:distancebounds} that $\code_{n,k}$ with minimum distance $d$ can detect at most $d-1$ flipped bits and correct at most $(d-1)/2$ flipped bits, but 
not necessarily simultaneously.
Indeed, it can detect at most $t_d$ flipped bits and correct at most $t_c$ flipped bits simultaneously 
given $t_d+t_c< d$ and $t_d>t_c$~\cite{RasoolzadehS021}.

\begin{example}
Consider the code $\code_{8,4}$ with the following generator matrix
$\gmat_{4\times 8}$. Its minimum distance is $3$, thus is able to correct at most 
$1$ flipped bits and detect at most $2$ flipped bits.

\[\mmat_{4\times 8}=[\idmat_{4},\gmat_{4\times 4}]=
\left[
\begin{array}{cccc cccc}
                1 & 0 & 0 & 0 & 1 & 1 & 0 & 0 \\
                0 & 1 & 0 & 0 & 1 & 0 & 1 & 0 \\
                0 & 0 & 1 & 0 & 1 & 1 & 1 & 0 \\
                0 & 0 & 0 & 1 & 1 & 0 & 0 & 1 \\
            \end{array}
\right]. \]
\end{example}

\section{Formulation of \synprob Problem}
\label{sec:problem}

When designing a fault-resistant cryptographic circuit $\ccirc$ using linear codes, 
the engineer must select a message size (i.e., rank $k$) and
a minimum distance $d$ as a trade-off between resistance against fault injection attacks and implementation overhead (e.g., area and latency). 
The recommended message size is $k\in\{2,3,4\}$
for optimizing the area overhead, 
and $k=1$ for the shortest latency~\cite{MullerM24}. 
Once $k$ and $d$ are fixed,  
a code $\code_{n,k}$ that has minimum distance 
$d'\geq d$ 
should be identified, where $n\geq \max(2k,d'+k-1)$ and $r\geq \max(k,d'-1)$. 
Then $\code_{n,k}$ is implemented as a code circuit $\ccirc_c$, from which $\ccirc$ will be built to achieve optimal area and/or latency. 
The search space is further restricted to systematic codes, 
reducing to finding an injective PGM 
$\gmat_{k\times r}$ with minimum distance at least $d$.

The number of individual inputs in $\ccirc_c$ plays a
primary role  in building the corresponding 
cryptographic circuit, arguably more significant than parity size, area and/or latency~\cite{MullerM24}. 
Indeed, when $\ccirc_c$ is composed with another circuit $\ccirc'$, their composition $\ccirc_c\circ \ccirc'$ must satisfy the independence property (cf. Definition~\ref{def:independence}). 
As a result, if some input $\myvec{x}_i$ is required for computing multiple (say $t$) outputs  in $\ccirc_c$ and $\myvec{x}_i$ needs to be connected to by an output $\myvec{z}_i$ of $\ccirc'$, then $t-1$ copies of the sub-circuit computing 
$\myvec{z}_i$ must be instantiated in $\ccirc'$ (so that each individual input $\myvec{x}_i$ 
is connected to by a unique output $\myvec{z}_i$ to respect independence). This replication could lead to a substantially larger overall circuit. Similarly, the parity size $r$ also plays an 
important role, as the outputs become
inputs of the corresponding decoder (of the code).

To this end, we identify two domain-specific minimization objectives for $\ccirc_c$ that are essential for building the optimal cryptographic circuit $\ccirc$, based on which  we formulate the problem of
{\sc Opti}mal Code Circuit synthesis (\synprob).

%
For each output variable $y\in O$, let ${\tt supp}_y(\ccirc)\subseteq I$ be the set of \emph{support inputs}
involved in computing $y$ in $\ccirc$. 
Let $\#_{\tt in}\ccirc:=\sum_{y\in O} |{\tt supp}_y(\ccirc)|$, i.e., the number of \emph{individual inputs} for computing outputs. (Note that $\#_{\tt in}\ccirc$ may be well greater than the number $|I|$ of input variables.) 
Let $\#_{\tt out}\ccirc$ be the number for output variables used in $\ccirc$, i.e., $|O|$, which 
is the parity size of the code.
We consider two minimization objectives: 
number of individual inputs
and parity size, with 
priority given to the former objective, as it 
is often much larger. 





\begin{definition}
An \synprob problem is specified as a tuple
$(k,d)$ where $k$ and $d$ are the message size and minimum distance, respectively. 
\end{definition}

A \emph{solution} to an \synprob instance $(k,d)$ is a circuit $\ccirc$ such that 
\begin{itemize} 
    \item $\ccirc$ implements an injective PGM $\gmat_{k\times r}$ of a code $\code_{n,k}$,  for some parity
    size $r$ and $n=k+r$, with the minimum distance at least $d$;
    \item $\ccirc$ is independent (cf. Definition~\ref{def:independence}); 
    \item $\ccirc$ is \emph{optimal} 
    w.r.t. the two minimization objectives. 
\end{itemize}

Finding an optimal solution to an \synprob instance means 
identifying both the optimal PGM $\gmat_{k\times r}$ and 
its optimal circuit implementation. 
%
Note that we do not take circuit-specific objectives (i.e., area and latency of the code circuit) into account, 
as they are independent of other sub-circuits in
the final cryptographic
circuit. 
We note that for any (not necessarily
optimal) solution
$\ccirc$  of an \synprob instance 
$(k,d)$, we have $k\leq \#_{\tt in}\ccirc\leq k*r$,
i.e., each input must be used at least once and 
each output  has at most $k$ support inputs.


\begin{theorem}\label{thm:upper-bound}
An optimal solution of an \synprob instance $(k,d)$ can be determined in $\mathcal{O}(\frac{r^2}{k}\cdot 2^{r\cdot k+2k})$ time when the parity size is bounded by $r$, using at most $\mathcal{O}(\frac{r}{k}\cdot 2^{k})$ gates
and the longest path has at most $\mathcal{O}(k)$ gates.
\end{theorem}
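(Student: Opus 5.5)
The plan is a brute-force enumeration argument followed by a direct circuit construction. First we fix the search space. For each parity size $r'$ with $\max(k,d-1)\le r'\le r$, a candidate solution is determined by a PGM $\gmat_{k\times r'}$. There are $2^{k r'}\le 2^{kr}$ such matrices. For each one we check three properties:
\begin{itemize}
\item \emph{Injectivity}: $\myvec{x}\cdot\gmat\neq\Gzero^{r'}$ for every nonzero $\myvec{x}\in\dom^k$.
\item \emph{Minimum distance}: ${\tt HW}(\myvec{x}\|\myvec{x}\cdot\gmat)\ge d$ for every nonzero $\myvec{x}$.
\item \emph{Cost}: the number of individual inputs, which for an independent circuit that computes each parity bit separately is the number of $\Gone$ entries of $\gmat$.
\end{itemize}
Since the coordinate functions are linear, the support of output $j$ is exactly the set of rows $i$ with $\gmat_{ij}=\Gone$. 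So $\#_{\tt in}$ is the column-wise Hamming weight sum, and no implementation can use fewer supports than this.

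Checking one matrix means enumerating all $2^k$ messages and computing a vector–matrix product of cost $O(kr')$. Together with the outer loop over $r'$ and bookkeeping over all pairs of messages (for example, checking distinctness of codewords for injectivity), this gives about $O(r\cdot 2^{kr}\cdot 2^{2k}\cdot r/k)$. I will count word-level operations so that the $k$ factors combine into $\frac{r^2}{k}$. I expect some slack when matching the exact polynomial factor in the stated bound; I will argue the enumeration cost as $r$ choices of $r'$, times $2^{kr'}$ matrices, times $2^{2k}$ pairwise checks, times $r/k$ per check. The optimum is then selected lexicographically: minimal $\#_{\tt in}$ first, then minimal $r'$. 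Because the enumeration is exhaustive over all $r'\le r$, the selected matrix is optimal among all solutions with parity size bounded by $r$.

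For the circuit bounds, each output $y_j$ is the XOR of at most $k$ inputs. Implementing it independently by SOP over its support gives at most $2^{k-1}$ minterms, hence $O(2^k)$ gates, and a balanced tree of depth $O(\log k)$ or a chain of depth $O(k)$. Independence holds because each output gets its own sub-circuit, which shares only input vertices with the others. The total is $r\cdot O(2^k)$ gates. To reach the stated $\mathcal{O}(\frac{r}{k}2^k)$, I would argue that the total support is bounded for an optimal solution: with $\#_{\tt in}=s$, the per-output sizes $s_j$ sum to $s$ and give $\sum_j 2^{s_j}$ gates.

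The main obstacle is exactly this gate bound. $\sum_j 2^{s_j}$ is maximized by concentrating support, so I would bound it using $s_j\le k$ and the fact that at most $s/k$ outputs can have full support, together with a claim that optimal solutions have $s=O(r)$. I am uncertain about that claim, and would first try to establish it from minimality, comparing against the trivial repetition-style construction with small support. For the depth bound, an XOR of $t\le k$ inputs built as a chain of 2-input XOR blocks, each a constant number of $\wedge,\vee,\neg$ gates, has path length $O(k)$.
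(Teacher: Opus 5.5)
\textbf{Gap: the gate bound.} Your enumeration and lexicographic selection are sound. The gate bound, which you rightly flag as the main obstacle, is not established, and the route you sketch for it cannot be completed. SOP over the support of output $j$ needs $2^{s_j-1}$ minterms. It also needs $\Theta(s_j 2^{s_j})$ gates unless partial products are shared. So your construction only yields $O(r\,2^k)$ gates, a factor $k$ above the target.

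Rescuing this through $\sum_j 2^{s_j}$ requires $s=O(r)$, and that claim is false in general. The unit message $e_i$ forces row $i$ of the PGM to have weight at least $d-1$, so every solution has $s\ge k(d-1)$. Now take $d=k+1$. This makes injectivity automatic: a nonzero message $\myvec{x}$ with $\myvec{x}\cdot\gmat=\Gzero^{r'}$ would give a nonzero codeword of weight at most $k<d$. Take also $r=9k$; a $[10k,k,\ge k+1]$ code exists by the Gilbert--Varshamov bound. Yet every solution with parity size at most $r$ has $s\ge k^2=\Omega(kr)$.

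\textbf{The fix.} The missing observation is one you already use for the depth: the coordinate functions are parities. Realize output $j$ as a chain of $s_j-1$ gadgets $(a\wedge\neg b)\vee(\neg a\wedge b)$. This costs $5(s_j-1)$ gates and depth $3(s_j-1)$. It keeps ${\tt supp}_{y_j}$ equal to the support of column $j$, so $\#_{\tt in}$-optimality is untouched, and it shares only input vertices across outputs. The total is below $5kr$ gates with depth $O(k)$. Moreover $kr\le 2\cdot\frac{r}{k}2^k$, since $k^2\le 2^{k+1}$ for all $k\ge 1$.

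The paper instead invokes Lupanov's bound that every $k$-input Boolean function has a circuit with $O(2^k/k)$ gates and $O(k)$ depth. That gives $O(\frac{r}{k}2^k)$ directly without exploiting linearity; your parity route is more elementary and tighter.

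\textbf{Time bound.} Drop the ``$r/k$ per pairwise check'' accounting. It has no justification (comparing two $r$-bit parities costs $O(r)$), and it is not needed. By linearity, the per-message checks you list for the $2^k-1$ nonzero messages cost $O(kr\,2^k)$ per matrix. The whole search then takes $O(kr^2\,2^{kr+k})$, which lies within $O(\frac{r^2}{k}2^{kr+2k})$ by the same inequality $k^2=O(2^k)$. For comparison, the paper obtains $O(r\,2^{2k}/k)$ per candidate by evaluating its $O(r\,2^k/k)$-gate circuit on all $2^k$ messages.
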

 \begin{proof}  
Fix a parity size $r'$ such that $\max(k,d-1)\leq r'\leq r$. There are $2^{k\cdot r'}$ possible 
parity generator matrices $\gmat_{r',k}$, each of which
can be implemented by $r'$ coordinate functions $f_i:\dom^{k}\rightarrow\dom$ for $i\in[r']$.
Hence, there are no more than $\mathcal{O}(r\cdot 2^{k\cdot r})$ code circuits and $\mathcal{O}(2^{k\cdot r})$ coordinate functions 
to be considered.

Each coordinate function $f_i:\dom^{k}\rightarrow\dom$ can be implemented by a circuit with size (i.e., number of gates) upper-bounded by $\mathcal{O}(2^{k}/k)$ (note: $\mathcal{O}(2^{k})$ if common sub-circuits cannot be reused) and the longest path bounded by $\mathcal{O}(k)$ in length~\cite{Lupanov58,FrandsenM05}.
Thus, each code circuit has at most $\mathcal{O}(r\cdot 2^{k}/k)$ gates and the longest path is bounded by $\mathcal{O}(k)$ in length as well.
The injectivity and minimum distance of a code circuit
can be checked in 
time $\mathcal{O}(r\cdot 2^{2k}/k)$.
By enumerating all the possible code circuits and selecting an optimal one among them can be done in $\mathcal{O}(\frac{r^2}{k}\cdot 2^{r\cdot k+2k})$ time. 
\end{proof}

Obviously, exhaustive search 
is computationally expensive and
does not scale well. Below, we present a novel algorithm for solving the \synprob problem
which provides local optimality guarantees.

\section{Solving \synprob Problem}\label{sec:method}

We first recall the greedy algorithm~\cite{MullerM24} 
and present a 
procedure {\sc FindCoorFunc} for finding the best 
implementation of a PGM with given number of individual inputs and parity size.  
Based on them,  we present an overall procedure. 
\hideBFtoCirc{Finally, we describe various ways to synthesize the circuit implementations from the coordinate functions.}


\begin{algorithm}[t]
    \SetAlgoLined
    \arraycolsep=1pt
    \SetKwComment{Comment}{$\triangleright$\ }{}
    \caption{Greedy synthesis algorithm}\label{alg:greedyalg}
    \KwIn{Message size: $k$, minimum distance: $d$;}  
    \KwOut{Circuit $\ccirc$\Comment*[r]{$\ccirc$ implements a PGM $\gmat_{k\times r}$}}
    \SetKwProg{myproc}{Procedure}{}{}

    \myproc{{\sc GreedySyn}$(k,d)$}{
        $C\leftarrow \emptyset$\Comment*[r]{Truth table of the code}
        \For{$\myvec{x}=0$ {\bf to} $2^k-1$}{
             $\myvec{x}'\leftarrow 0$\;
             \While{$\exists \myvec{y}\|\myvec{y}\hspace{0.3pt}'\in C. 
             \left\{
             \begin{array}{c}
                  {\tt HW}(\myvec{x}\|\myvec{x}'\oplus \myvec{y}\|\myvec{y}\hspace{0.3pt}')<d 
                  \vee \myvec{x}'=\myvec{y}\hspace{0.4pt}' 
             \end{array}
             \right\} $}
             {
                $\myvec{x}'\leftarrow \myvec{x}'+1$\;
             }
             $C\leftarrow C\cup \{\myvec{x}\|\myvec{x}'\}$\;
        }
        
        $\ccirc\leftarrow \text{\sc GenCircuitSOP}(C)$\Comment*[r]{$\sem{\ccirc}(\myvec{x})=\myvec{x}'$ iff $(\myvec{x}\|\myvec{x}')\in C$} 
 
        \Return $\ccirc$\;  
    }
\end{algorithm}

\subsection{The Greedy Algorithm} \label{sect:greedy}

Given a message size $k$ and a minimum distance $d$,
the greedy algorithm~\cite{MullerM24} (rephrased in Algorithm~\ref{alg:greedyalg} and denoted by AGEFA$_{\tt g}$), computes a circuit
$\ccirc$ that implements a PGM $\gmat_{k\times r}$ of a 
lexicographic code $\code_{n,k}$,
without addressing the minimization objectives. It iterates each message $\myvec{x}\in\dom^k$  and computes the corresponding parity $\myvec{x}'$ in a lexicographic order. All the codewords are stored in a truth table $C$, representing
the PGM
$\gmat_{k\times r}$, 
which is transformed into a circuit through minimal sum-of-products (SOP) of each output by applying
the Quine-McCluskey algorithm~\cite{quine1952,mccluskey1956}. The code $\code_{n,k}$ generated by Algorithm~\ref{alg:greedyalg} is linear and systematic with minimum distance at least $d$, and its PGM $\gmat_{k\times r}$ is injective.

Based on the circuit
obtained from Algorithm~\ref{alg:greedyalg},
\cite{MullerM24} gradually increases
the parity size $r$, as long as the number of individual inputs decreases.
For each parity size $r$, it has to exhaustively explore and check all the possible truth tables against the security requirements, thus it is computationally expensive yet provides no optimality guarantees w.r.t. the two minimization objectives.
We denote this brute-force based variant by AGEFA$_{\tt bf}$.

\subsection{Procedure {\sc FindCoorFunc}}

Given a message size $k$, a minimum distance $d$,
the number of individual inputs $m$ and a parity size $r$,
the procedure 
{\sc FindCoorFunc} checks if there exists an $r$-tuple  of coordinate functions
$F=(f_1,\cdots,f_r)$,
implementing a PGM $\gmat_{k\times r}$ with $m$ individual inputs and minimum distance at least $d$. 
If $F$ exists, it returns $F$; otherwise $\emptyset$.

Alg.~\ref{alg:FindCoorFunc} describes {\sc FindCoorFunc}. It  
first computes the set $\parti$ of all possible partitions of 
$m$ into $r$ parts (line~\ref{alg:FindCoorFunc:parti}). 
Intuitively, a partition $\partb{n_1,\cdots,n_r}\in \parti$
indicates that the coordinate function $f_i$ (for $i\in[r]$) uses $n_i$ (from $k$ available) inputs. 
We note that the order of partitions
may matter; we consider 
lexicographical ascending and descending orders.

\begin{algorithm}[t]
    \SetAlgoLined 
    \arraycolsep=1pt
    \SetKwComment{Comment}{$\triangleright$\ }{}
    \caption{Procedure {\sc FindCoorFunc}}\label{alg:FindCoorFunc}
    \KwIn{\begin{tabular}{l}
         Message size: $k$,  minimum distance: $d$, number of individual inputs: $m$ \\
         parity size: $r$ such that $r\leq m\leq k*r$;
    \end{tabular}    }  
    \KwOut{An $r$-tuple $F$ of coordinate functions}
    \SetKwProg{myproc}{Procedure}{}{}

    \myproc{{\sc FindCoorFunc}$(k,d,m,r)$}{
        $\parti \leftarrow \{\partb{n_1,\cdots,n_r}\mid k\geq n_1\geq n_2\geq \cdots\geq n_r\geq 1, \sum_{i\in[r]} n_i=m \}$\;\label{alg:FindCoorFunc:parti}
        \ForEach(\Comment*[f]{Lexicographical ascending/descending order}){$\partb{n_1,\cdots,n_r}\in \parti$}{
            $\text{\sc Gen}\leftarrow \text{\sc BuildGenerator}(n_1,\cdots,n_r)$\;\label{alg:FindCoorFunc:gen}  
            \While(\Comment*[f]{Parallel with producer-consumer pattern}){\text{\sc Gen.hasNext}$()$}{ 
                $(X_1,\cdots,X_r)\leftarrow \text{\sc Gen.getNext}()$\Comment*[r]{$\forall i. |X_i|=n_i\wedge \bigcup_{i\in[r]} X_i=[k]$}
                $\Psi\leftarrow \text{\sc GenSpec}(X_1,\cdots,X_r,k,d)$\;
                \uIf{$\text{\sc SMT-Solver}(\Psi)={\tt SAT}$}{
                    ${\tt model} \leftarrow \text{\sc SMT-Solver.GetModel()}$\;
                    $F\leftarrow \text{\sc GenFunction}({\tt model})$\;
                    \Return $F$\;
                }
            }
        }
        \Return $\emptyset$\;  
    }
\end{algorithm}

From a partition $\partb{n_1,\cdots,n_r}\in\parti$, we build a generator
{\sc Gen} (line~\ref{alg:FindCoorFunc:gen}), which 
generates input combinations 
$(X_1,\cdots,X_r)$ such that for each $j\in [r]$, 
%
$X_j=\angleb{i_1,\cdots, i_{n_j}} \subseteq [k]$ consists of $n_j$ distinct 
indices (in an ascending order). 
For simplicity, we let $f_j(\myvec{x}_{(j)})$ denote $f_j(\myvec{x}_{i_1},\cdots,\myvec{x}_{i_{n_j}})$. 
%

For each input combination
$(X_1,\cdots,X_r)$, 
$\text{\sc GenSpec}(X_1,\cdots,X_r,k,d)$ builds an SMT formula $\Psi$ to capture  linearity,
the minimum distance $d$, and injection between inputs and outputs. Namely,  
\[\Psi:=  \forall \myvec{x},\myvec{y}\in \dom^k.  \Psi_1\wedge \Psi_2\wedge\Psi_3\]
where, 
\begin{itemize}
    \item $\Psi_1:= 
 \bigwedge_{i\in[r]} f_i(\myvec{x}_{(i)})\oplus f_i(\myvec{y}_{(i)})=f_i(\myvec{x}_{(i)}\oplus\myvec{y}_{(i)})$, capturing linearity; 
  \item $\Psi_2:=  \myvec{x}\neq 0^k\implies {\tt HW}(\myvec{x})+\sum_{i\in[r]} f_i(\myvec{x}_{(i)})\geq d$, 
 capturing minimum distance;
  \item  $\Psi_3:=  \myvec{x}\neq \myvec{y}\implies \bigvee_{i\in[r]} f_i(\myvec{x}_{(i)})\neq  f_i(\myvec{y}_{(i)})$, capturing injection.
\end{itemize}

Any solution 
$F=(f_1,\cdots,f_r)$ of
$\Psi$ implements an injective PGM $\gmat_{k\times r}$ 
with $m$ individual inputs and minimum distance at least $d$, and vice versa. We note that similar to the greedy algorithm, a code circuit can be built from the solution $F=(f_1,\cdots,f_r)$.

\begin{example}

Let us consider the input combination $(X_1, X_2, X_3) = (\langle 1,2 \rangle, \angleb{1}, \angleb{2})$ with 
message size $k=2$ and minimum distance $d = 3$.

The procedure $\text{\sc GenSpec}(X_1, X_2, X_3, 2, 3)$ produces the
following SMT formula $\Psi$:
\[
\Psi := \forall x_1,x_2, y_1,y_2 \in \mathbb{B}^2. \Psi_1 \wedge \Psi_2 \wedge \Psi_3
\]
where
\begin{itemize} 
    \item $\Psi_1 := \left(
\begin{array}{c}
     f_1(x_1,x_2)\oplus f_1(y_1,y_2)=f_1(x_1\oplus y_1,x_2\oplus x_2)  \\
    \wedge \quad f_2(x_1)\oplus f_2(y_1)=f_2(x_1\oplus y_1) \\
    \wedge \quad f_3(x_2)\oplus f_3(y_2)=f_3(x_2\oplus y_2)
\end{array}\right)$
    \item $\Psi_2 := (x_1\neq 0\wedge x_2\neq 0)  \implies x_1 + x_2 + f_1(x_1, x_2) + f_2(x_1) + f_3(x_2) \geq 3$
    \item $\Psi_3 := (x_1\neq y_1\wedge x_2\neq  y_2) \implies \big(
f_1(x_1, x_2) \neq f_1(y_1, y_2) 
\vee f_2(x_1) \neq f_2(y_1) 
\vee f_3(x_2) \neq f_3(y_2)
\big)$
\end{itemize}

By solving the above SMT formula $\Psi$, we will get a solution, whose truth tables are given below:
\begin{center}
\begin{tabular}{cc|c}
\multicolumn{3}{c}{$f_1: \mathbb{B}^2 \to \mathbb{B}$} \\
\hline
$x_1$ & $x_2$ & $f_1(x_1, x_2)$ \\
\hline
0 & 0 & 0 \\
0 & 1 & 1 \\
1 & 0 & 1 \\
1 & 1 & 0 \\
\hline
\end{tabular}
\qquad
\begin{tabular}{c|c}
\multicolumn{2}{c}{$f_2: \mathbb{B} \to \mathbb{B}$} \\
\hline
$x_1$ & $f_2(x_1)$ \\
\hline
0 & 0 \\
1 & 1 \\
\hline
\end{tabular}
\qquad
\begin{tabular}{c|c}
\multicolumn{2}{c}{$f_3: \mathbb{B} \to \mathbb{B}$} \\
\hline
$x_2$ & $f_3(x_2)$ \\
\hline
0 & 0 \\
1 & 1 \\
\hline
\end{tabular}
\end{center}

\bigskip

Their Boolean functions in SOP form are given below:
\begin{align*}
f_1(x_1, x_2) &= (\neg x_1\wedge x_2) \vee (x_1 \wedge \neg x_2), & 
f_2(x_1) &= x_1,  & 
f_3(x_2) &= x_2.
\end{align*}

This solution yields the following parity generator matrix $\gmat_{2 \times 3}$ and complete generator matrix $\mathbf{M}_{2 \times 5}$
of a code $\code_{5,2}$ with minimum distance $3$:

\begin{align*}
\gmat_{2 \times 3} &= \begin{bmatrix}
1 & 1 & 0 \\
1 & 0 & 1
\end{bmatrix}
&
\mathbf{M}_{2 \times 5} &= [\mathbf{I}_2 \mid \gmat_{2 \times 3}] = 
\begin{bmatrix}
1 & 0 & 1 & 1 & 0 \\
0 & 1 & 1 & 0 & 1
\end{bmatrix}
\end{align*}

The corresponding code circuit is shown in Figure~\ref{fig:example_circuit}.

\bigskip
\begin{figure}[h]
\centering
\includegraphics[width=0.35\textwidth]{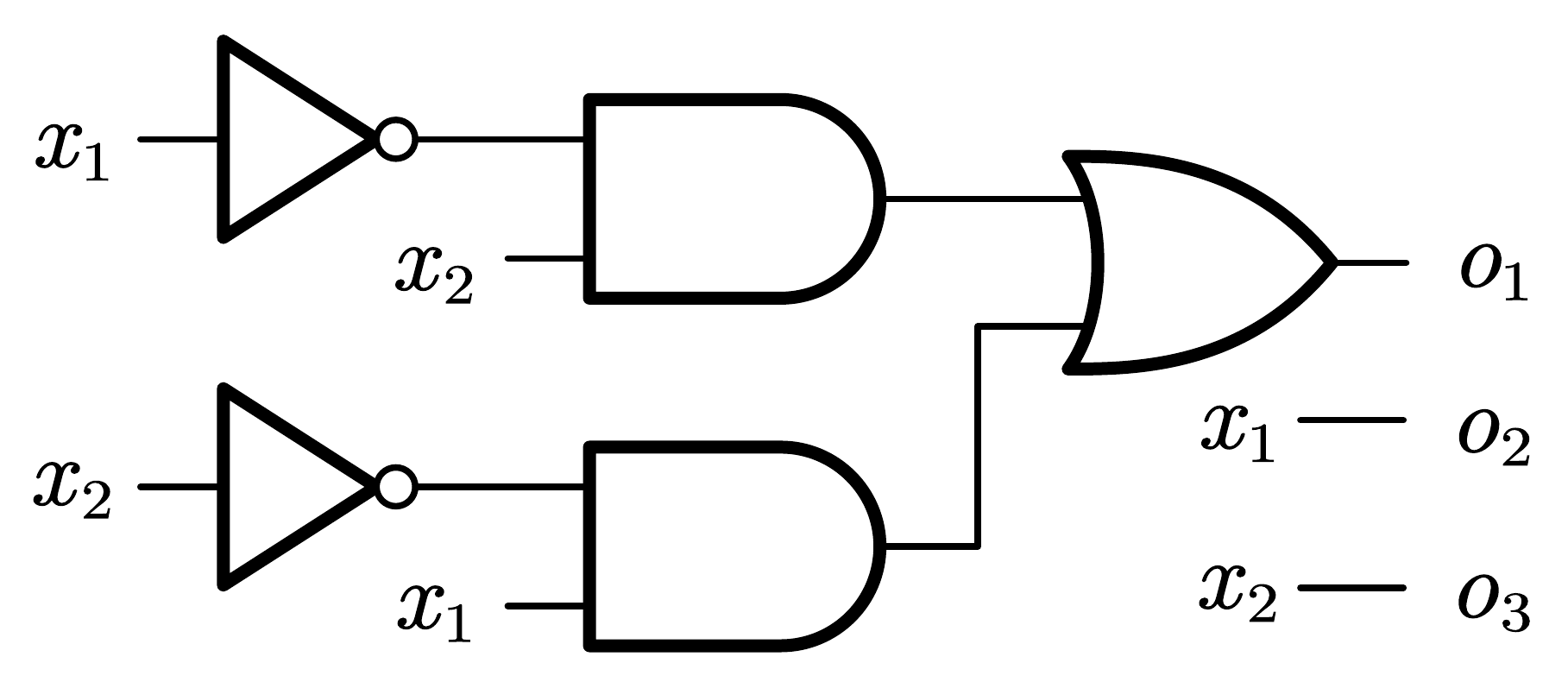}
\caption{Circuit for the code $\code_{5,2}$.}
\label{fig:example_circuit}
\end{figure}

\end{example} 
 


\begin{proposition}\label{prop:SMTbuild}
$F=(f_1,\cdots,f_r)$ satisfies
$\Psi$ iff $F$ implements an injective PGM $\gmat_{k\times r}$ 
with $m$ individual inputs and minimum distance at least $d$.
\end{proposition}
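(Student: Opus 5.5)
We prove the two directions separately. Throughout, $F=(f_1,\cdots,f_r)$ is tied to a fixed input combination $(X_1,\cdots,X_r)$ with $|X_i|=n_i$, $\sum_i n_i=m$ and $\bigcup_i X_i=[k]$. We extend each $f_i$ to $\tilde f_i:\dom^k\rightarrow\dom$ by $\tilde f_i(\myvec{x}):=f_i(\myvec{x}_{(i)})$, and write $\tilde F(\myvec{x})=(\tilde f_1(\myvec{x}),\cdots,\tilde f_r(\myvec{x}))$. "$F$ implements a PGM $\gmat_{k\times r}$ with $m$ individual inputs" will mean that $\tilde F(\myvec{x})=\myvec{x}\cdot\gmat_{k\times r}$ for all $\myvec{x}$, with ${\tt supp}$ of the $i$-th output contained in $X_i$. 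The $m$ individual inputs are then accounted for by the chosen combination, exactly as the procedure counts them.

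\emph{($\Leftarrow$).} Suppose $\tilde F(\myvec{x})=\myvec{x}\cdot\gmat$ with $\gmat$ injective and the code $\myvec{x}\mapsto\myvec{x}\|\myvec{x}\cdot\gmat$ of minimum distance at least $d$.
\begin{itemize}
\item $\Psi_1$: each column map $\myvec{x}\mapsto\myvec{x}\cdot\gmat_{\cdot,i}$ is $\mathbb{F}_2$-linear, and it depends only on the coordinates in $X_i$.
\item $\Psi_2$: for $\myvec{x}\neq 0^k$, the codeword $\myvec{x}\|\tilde F(\myvec{x})$ is nonzero. By the weight characterization of $d$, its weight ${\tt HW}(\myvec{x})+\sum_i f_i(\myvec{x}_{(i)})$ is at least $d$.
\item $\Psi_3$: this is literally injectivity of $\tilde F$.
\end{itemize}

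\emph{($\Rightarrow$).} This is the substantive direction.
\begin{itemize}
\item From $\Psi_1$, each $\tilde f_i$ is additive over $\mathbb{F}_2$, since $\myvec{x}_{(i)}\oplus\myvec{y}_{(i)}=(\myvec{x}\oplus\myvec{y})_{(i)}$. An additive map over $\mathbb{F}_2$ is linear, because scalars are only $0$ and $1$, and instantiating $\myvec{x}=\myvec{y}$ gives $\tilde f_i(0)=0$.
\item Hence there is a unique column vector $\mathbf{g}_i\in\dom^k$, namely $(\mathbf{g}_i)_j=\tilde f_i(\mathbf{e}_j)$, with $\tilde f_i(\myvec{x})=\myvec{x}\cdot\mathbf{g}_i$. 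Its support lies in $X_i$, because $\tilde f_i(\mathbf{e}_j)=f_i(0)=0$ for $j\notin X_i$. Assembling these columns gives $\gmat_{k\times r}$, implemented by $F$ with the prescribed supports.
\item The codeword space $\{\myvec{x}\|\myvec{x}\gmat\}$ is linear and systematic. $\Psi_3$ gives injectivity of $\gmat$.
\item $\Psi_2$ gives ${\tt HW}(\myvec{x}\|\myvec{x}\gmat)\geq d$ for every nonzero message. Since the encoding is injective, every nonzero codeword arises this way, so the minimum distance is at least $d$ by the equivalent weight-based definition.
\end{itemize}

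The main obstacle is pinning down "$m$ individual inputs" precisely. A satisfying $f_i$ might not actually depend on every variable in $X_i$ (for instance, $f_i\equiv 0$ on some coordinate), in which case the true $\#_{\tt in}$ could be smaller than $m$. I would handle this by reading "implements with $m$ individual inputs" relative to the chosen combination, as the construction does, so the claim is that the support is contained in the $X_i$. I would then remark that any strictly smaller true support is itself captured by some combination with fewer individual inputs. Since Algorithm~\ref{alg:FindCoorFunc} is invoked with decreasing $m$, this does not harm the optimality argument. A secondary subtlety is the example's rendering of the hypotheses of $\Psi_2$ and $\Psi_3$ as conjunctions; the proof uses the general formulas $\myvec{x}\neq 0^k$ and $\myvec{x}\neq\myvec{y}$ as vector disequalities.
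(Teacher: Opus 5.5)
Your proposal is correct and follows the same route as the paper's proof: in each direction you match $\Psi_1$, $\Psi_2$, $\Psi_3$ against linearity, minimum distance and injectivity. It is also more careful than the paper in two places. You actually extract the columns of $\gmat_{k\times r}$ from additivity, whereas the paper only derives $f_i(\myvec{0}_{(i)})=0$ before asserting a PGM. And you make the ``$m$ individual inputs'' clause precise as supports contained in the $X_i$, whereas the paper's forward direction simply asserts $\sum_{i\in[r]}|{\tt supp}(f_i)|=m$, which does not follow from $\Psi$ under a semantic reading of support when some $f_i$ ignores a variable of $X_i$.
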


\begin{proof}
$(\Longrightarrow)$ Suppose $F=(f_1,\cdots,f_r)$ satisfies
$\Psi$. We show that $F$ implements an injective PGM $\gmat_{k\times r}$ 
with $m$ individual inputs and minimum distance at least $d$.

Since $F$ satisfies
$\Psi:=\forall \myvec{x},\myvec{y}\in \dom^k.  \Psi_1\wedge \Psi_2\wedge\Psi_3$, we have that: 

\begin{itemize}
    \item $F$ satisfies 
    $\forall \myvec{x},\myvec{y}\in \dom^k.\bigwedge_{i\in[r]} f_i(\myvec{x}_{(i)})\oplus f_i(\myvec{y}_{(i)})=f_i(\myvec{x}_{(i)}\oplus\myvec{y}_{(i)})$.
    It implies that 
    \begin{center}
        $\bigwedge_{i\in[r]} f_i(\myvec{0}_{(i)})\oplus f_i(\myvec{0}_{(i)})=f_i(\myvec{0}_{(i)}\oplus\myvec{0}_{(i)}),$
    \end{center}
    which can be simplified to $\bigwedge_{i\in[r]} f_i(\myvec{0}_{(i)})=0$. Thus, $F$ implements a PGM $\gmat_{k\times r}$.
    \item $F$ satisfies 
    $\forall \myvec{x},\myvec{y}\in \dom^k.\myvec{x}\neq 0^k\implies {\tt HW}(\myvec{x})+\sum_{i\in[r]} f_i(\myvec{x}_{(i)})\geq d$.
    Thus, the minimum distance of PGM $\gmat_{k\times r}$ is at least $d$.
    \item $F$ satisfies 
    $\forall \myvec{x},\myvec{y}\in \dom^k.\myvec{x}\neq \myvec{y}\implies \bigvee_{i\in[r]} f_i(\myvec{x}_{(i)})\neq  f_i(\myvec{y}_{(i)})$, from which
    we know that $\gmat_{k\times r}$ is injective.
\end{itemize}
From $\sum_{i\in[r]}|{\tt supp}(f_i)|=m$,
we conclude that $F$ implements an injective PGM $\gmat_{k\times r}$  with $m$ individual inputs and minimum distance at least $d$.

\smallskip
\noindent
$(\Longleftarrow)$ Suppose $F=(f_1,\cdots,f_r)$ implements an injective PGM $\gmat_{k\times r}$ 
with $m$ individual inputs and minimum distance at least $d$, we show that $F$ satisfies 
$\Psi$.

For each $i\in[r]$, let $X_i={\tt supp}(f_i)$. Since $F=(f_1,\cdots,f_r)$ uses $m$ individual inputs,
    we get that $\sum_{i\in[r]}|X_i|=m$.

\begin{itemize}
    \item Since $\gmat_{k\times r}$  is PGM, then $F$ satisfies $\forall \myvec{x},\myvec{y}\in \dom^k.\bigwedge_{i\in[r]} f_i(\myvec{x}_{(i)})\oplus f_i(\myvec{y}_{(i)})=f_i(\myvec{x}_{(i)}\oplus\myvec{y}_{(i)})$,
    i.e.,  $\forall \myvec{x},\myvec{y}\in \dom^k.\Psi_1$.
    \item Since the minimum distance of PGM $\gmat_{k\times r}$ is at least $d$, then $F$ satisfies 
    \begin{center}
            $\forall \myvec{x},\myvec{y}\in \dom^k.\myvec{x}\neq 0^k\implies {\tt HW}(\myvec{x})+\sum_{i\in[r]} f_i(\myvec{x}_{(i)})\geq d,$   i.e.,  $\forall \myvec{x},\myvec{y}\in \dom^k.\Psi_2.$
    \end{center}
    \item Since PGM $\gmat_{k\times r}$ is injective, we get that $\bigcup_{i\in [r]}X_i=[k]$ and 
 $F$ satisfies  
 \begin{center}
    $\forall \myvec{x},\myvec{y}\in \dom^k.\myvec{x}\neq \myvec{y}\implies \bigvee_{i\in[r]} f_i(\myvec{x}_{(i)})\neq  f_i(\myvec{y}_{(i)})$, i.e., $\forall \myvec{x},\myvec{y}\in \dom^k.\Psi_3$. 
 \end{center}
\end{itemize}
Therefore, $F$ satisfies
$\Psi:=\forall \myvec{x},\myvec{y}\in \dom^k.  \Psi_1\wedge \Psi_2\wedge\Psi_3$, we conclude the proof.
\end{proof}

Below, we describe the procedure for generating
the set $\parti$ of partitions and the procedure for
generating input combinations 
$(X_1,\cdots,X_r)$ for each partition $\partb{n_1,\cdots,n_r}\in \parti$.

\subsubsection{Partition Generation}
It is the partition of the integer $m$ into $r$ positive integers, each of which is no greater than $k$. 
(To ensure that no repeated partition is generated, we sort them in the descending order.) 
$\parti$ is computed by invoking $\parti(m,r,k,\partb{})$,
which is recursively defined as 
\[
\parti(m,r,k,\myvec{v}):= \left\{
\begin{array}{ll}
   \bigcup\left\{\parti(m-k',r-1,k', \myvec{v}\|k') 
\mid \begin{array}{c} 
 1\leq k'\leq k   
\end{array}\right\}, & \text{if } r>1;\\ 
\myvec{v}\|m, & \text{if } r=1.\\
\end{array}
\right.\]
where $\myvec{v}\|c$ appends $c$ at the end of $\myvec{v}$. 
Intuitively, $\parti(m,r,k,\myvec{v})$ recursively appends all
feasible integers $k'$ at the end of the partial partition $\myvec{v}$ during which $m$, $r$ and $k$ are updated accordingly. 
 
 \begin{proposition}
  $\parti(m,r,k,\partb{})$ generates the set $\parti$. \qed
 \end{proposition}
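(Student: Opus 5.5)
We prove a slightly stronger statement by induction on $r$, which lets the accumulator $\myvec{v}$ and the bound $k$ vary. For every prefix $\myvec{v}$ and all integers $m$, $k$ and $r\geq 1$, the claim is
\[
\parti(m,r,k,\myvec{v}) \;=\; \{\, \myvec{v}\|\partb{n_1,\cdots,n_r} \mid k\geq n_1\geq \cdots\geq n_r\geq 1,\ \textstyle\sum_{i\in[r]} n_i=m \,\}.
\]
The proposition is the special case $\myvec{v}=\partb{}$, because the right-hand side is then exactly the set $\parti$ of line~\ref{alg:FindCoorFunc:parti} of Alg.~\ref{alg:FindCoorFunc}.

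Before the induction, the base case needs to be fixed precisely. As literally written, the case $r=1$ returns $\myvec{v}\|m$ unconditionally, so it could emit a last part with $m\leq 0$ or $m>k$. The base case must therefore be read with the implicit guard that it yields $\{\myvec{v}\|m\}$ when $1\leq m\leq k$ and $\emptyset$ otherwise. Equivalently, the branches with $m-k'<r-1$ are pruned. Justifying this guarded reading, and stating it explicitly, is the one delicate point; without it the claim is false, e.g.\ for $m=5,r=2,k=2$. With the guard, the case $r=1$ of the claim is immediate: the only sequence of one part that sums to $m$ and lies in $[1,k]$ is $\partb{m}$, and it exists exactly when $1\leq m\leq k$.

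For the inductive step, take $r>1$ and prove the two inclusions separately.

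\emph{Soundness} ($\subseteq$). Every element produced comes from some branch $1\leq k'\leq k$. By the induction hypothesis it has the form $\myvec{v}\|k'\|\partb{n_2,\cdots,n_r}$ with $k'\geq n_2\geq\cdots\geq n_r\geq 1$ and $\sum_{i\geq 2} n_i=m-k'$. Setting $n_1:=k'$ gives $k\geq n_1\geq n_2\geq\cdots\geq n_r\geq 1$ and $\sum_{i\in[r]} n_i=m$, which is the required shape.

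\emph{Completeness} ($\supseteq$). Given any admissible $\partb{n_1,\cdots,n_r}$, choose the branch $k'=n_1$, which is allowed since $1\leq n_1\leq k$. The tail $\partb{n_2,\cdots,n_r}$ is nonincreasing, bounded by $n_1$, and sums to $m-n_1$. By the induction hypothesis it therefore lies in $\parti(m-n_1,r-1,n_1,\myvec{v}\|n_1)$, so $\myvec{v}\|\partb{n_1,\cdots,n_r}$ is in the union.

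Finally, the output contains no duplicates, as the remark ``no repeated partition is generated'' asserts. This follows from the same induction: branches with different $k'$ produce sequences that differ in the first position after $\myvec{v}$, and within one branch distinctness holds by the induction hypothesis. Because every generated sequence is nonincreasing, it is the canonical representative of its multiset, so each integer partition of $m$ into $r$ parts bounded by $k$ appears exactly once.
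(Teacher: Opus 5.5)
Your induction is correct, and it is the argument the paper leaves implicit. The paper gives no proof here: the statement is closed directly with a QED mark, with only the intuition that the recursion ``appends all feasible integers $k'$'' and a worked example. Generalizing over the accumulator $\myvec{v}$ and the bound $k$ is what makes the induction go through. You are also right that the literal base case needs a guard, and the defect is not an artifact of out-of-range parameters. Even under the precondition $r\leq m\leq k*r$ of Alg.~\ref{alg:FindCoorFunc}, the literal recursion yields $\partb{1,3}\in\parti(4,2,2,\partb{})$ and $\partb{2,0}\in\parti(2,2,2,\partb{})$. The paper's example silently prunes: it selects only those $k'$ with $r-1\leq m-k'\leq (r-1)*k'$.

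One side remark is wrong, though. Pruning only the branches with $m-k'<r-1$ is \emph{not} equivalent to your guard $1\leq m\leq k$, because it enforces only the lower bound. In $\parti(4,2,2,\partb{})$ the branch $k'=1$ has $m-k'=3\geq 1$, so it survives and still emits $\partb{1,3}$. Likewise, both branches survive in your own example $m=5,r=2,k=2$.

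The branch-level condition that matches your guard is the paper's two-sided one. With it, every recursive call $\parti(m-k',r-1,k',\myvec{v}\|k')$ satisfies $r-1\leq m-k'\leq (r-1)k'$. So every call reaching the base case has the form $\parti(m',1,k'',\myvec{w})$ with $1\leq m'\leq k''$; for a top-level call with $r=1$, this is exactly the precondition. Your two inclusions then go through unchanged. In the completeness direction, $k'=n_1$ is never pruned, because the tail of an admissible $\partb{n_1,\cdots,n_r}$ consists of $r-1$ parts in $[1,n_1]$. Since your proof relies on the base-case guard and not on this remark, the argument itself stands.
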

However, the size of $\parti$ cannot be defined in a closed form, instead, can be recursively computed as $\#\parti(m,r,k)$: 
\[\#\parti(m,r,k)=\left\{
\begin{array}{ll}
    \#\parti(m-k, r-1,k)+\#\parti(m,r,k-1),  & \text{if } r< m< k*r; \\
    1,  & \text{if } r= m \vee m= k*r; \\
    0,  & \text{if } r>m \vee m>k*r.
\end{array}
\right.\]
Intuitively, $\parti(m-k, r-1,k)$ is the number of partitions
where one integer is $k$ and the sum of the other
$r-1$ integers is $m-k$,
and $\#\parti(m,r,k-1)$ is the number of partitions where
no integers are $k$.

\begin{example}

To illustrate the partition generation, let us consider
the message size $k=4$, parity size $r=4$ and number of individual inputs $m=8$, i.e., $\parti(m=8, r=4, k=4, \myvec{v}=\partb{})$.
Its recursive expansion tree is shown below.
   \[ \begin{array}{ll}
        \parti(8, 4, 4, \partb{}) & \\
        \quad \xrightarrow{k'=4} & \parti(4, 3, 4, \partb{4}) \\
        & \qquad \xrightarrow{k'=2} \qquad \parti(2, 2, 2, \partb{4,2}) \\
        & \qquad\qquad \qquad\qquad \quad   \xrightarrow{k'=1}\qquad  \parti(1, 1, 1, \partb{4,2,1}) \hookrightarrow  \partb{4,2,1,1} \\[0.5em]
        \quad \xrightarrow{k'=3} & \parti(5, 3, 3, \partb{3}) \\
        & \qquad \xrightarrow{k'=3} \qquad \parti(2, 2, 3, \partb{3,3}) \\
        & \qquad\qquad\qquad\qquad\quad  \xrightarrow{k'=1} \qquad \parti(1, 1, 1, \partb{3,3,1}) \hookrightarrow \partb{3,3,1,1} \\
        & \qquad \xrightarrow{k'=2}\qquad \parti(3, 2, 2, \partb{3,2}) \\
        & \qquad\qquad\qquad\qquad \quad\xrightarrow{k'=2}\qquad \parti(1, 1, 2, \partb{3,2,2}) \hookrightarrow \partb{3,2,2,1} \\[0.5em]
        \quad \xrightarrow{k'=2} & \parti(6, 3, 2, \partb{2}) \\
        & \qquad  \xrightarrow{k'=2} \qquad \parti(4, 2, 2, \partb{2,2}) \\
        &  \qquad\qquad\qquad\qquad\quad \xrightarrow{k'=2}\qquad  \parti(2, 1, 2, \partb{2,2,2}) \hookrightarrow \partb{2,2,2,2}
    \end{array}\] 
 At each recursive level, we repeatedly select a number of inputs $k' \in [k]$ in descending order such that $r-1\leq m-k' \leq (r-1)*k'$ and then recursively invokes $\parti(m-k', r-1, k', \myvec{v})$. The four resulting partitions specify how 8 individual inputs can be allocated across 4 coordinate functions, with each function using at most 4 inputs.

Finally, the set of partitions is:
\begin{align*}
    \parti=\parti(8, 4, 4,\partb{}) = \{\partb{4,2,1,1}, \partb{3,3,1,1}, \partb{3,2,2,1}, \partb{2,2,2,2}\}
\end{align*}
which is already in the lexicographical descending order.

For the lexicographical ascending order, it is  
\begin{align*}
    \parti=\parti(8, 4, 4,\partb{}) = \{\partb{2,2,2,2},  \partb{3,2,2,1}, \partb{3,3,1,1}, \partb{4,2,1,1}\}. 
\end{align*}
\end{example}

\subsubsection{Input Combination Generation} 
Exhaustively exploring all possible input combinations is computationally expensive and
does not scale well. To tackle this issue, 
inspired by symmetry reduction in model checking~\cite{ClarkeEJS98},
we propose an effective technique to lazily generate 
input combinations and quickly identify equivalent ones by storing previously explored combinations in a tree structure.

\begin{definition}
Two input combinations $(X_1,\cdots,X_r)$
and $(X_1',\cdots,X_r')$ of the partition $\partb{n_1,\cdots,n_r}$
are \emph{equivalent}, denoted by
\begin{center}
     $(X_1,\cdots,X_r)\approx (X_1',\cdots,X_r')$,
 \end{center} 
if there are two permutations $\pi_1$ over $[r]$ and 
$\pi_2$ over $[k]$ such that   
$X_i=\pi_2(X_{\pi_1(i)}')$ for all $i\in [r]$, 
where $\pi_2$ is applied entry-wise. 
\end{definition}

Intuitively, $\pi_1$ permutes
the coordinate functions $(f_1,\cdots,f_r)$
by permuting their input indices $[r]$ while
$\pi_2$ permutes indices of inputs $[k]$.

\begin{example}
Consider the partition $\partb{2,2,1}$ for $k=3$ and $r=3$.
The input combinations $(\angleb{1,2}, \angleb{1,3}, \angleb{2})$ 
and $(\angleb{1,2}, \angleb{1,3}, \angleb{3})$ are equivalent, where 
$\pi_1$ permutes $(\angleb{1,2}, \angleb{1,3}, \angleb{3})$ to 
 $(\angleb{1,3}, \angleb{1,2}, \angleb{3})$ by exchanging
 the first and the second tuples, 
 which in turn can be permuted to $(\angleb{1,2}, \angleb{1,3}, \angleb{2})$ by $\pi_2$ (i.e., exchanging the indices $2$ and $3$ of $[k]$).
 
 In contrast, 
 $(\angleb{1,2}, \angleb{1,3}, \angleb{2})$ and $(\angleb{1,2}, \angleb{1,3}, \angleb{1})$
 are \emph{not} equivalent. 
\end{example}
 

\begin{proposition}\label{prop:symreduction}
Assume $(X_1,\cdots,X_r)\approx (X_1',\cdots,X_r')$ are generated from the same partition. 
$\Psi=\textsc{GenSpec}(X_1,\cdots,X_r,k,d)$ and $\Psi'=\textsc{GenSpec}  (X_1',\cdots,X_r', k,d)$ are equisatisfiable.
\end{proposition}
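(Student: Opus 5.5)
We prove equisatisfiability by an explicit translation of models. Fix permutations $\pi_1$ over $[r]$ and $\pi_2$ over $[k]$ with $X_i=\pi_2(X'_{\pi_1(i)})$ for all $i\in[r]$. By symmetry of $\approx$ (the inverse permutations witness the converse relation), it suffices to show that any model $F'=(f'_1,\cdots,f'_r)$ of $\Psi'$ yields a model $F=(f_1,\cdots,f_r)$ of $\Psi$.

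\emph{Construction.} Let $P:\dom^k\to\dom^k$ be the coordinate permutation induced by $\pi_2$, chosen so that $P(\myvec{x})_{\pi_2(j)}=\myvec{x}_j$. Then $P$ is a linear bijection with $P(\myvec{x}\oplus\myvec{y})=P(\myvec{x})\oplus P(\myvec{y})$ and ${\tt HW}(P(\myvec{x}))={\tt HW}(\myvec{x})$. Define
\[ f_i(\myvec{x}_{(i)}) := f'_{\pi_1(i)}\big((P^{-1}\myvec{x})_{(\pi_1(i))'}\big), \]
where $(\cdot)_{(j)'}$ denotes restriction to the index set $X'_j$. The first point to verify is that this is well defined as a function of the variables indexed by $X_i$ alone. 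Since $X_i=\pi_2(X'_{\pi_1(i)})$, the coordinates of $P^{-1}\myvec{x}$ indexed by $X'_{\pi_1(i)}$ are exactly the coordinates of $\myvec{x}$ indexed by $X_i$, possibly reordered. Because the argument tuples are listed in ascending index order, we must compose with the fixed reordering of arguments that $\pi_2$ induces between $X'_{\pi_1(i)}$ and $X_i$. This bookkeeping is the main place where care is needed, but it is only a fixed relabelling of arguments. It gives, for every $\myvec{x}\in\dom^k$,
\[ f_i(\myvec{x}_{(i)})=f'_{\pi_1(i)}\big((P^{-1}\myvec{x})_{(\pi_1(i))}\big). \]

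\emph{Checking the conjuncts.} Given $\myvec{x},\myvec{y}\in\dom^k$, set $\myvec{u}=P^{-1}\myvec{x}$ and $\myvec{v}=P^{-1}\myvec{y}$, and instantiate $\Psi'$ at $(\myvec{u},\myvec{v})$.
\begin{itemize}
\item For $\Psi_1$: the linearity conjunct of $\Psi'$ for index $\pi_1(i)$, together with $P^{-1}(\myvec{x}\oplus\myvec{y})=\myvec{u}\oplus\myvec{v}$, gives the linearity conjunct of $\Psi$ for index $i$. Since $\pi_1$ is a bijection, every conjunct is covered.
\item For $\Psi_2$: $\myvec{x}\neq 0^k$ iff $\myvec{u}\neq 0^k$, and ${\tt HW}(\myvec{x})={\tt HW}(\myvec{u})$. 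Moreover, $\sum_i f_i(\myvec{x}_{(i)})=\sum_j f'_j(\myvec{u}_{(j)})$ by reindexing the sum with $\pi_1$. Hence the inequality transfers.
\item For $\Psi_3$: $\myvec{x}\neq\myvec{y}$ iff $\myvec{u}\neq\myvec{v}$, and the disjunction over $i$ is a permutation of the disjunction over $j=\pi_1(i)$.
\end{itemize}
Since $P$ is a bijection, quantifying over all $(\myvec{u},\myvec{v})$ is the same as quantifying over all $(\myvec{x},\myvec{y})$. Therefore $F$ satisfies $\Psi$, which proves equisatisfiability.

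Alternatively, one can phrase the argument via Proposition~\ref{prop:SMTbuild}. Equivalent combinations correspond to PGMs related by a row permutation (from $\pi_2$) and a column permutation (from $\pi_1$). These operations preserve injectivity, the minimum distance of $[\idmat_k,\gmat]$ (after simultaneously permuting the identity columns), and the number of individual inputs. Nonetheless, I would present the direct model translation above, as it avoids this detour.
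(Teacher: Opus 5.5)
Your proof is correct and follows essentially the same route as the paper. Like the paper, you translate a model of one formula into a model of the other by reindexing the coordinate functions via $\pi_1$ and relabelling their arguments via $\pi_2$. Beyond that, you carry out the conjunct-by-conjunct check that the paper leaves as ``one can verify'', and you obtain the converse direction from the symmetry of $\approx$ instead of repeating the construction.
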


\begin{proof}
Suppose $(X_1,\cdots,X_r)\approx (X_1',\cdots,X_r')$.
Let $\pi_1$ and 
$\pi_2$ be two permutations over $[r]$ and $[k]$, respectively,
such that $X_i=\pi_2(X_{\pi_1(i)}')$ for all $i\in[r]$.
Let $\pi_1^{-1}$ and 
$\pi_2^{-1}$ be the reverse of the permutations $\pi_1$ and 
$\pi_2$, respectively. Then, for each $i\in[r]$, 
$X_i'= \pi_2^{-1}(X_{\pi_1^{-1}(i)}')$.

\smallskip
\noindent
$(\Longrightarrow)$ Suppose $F=(f_1,\cdots, f_r)$
satisfies $\Psi$.
For every $i\in[r]$, we define a function $f_i'$ such that
\[\forall \myvec{x}.\ f_i'(\myvec{x}_{(i)}) =f_{i'}(\pi_2^{-1}(\myvec{x}_{(i')})),\]
where $i':=\pi_1^{-1}(i)$ and $\pi_2^{-1}$ is applied  entry-wise.

One can verify that $F'=(f_1',\cdots, f_r')$ satisfies $\Psi'$.

\smallskip
\noindent
$(\Longrightarrow)$ Suppose $F'=(f_1',\cdots, f_r')$ that satisfies $\Psi'$.
%
For every $i\in[r]$, we define a function $f_i$ such that
\[\forall \myvec{x}.\ f_i(\myvec{x}_{(i)}) =f_{i'}'(\pi_2(\myvec{x}_{(i')})),\]
where $i':= \pi_1(i)$ and $\pi_2$ is applied entry-wise.

One can verify that $F=(f_1,\cdots, f_r)$ satisfies $\Psi$.
Thus, we conclude the proof.
\end{proof}

By Proposition~\ref{prop:symreduction}, it is sufficient to generate
one representative for each equivalence class $[(X_1,\cdots,X_r)]_\approx$.
W.l.o.g., we fix $X_1=\angleb{1,\cdots,n_1}$.
%
As checking equivalence is non-trivial, 
for efficiency consideration we introduce the concept of combination tree, allowing us to store previously explored input combinations and quickly
identify equivalent ones. 

\begin{definition}
A \emph{combination tree} $\tau$ of a partition $\partb{n_1,\cdots,n_r}$  
is pair $(Z,\delta)$, where 
\begin{itemize}
    \item $Z\subseteq [K]^{r-1}$ is a $K$-ary tree for
    $K:=\max_{i\in [r]} \binom{k}{n_i}$, namely, $\varepsilon$ is the root and if $z\cdot i\in Z$, then $z\in Z$ and $z\cdot j\in Z$ for all $j\in[i]$. Moreover,  if $|z|\leq r-2$ (i.e., $z$ is an internal node), $z$ has $\binom{k}{n_{|z|+2}}$ children;
    \item $\delta:Z\rightarrow 2^{[k]}$ is a labeling function associating each node $z\in Z$ with a tuple 
    $\delta(z)\subseteq [k]$ consisting of
    $n_{|z|+1}$ distinct indices in an ascending order such that $\delta(\varepsilon)=\angleb{1,\cdots, n_1}$ and for any pair of nodes $z\cdot i,z\cdot j\in Z$, $\delta(z\cdot i)\neq \delta(z\cdot j)$. 
\end{itemize}
\end{definition}
A node $z\in Z$ uniquely determines the root-to-$z$ path $(z_1,z_2,\cdots, z_h)$ where $z_h=z$. The path is denoted by $\myvec{z}$.
By abuse of notation, we denote
by $\delta(\myvec{z})$ the tuple $(\delta(z_1),\cdots,\delta(z_h))$,
and $\delta(\tau):=\{\delta(\myvec{z})\mid \myvec{z}\mbox{ is a root-to-path path in }\tau\}$. 
Obviously, 
$\delta(\myvec{z})$ for each root-to-leaf path $\myvec{z}$ is an input combination of the partition $\partb{n_1,\cdots,n_r}$ and $\delta(\tau)$ comprises all the input combinations of the partition $\partb{n_1,\cdots,n_r}$. 
Recall that $X_1$ is fixed to $\angleb{1,\cdots, n_1}$.

We define an equivalence relation between
two sibling nodes, allowing us to quickly identify equivalent input combinations using the tree $\tau$. 


\begin{definition} \label{def:def7}
Two nodes $z\cdot i$ and $z\cdot j$ 
are \emph{equivalent} in $\tau$, denoted by 
 \begin{center}
     $z\cdot i\simeq z\cdot j$,
 \end{center}   
if there is a bijection $\pi: \delta(z\cdot j)-\delta(z\cdot i)\rightarrow \delta(z\cdot i)-\delta(z\cdot j)$
such that for any $h\in \delta(z\cdot j)-\delta(z\cdot i)$ and any ancestor $z'\in \myvec{z}$, either $\{h,\pi(h)\} \subseteq \delta(z')$
or $\{h,\pi(h)\} \cap \delta(z')=\emptyset$.
(Note that $|\delta(z\cdot i)-\delta(z\cdot j)|=|\delta(z\cdot j)-\delta(z\cdot i)|$.)
\end{definition}
 
Intuitively, if such a  bijection $\pi: \delta(z\cdot j)-\delta(z\cdot i)\rightarrow \delta(z\cdot i)-\delta(z\cdot j)$ exists, a permutation $\pi_2$ can be derived  
which exchanges $h$ and $\pi(h)$ for $h\in \delta(z\cdot j)-\delta(z\cdot i)$ so that
$\delta(\myvec{z\cdot i})\approx \pi_2(\delta(\myvec{z\cdot j})),$
i.e., the two input combinations of the partial paths $\myvec{z\cdot i}$ and $\myvec{z\cdot j}$ are the same 
after applying $\pi_2$ to the latter combination.

\begin{example}
Let $\delta(z_1,z_2,z_2\cdot i)=(\angleb{1,2,3},\angleb{2,3},\angleb{1})$
and $\delta(z_1,z_2,z_2\cdot j)=(\angleb{1,2,3},\angleb{2,3},\angleb{2})$.
Then,  $\delta(z_2\cdot i)-\delta(z_2\cdot j)=\{1\}$
and $\delta(z_2\cdot j)-\delta(z_2\cdot i)=\{2\}$,
where $2$ appears in $\delta(z_2)=\angleb{2,3}$ but $1$ does not.
Thus $z_2\cdot i$ and $z_2\cdot j$ 
are not equivalent. 

Let $\delta(z_1,z_2,z_2\cdot i')=(\angleb{1,2,3},\angleb{2,3},\angleb{3})$. Then,
$\delta(z_2\cdot i')-\delta(z_2\cdot j)=\{3\}$
and $\delta(z_2\cdot j)-\delta(z_2\cdot i')=\{2\}$.
Since $(2,3)\subseteq \delta(z_2) \subseteq \delta(z_1)$, $z_2\cdot i'$ and $z_2\cdot j$ are equivalent. 
\end{example}

For each node $z\in Z$, we denote by $[z]_{\simeq}$ the equivalence class of $z$.
$\delta$ is extended to a set of nodes $[z]_{\simeq}$
as usual.
According to Definition~\ref{def:def7}, 
we can remove equivalent input combinations 
in $\tau$ 
%
obtaining $(Z',\delta')$, where $Z'\subseteq Z$, and, for every node $z\cdot i_1\cdots i_h\in Z$ such that 
its ancestors cannot be removed and $n_{|z|+1}>n_{|z|+2}=\cdots=n_{|z|+h+1}$,\\
\framebox{%
  \begin{minipage}{0.98\linewidth}
\centering
if $\delta(z\cdot i_1\cdots i_h)\in 
\bigcup_{j=1}^{h} \bigcup_{j'=1}^{i_j-1}
\delta([z\cdot i_1\cdots i_{j-1}\cdot j']_{\simeq})$, 
  the node
$z\cdot i_1\cdots i_h$ can be removed from $\tau$.
  \end{minipage}
}
\smallskip
  
We note, however, that
the 
tree nodes $z\cdot i$ and
the labeling function $\delta$ in the reduced tree should be adjusted accordingly to remain to be a $K$-ary tree.

\begin{example}
Let us consider the partition $\pi=\partb{2,1,1}$ with $k = 2$ and $r = 3$. Note that the number of individual inputs is $2+1+1=4$.
We generate input combinations of the form $(X_1, X_2, X_3)$
such that $|X_1| =2$, $|X_2|=|X_3|=1$ and $\bigcup_{i\in [2]} X_i=[2]$. Recall that we fix  $X_1 = \angleb{1, \cdots, |X_1|}=\angleb{1,2}$.

The full combination tree $\tau$ of $\pi$ is shown below,
which results in four input combinations:
$(\angleb{1,2},\angleb{1},\angleb{1})$, $(\angleb{1,2},\angleb{1},\angleb{2})$,
$(\angleb{1,2},\angleb{2},\angleb{1})$, and
$(\angleb{1,2},\angleb{2},\angleb{2})$.


\begin{center}
 \begin{forest}
          for tree={
            grow=south,
            circle, draw,
            minimum size=1ex,
            inner sep=1pt,
            s sep=5mm,
            l sep=2mm,
            font=\footnotesize,
            align=center
          },
          reduced/.style={draw=gray}
        [{$\varepsilon$, $X_1 = \angleb{1,2}$}, rectangle, draw=none
            [{$1$, $X_2 = \angleb{ 1}$}, rectangle, draw=none
              [{$1\cdot 1$, $X_3 = \angleb{ 1 }$ \\\\ $\rightarrow (\angleb{ 1,2}, \angleb{1}, \angleb{1})$}, rectangle, draw=none]
              [{$1\cdot 2$, $X_3 = \angleb{2}$ \\\\ $\rightarrow (\angleb{1,2}, \angleb{1}, \angleb{2})$}, rectangle, draw=none]
            ]
            [{$2$,  $X_2 = \angleb{2}$}, rectangle, draw=none, reduced
              [{$2\cdot 1$, $X_3 = \angleb{1}$ \\\\ $\rightarrow (\angleb{1,2}, \angleb{2}, \angleb{1})$}, rectangle, draw=none, reduced]
              [{$2\cdot 2$, $X_3 = \angleb{2}$ \\\\ $\rightarrow (\angleb{1,2}, \angleb{2}, \angleb{2})$}, rectangle, draw=none, reduced]
            ]
        ]
    \end{forest}
\end{center}
  
 \medskip

For the nodes $1$ and $2$, $\delta(2)-\delta(1)=\angleb{2}-\angleb{1}=\{2\}$
and $\delta(1)-\delta(2)=\angleb{1}-\angleb{2}=\{1\}$.
There is a bijection $\pi:\{2\}\rightarrow \{1\}$ such that
$\pi(2)=1$, and $\{2,\pi(2)\} \subseteq \delta(\varepsilon)$.
Thus, the nodes $1$ and $2$ are equivalent, i.e., $1\simeq 2$.
Therefore, the node $2$ can be removed in the combination tree,
as well as its descendants, leading to the reduced tree
$\tau_\simeq$,
where nodes in rectangles are removed.

The reduced tree $\tau_\simeq$ consists of only two input combinations: $(\angleb{1,2},\angleb{1},\angleb{1})$
and $(\angleb{1,2},\angleb{1},\angleb{2})$.
Indeed, they are equivalent to other two input combinations: $(\angleb{1,2},\angleb{1},\angleb{1})\approx (\angleb{1,2},\angleb{2},\angleb{2})$ and 
$(\angleb{1,2},\angleb{1},\angleb{2})\approx (\angleb{1,2},\angleb{2},\angleb{1})$.
\end{example}

\begin{proposition}\label{prop:reducedtree}
For every root-to-leaf path $\myvec{z}$ in the tree $\tau$, there exists
a root-to-leaf path $\myvec{z'}$ in the reduced tree $\tau_\simeq$ such that
$\delta(\myvec{z})\approx \delta'(\myvec{z'})$.
\end{proposition}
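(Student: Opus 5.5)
We argue by induction on the removal steps, using that $\approx$ is an equivalence relation: it is reflexive, and it is symmetric and transitive because permutation pairs $(\pi_1,\pi_2)$ compose and invert. It therefore suffices to prove one claim. Whenever a node $w=z\cdot i_1\cdots i_h$ is removed by the boxed rule, every root-to-leaf path through $w$ is $\approx$-equivalent to some root-to-leaf path avoiding $w$ that goes through a node processed earlier in the order (a node with smaller index at some level between $z$ and $w$). The order of removals is well-founded: nodes are handled left-to-right at each level, the tree is finite, and the witnesses always lie lexicographically before $w$. Applying the claim repeatedly therefore ends at a path with no removed nodes, i.e.\ a path of $\tau_\simeq$. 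The relabelling of the surviving nodes into a $K$-ary tree preserves the labels $\delta'$ along paths, so the conclusion holds for $\delta'$.

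The core is the single-step claim. First take $h=1$. The rule fires because $\delta(z\cdot i)=\delta(z\cdot j)$ label-wise for some earlier sibling class representative, and $z\cdot j\simeq z\cdot i'$ for some $i'<i$. By Definition~\ref{def:def7} there is a bijection $\pi$ between the symmetric differences such that each pair $\{h,\pi(h)\}$ lies entirely inside or entirely outside every ancestor label $\delta(z')$. Let $\pi_2$ be the product of the transpositions $(h\;\pi(h))$. These transpositions are disjoint, so $\pi_2$ is a permutation of $[k]$. By the pairing property, $\pi_2$ fixes every ancestor label $\delta(z')$ setwise, and it maps $\delta(z\cdot j)$ onto $\delta(z\cdot i')$.

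Now take any root-to-leaf path through $z\cdot i$ with suffix labels $Y_{|z|+3},\dots,Y_r$. Applying $\pi_2$ entry-wise gives a tuple whose prefix equals $\delta(\myvec{z\cdot i'})$ and whose suffix is $\pi_2(Y_{|z|+3}),\dots,\pi_2(Y_r)$. Each $\pi_2(Y_t)$ is a set of the right size $n_t$. By the completeness of $\tau$ (every tuple of $n_t$-subsets labels some child sequence), this image is $\delta$ of a path through $z\cdot i'$. It is still a valid input combination, because $\pi_2$ is a bijection and so the union remains $[k]$. Hence the two paths are $\approx$-equivalent with $\pi_1=\mathrm{id}$.

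For $h>1$, the condition $n_{|z|+1}>n_{|z|+2}=\dots=n_{|z|+h+1}$ says that the levels $|z|+2,\dots,|z|+h+1$ carry blocks of equal size. The witness node has the form $z\cdot i_1\cdots i_{j-1}\cdot j'$ with $j'<i_j$. In this case we also use a nontrivial $\pi_1$: it permutes the coordinate positions among these equal-size levels, moving the block matched by $\delta(w)$ into position $j$. This is legitimate precisely because the blocks have equal sizes, so permuting them yields another combination of the same partition. We then compose with the $\pi_2$ obtained from $\simeq$ as in the case $h=1$.

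The main obstacle is this case $h>1$. We must check that after reordering the equal-size blocks by $\pi_1$ and renaming by $\pi_2$, the resulting prefix really coincides with a path through the earlier node. In particular, $\pi_2$ must fix the labels of the ancestors above $z$ and also the intermediate levels $z\cdot i_1\cdots i_{j-1}$. We would first try to derive this from the ancestor condition in Definition~\ref{def:def7}, applied at the node $z\cdot i_1\cdots i_{j-1}$, whose ancestors include all of those levels. If that fails, the fallback is to strengthen the induction so that it processes one level at a time.
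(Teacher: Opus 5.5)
Your proposal is correct. The step you flag as the main obstacle for $h>1$ closes with your first attempt, so you do not need the fallback. Let the witness be a child $u$ of $v=z\cdot i_1\cdots i_{j-1}$ with $\delta(u)=\delta(w)$ and $u\simeq v\cdot j'$ for some $j'<i_j$.

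\begin{itemize}
\item Your cyclic shift $\pi_1$ stays inside depths $|z|+j,\dots,|z|+h$, which all carry blocks of the common size $n_{|z|+2}$. So after $\pi_1$ the tuple has prefix $\delta(\text{root-to-}v)\,\|\,\delta(u)$, and by completeness of $\tau$ it is $\delta$ of a path through $u$.
\item The pairing condition of Definition~\ref{def:def7} for $u\simeq v\cdot j'$ quantifies over exactly the nodes on the root-to-$v$ path. These are the nodes of the path to $z$ together with $z\cdot i_1,\dots,v$.
\item Hence your $\pi_2$ fixes all these labels setwise and maps $\delta(u)$ onto $\delta(v\cdot j')$. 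The shifted blocks $\delta(z\cdot i_1\cdots i_j),\dots,\delta(z\cdot i_1\cdots i_{h-1})$ and the suffix only need the right sizes, which completeness supplies.
\end{itemize}

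The image is therefore $\delta$ of a path through $v\cdot j'$. This path is lexicographically smaller than the original one and avoids $w$. One small point to state explicitly in your iteration: apply the step to the \emph{topmost} removed node on the current path. That is the node the boxed rule acted on, since its ancestors were not removed.

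Your local transformation is the paper's: the cyclic shift $\pi_1$ among equal-size levels, followed by the renaming $\pi_2$ built from the $\simeq$-bijection, with completeness of $\tau$ turning the image into a path. The global wrapper is different.

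\begin{itemize}
\item \textbf{The paper} inducts on the sequence $\tau_0,\dots,\tau_q$ of single-node removals. In the inductive step it must argue, via the ``shortest $z$'' and the least class member $i_1''$, that the path supplied by the induction hypothesis survives in $\tau_i$. This is the least transparent part of its proof.
\item \textbf{Your proof} stays in the full tree $\tau$ and descends in the lexicographic order of root-to-leaf paths. Termination follows from finiteness, and survival in intermediate trees never needs to be checked.
\item \textbf{Coverage:} you handle a witness at any level $j\in[h]$ of the boxed rule. The paper's written proof only treats witnesses among the siblings of $z\cdot i_1$, i.e.\ $j=1$.
\end{itemize}

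The paper's route follows the step-by-step construction of $\tau_\simeq$. Yours is shorter and easier to verify.
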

\begin{proof}
Let $\tau_0,\tau_1,\tau_2,\cdots,\tau_q$ be the sequence of combination trees, where $\tau_0=\tau$, $\tau_q=\tau_{\simeq}$ and for every $i\in[q]$, $\tau_i$ is obtained from $\tau_{i-1}$ by removing a node in $\tau_{i-1}$.
It suffices to show that for every $i\in [q]$, every root-to-leaf path $\myvec{\mu_{0}}$ in $\tau_{0}$, there exists a root-to-leaf path $\myvec{\mu_i}$ in $\tau_i$
such that $\delta(\myvec{\mu_0})\approx \delta(\myvec{\mu_i})$. 
The proof proceeds by applying induction on $i$.

\smallskip
\noindent{\bf Base case $i=1$}.
Consider a path $\myvec{\mu_{0}}$ that goes through the node $z\cdot i_1\cdots i_h$ which is removed from $\tau_0$, where $\delta(z\cdot i_1\cdots i_h)\in 
\delta([z\cdot i_1']_{\simeq})$ for some $i_1'<i_1$
and $n_{|z|+2}=\cdots=n_{|z|+h+1}$.
Note that the result immediately follows if the path
does not go through the removed node $z\cdot i_1\cdots i_h$.

The path $\myvec{\mu_{0}}$ can be rewritten to $\myvec{z\cdot i_1\cdots i_h \cdot z'}$, 
where ${z\cdot i_1\cdots i_h \cdot z'}$ is the leaf of the path $\myvec{\mu_{0}}$. Consequently, 
\[\delta(\myvec{\mu_{0}}):=\delta(\myvec{z})\|\delta(z\cdot i_1)\|\cdots \| \delta(z\cdot i_1\cdots i_h)\| \delta(\myvec{z'}),\]
where $\myvec{z'}$ denotes the path starting the child of the node $z\cdot i_1\cdots i_h$ to the leaf $z\cdot i_1\cdots i_h \cdot z'$, and $\delta(\myvec{z'})$ is defined as usual.

Let $\pi_1$ be the permutation over $[r]$
such that for any $j\in [r]$,
\[\pi_1(j)=\left\{\begin{array}{ll}
     j, & \text{if $j<n_{|z|+2}$ or $j>n_{|z|+h+1}$};\\ n_{|z|+h+1}, & \text{if $j=n_{|z|+2}$}; \\
     j+1,& \text{if $n_{|z|+2}< j\leq n_{|z|+h+1}$}.
\end{array}\right.\]

By permuting the components of $\delta(\myvec{\mu_{0}})$ using $\pi_1$, we get that:
\[\pi_1(\delta(\myvec{\mu_{0}})):=\delta(\myvec{z})\|
\delta(z\cdot i_1\cdots i_h)\|\delta(z\cdot i_1)\|\cdots \| \delta(z\cdot i_1\cdots i_{h-1})\| \delta(\myvec{z'}).\]

Obviously, $\delta(\myvec{\mu_{0}})\approx \pi_1(\delta(\myvec{\mu_{0}}))\in \parti$.

Since $\delta(z\cdot i_1\cdots i_h)\in 
\delta([z\cdot i_1']_{\simeq})$ with $i_1'<i_1$,
there must exist a path $\myvec{\mu_1}$ in $\tau_0$
and a permutation $\pi_2$ over $[k]$
such that (1) $\delta(\myvec{\mu_1})=\pi_2(\pi_1(\delta(\myvec{\mu_{0}})))$ where $\pi_2$ is derived 
from the bijection $\pi$ underlying the equivalence class
$[z\cdot i_1']_{\simeq})$ and applied to each input index in $\pi_1(\delta(\myvec{\mu_{0}}))$,
and moreover (2) $\myvec{\mu_1}$ goes through the node $z\cdot i_1'$. Thus $\myvec{\mu_1}$ does not go through
the node $z\cdot i_1$ and exists in $\tau_1$.

From $\delta(\myvec{\mu_{0}})\approx \pi_1(\delta(\myvec{\mu_{0}}))$ and $\delta(\myvec{\mu_1})=\pi_2(\pi_1(\delta(\myvec{\mu_{0}})))$ (i.e., $\delta(\myvec{\mu_1})\approx\pi_1(\delta(\myvec{\mu_{0}}))$), we get that
$\delta(\myvec{\mu_{0}})\approx \delta(\myvec{\mu_1})$, from which the results follows.

\smallskip
\noindent{\bf Inductive step $i>1$}.
Consider a path $\myvec{\mu_{0}}$ that goes through the node $z\cdot i_1\cdots i_h$ which is removed from $\tau_{i-1}$, where $\delta(z\cdot i_1\cdots i_h)\in 
\delta([z\cdot i_1']_{\simeq})$ for the shortest $z$ and some $i_1'<i_1$ with $n_{|z|+2}=\cdots=n_{|z|+h+1}$.
We note that the result immediately follows by applying induction hypothesis if $\myvec{\mu_{0}}$ does not go through the removed node $z\cdot i_1\cdots i_h$. 
Similarly, $\myvec{\mu_{0}}$ can be rewritten into $\myvec{z\cdot i_1\cdots i_h \cdot z'}$, 
where ${z\cdot i_1\cdots i_h \cdot z'}$ is the leaf of the path $\myvec{\mu_{0}}$. Consequently, 
\[\delta(\myvec{\mu_{0}}):=\delta(\myvec{z})\|\delta(z\cdot i_1)\|\cdots \| \delta(z\cdot i_1\cdots i_h)\| \delta(\myvec{z'}).\]

By permuting the components of $\delta(\myvec{\mu_{0}})$ using the permutation  $\pi_1$ defined the same as in the base case, we get that:
\[\pi_1(\delta(\myvec{\mu_{0}})):=\delta(\myvec{z})\|
\delta(z\cdot i_1\cdots i_h)\|\delta(z\cdot i_1)\|\cdots \| \delta(z\cdot i_1\cdots i_{h-1})\| \delta(\myvec{z'}).\]

Obviously, $\delta(\myvec{\mu_{0}})\approx \pi_1(\delta(\myvec{\mu_{0}}))\in \parti$.

Since $\delta(z\cdot i_1\cdots i_h)\in 
\delta([z\cdot i_1']_{\simeq})$ for the shortest $z$ and $i_1'<i_1$,
there must exist a path $\myvec{\mu_0'}$ in $\tau_0$
and a permutation $\pi_2$ over $[k]$
such that (1) $\delta(\myvec{\mu_0'})=\pi_2(\pi_1(\delta(\myvec{\mu_{0}})))$ where $\pi_2$ is derived 
from the bijection $\pi$ underlying the equivalence class
$[z\cdot i_1']_{\simeq})$ and applied to each input index in $\pi_1(\delta(\myvec{\mu_{0}}))$,
and moreover (2) $\myvec{\mu_0'}$ goes through the node $z\cdot i_1''$ for $i_1''=\min\{j\mid z\cdot j \in [z\cdot i_1']_{\simeq}\}$. Thus $\myvec{\mu_0'}$ does not go through the node $z\cdot i_1$
and $\delta(\myvec{\mu_{0}})\approx \delta(\myvec{\mu_{0}'})$.

By applying induction hypothesis, 
there must exist a path $\myvec{\mu_i}$ in $\tau_{i-1}$ such that (1) $\delta(\myvec{\mu_{i}})\approx \delta(\myvec{\mu_{0}'}))$ and 
(2) $\myvec{\mu_i}$
goes through the node $z\cdot i_1''$  because $z$ is the shortest one and cannot be removed,
and $i_1''=\min\{j\mid z\cdot j \in [z\cdot i_1']_{\simeq}\}\leq i_1'< i_1$. Thus, $\myvec{\mu_i}$ also exists in $\tau_{i}$.

The result follows from the fact that $\delta(\myvec{\mu_{0}})\approx \delta(\myvec{\mu_{0}'})$ and $\delta(\myvec{\mu_{i}})\approx \delta(\myvec{\mu_{0}'}))$. 
\end{proof}

Based on Proposition~\ref{prop:reducedtree}, {\sc Gen} 
generates an input combination by 
constructing 
the reduced tree $\tau_\simeq$ on-the-fly in a depth-first style. 
This is done by computing equivalence classes during the construction and checking whether a node can be added
so that a large number of equivalent input combinations can be avoided.

The order of input combinations for the same partition may impact the efficiency of \synprob solving.
Thus, we consider two strategies to permute the partition $\partb{n_1,\cdots,n_r}$ before generating input combinations: 
(1) a descending order according to component sizes 
in $\partb{n_1,\cdots,n_r}$, allowing us to fix
the more inputs early (note that 
$\parti(m,r,k,\partb{})$ already sorts the partitions in this order);
and (2) a descending order according to 
$\binom{k}{n_1},\cdots, \binom{k}{n_r}$,
allowing us to reduce input combinations early. 

\subsection{\tool Algorithm for Finding the Optimal PGM}

Alg.~\ref{alg:overallwithgreedy} presents a high-level synthesis algorithm for solving the \synprob problem $(k,d)$, which produces a circuit $\ccirc$
%
implementing a PGM $\gmat_{k\times r}$ with local optimality guarantees: 
if $\ccirc$ uses $n_{\tt in}$ individual inputs, there are no solutions $\ccirc'$ that uses $n_{\tt in}'<n_{\tt in}$  individual inputs 
and parity size $r'\leq r+1$.  

Alg.~\ref{alg:overallwithgreedy} mainly searches for $F=(f_1,\cdots,f_r)$, an $r$-tuple of coordinate functions
that implements $\gmat_{k\times r}$, from which 
$\ccirc$ can be built through minimal sum-of-products of coordinate function by applying
the Quine-McCluskey algorithm.

\begin{algorithm}[t]
    \SetAlgoLined
    \arraycolsep=1pt
    \SetKwComment{Comment}{$\triangleright$\ }{}
    \caption{\tool Algorithm}\label{alg:overallwithgreedy}
    \KwIn{Message size: $k$, minimum distance: $d$;}
    \KwOut{Circuit $\ccirc$\Comment*[r]{$\ccirc$ implements a PGM $\gmat_{k\times r}$}}
    \SetKwProg{myproc}{Procedure}{}{}
    \myproc{{\toolsol}{$(k,d)$}}{
        $\ccirc\leftarrow \text{\sc GreedySyn}(k,d)$\;\label{alg:overallwithgreedy:greedysyn}
        $n_{\tt base}\leftarrow \#_{\tt in}(\ccirc)$;
        $n_{\tt in}\leftarrow n_{\tt base}$; 
        $r_{\tt base}\leftarrow \#_{\tt out}(\ccirc)$;
        $r\leftarrow r_{\tt base}$\;
        \label{alg:overallwithgreedy:init-input-output}
        $F\leftarrow (\sem{\ccirc_1},\cdots,\sem{\ccirc_{r}})$\Comment*[r]{$\forall \myvec{x}\in \dom^k. F(\myvec{x})=\sem{\ccirc}(\myvec{x})$}\label{alg:overallwithgreedy:function}
         \label{alg:overallwithgreedy:init-input-output}
        \While(\Comment*[f]{Try to reduce $r$ with feasible $\gmat_{k\times r}$}){$r>\max(k,d-1)$} {\label{alg:overallwithgreedy:while1}
            $F'\leftarrow \text{\sc FindCoorFunc}(k,d,(r-1)*k,r-1)$\;
            \label{alg:overallwithgreedy:spec} 
            \lIf{$F'\neq \emptyset$}{$r\leftarrow r-1$}
            \lElse{{\tt break}}
        }\label{alg:overallwithgreedy:while2}
        \lIf{$r<r_{\tt base}$}{$n_{\tt in} \leftarrow n_{\tt in}+1$} \label{alg:overallwithgreedy:inincr1}
        \While(\Comment*[f]{Check and reduce $n_{\tt in}$ under the minimal $r$}){$n_{\tt in}>  \max(k,r)$}{\label{alg:overallwithgreedy:while3}
            $F'\leftarrow \text{\sc FindCoorFunc}(k,d,n_{\tt in}-1,r)$\;
            \lIf{$F'\neq \emptyset$}{$F\leftarrow F'$; $n_{\tt in}\leftarrow n_{\tt in}-1$} 
           \lElse{{\tt break}}
        }\label{alg:overallwithgreedy:while4}
        $old_{\tt in}\leftarrow n_{\tt in}$\;
        \While(\Comment*[f]{Try to reduce $n_{\tt in}$ by inc $r$}){$n_{\tt in}> \max(k,r+1)$}{\label{alg:overallwithgreedy:while5}
            \uIf{$r+1=r_{\tt base}\wedge n_{\tt in}-1=n_{\tt base}$}{
                $old_{\tt in}\leftarrow n_{\tt in}$; 
                $n_{\tt in}\leftarrow n_{\tt in}-1$; 
                \label{alg:overallwithgreedyreachbase}
            }
            $F'\leftarrow \text{\sc FindCoorFunc}(k,d,n_{\tt in}-1,r+1)$\; 
            \lIf{$F'\neq \emptyset$}{$F\leftarrow F'$; $n_{\tt in}\leftarrow n_{\tt in}-1$} 
           \lElseIf{$n_{\tt in}<old_{\tt in}$}{
                $old_{\tt in}\leftarrow n_{\tt in}$; 
                $r\leftarrow r+1$\Comment*[f]{Keep increasing $r$\label{alg:overallwithgreedyincr-r}}
                }
           \lElseIf{$r< r_{\tt base}\wedge n_{\tt in}-1=n_{\tt base}$}{$r\leftarrow r+1$\Comment*[f]{Keep increasing $r$}}
           \lElse{{\tt break}}
        }\label{alg:overallwithgreedy:while6}
         $\ccirc\leftarrow \text{\sc GenCircuitSOP}(F)$\Comment*[r]{$\forall x\in\dom^k.\sem{\ccirc}(\myvec{x})=F(\myvec{x})$} 
        \Return $\ccirc$
    }   
\end{algorithm} 

It first invokes {\sc GreedySyn} (cf. Alg.~\ref{alg:greedyalg}) to generate a code circuit $\ccirc$ (line~\ref{alg:overallwithgreedy:greedysyn}),
from which the number $n_{\tt in}$ of individual inputs and parity size $r$ (i.e., the number of outputs) are identified (line~\ref{alg:overallwithgreedy:init-input-output}) as the current smallest bases ($n_{\tt base}$
and $r_{\tt base}$).
An $r$-tuple of coordinate functions $(f_1,\cdots,f_r)$, implementing the PGM $\gmat_{k\times r}$ of $\ccirc$,
is also extracted  (line~\ref{alg:overallwithgreedy:function}).
The final solution $\ccirc'$ should be no worse than $\ccirc$, namely, either
(1) 
$\ccirc'$ uses fewer individual inputs than $\ccirc$, 
or (2) $\ccirc'$ uses no larger parity size than 
$\ccirc$ 
if $\ccirc'$ and $\ccirc$ use the same number of individual inputs. 

After that, the while-loop at lines~\ref{alg:overallwithgreedy:while1}--\ref{alg:overallwithgreedy:while2}  
gradually decreases the parity size $r$ 
until no PGM $\gmat_{k\times r}$ with minimal distance at least $d$  exists, following Propositions~\ref{prop:SMTbuild}.
It invokes the procedure $\text{\sc FindCoorFunc}(k,d,(r-1)*k,r-1)$ to check whether a PGM $\gmat_{k\times (r-1)}$ with minimal distance at least $d$ exists, without imposing any constraint on 
individual inputs, as $(r-1)*k$ is a trivial upper-bound. 
If such a PGM $\gmat_{k\times (r-1)}$ is found, $r$ is decreased by $1$. Recall that  $r\geq k$ and $r\geq d-1$, thus $r$ can only be reduced  
when $r>\max(k,d-1)$.

After this while-loop, $n_{\tt in}:=n_{\tt base}$ and $r$
is the minimal parity size for which
a PGM $\gmat_{k\times r}$ with minimal distance at least $d$ exists. There are two cases: 
\begin{itemize}
    \item  {\bf Case (i):} $r=r_{\tt base}$, i.e., $r_{\tt base}$ is already minimal, and a PGM $\gmat_{k\times r}$ with minimal distance at least $d$ is implemented by $F$ extracted from $\ccirc$ using $n_{\tt base}$ individual inputs;
    \item  {\bf Case (ii):} $r<r_{\tt base}$, i.e., $r_{\tt base}$ can be decreased to a smaller $r$ but, for any PGM $\gmat_{k\times r}$ with minimal distance at least $d$,
    an implementation 
    using $n_{\tt base}$ individual inputs may not exist.
\end{itemize}

For {\bf Case (i)}, following Propositions~\ref{prop:SMTbuild}, the second while-loop (lines~\ref{alg:overallwithgreedy:while3}--\ref{alg:overallwithgreedy:while4}) gradually decreases $n_{\tt in}$ 
by invoking the procedure $\text{\sc FindCoorFunc}(k,d,n_{\tt in}-1,r)$
until $n_{\tt in}$ becomes minimal wrt 
$r$. 
$\text{\sc FindCoorFunc}(k,d,n_{\tt in}-1,r)$ checks whether there exists an $r$-tuple $F'$ of coordinate functions,
implementing a PGM $\gmat_{k\times r}$ 
with minimal distance at least $d$ and
$n_{\tt in}-1$ individual inputs.
If $F'$ exists, $n_{\tt in}$ is decreased by $1$.
Note that such $F'$ exists only if 
each of the $k$ inputs is used at least once and each of the $r$ coordinate functions uses at least one input, i.e., $n_{\tt in}>\max(k,r)$.

For {\bf Case (ii)}, 
we first check whether there exists a PGM
$\gmat_{k\times r}$ with minimal distance at least $d$ that can be implemented by an $r$-tuple $F'$ using $n_{\tt base}$ individual inputs, which is done by  $\text{\sc FindCoorFunc}(k,d,n_{\tt in}-1,r)$.  
The rest is the the same as \textbf{Case (i)}. 

After these steps, 
there are two possibilities: 
\begin{itemize}
    \item {\bf Case (1):} $n_{\tt in}\leq n_{\tt base}$ and $r\leq r_{\tt base}$, indicating that $F'$ implements a PGM $\gmat_{k\times r}$ with minimal distance at least $d$ using 
    $n_{\tt in}$ individual inputs for 
    $r$;
 %
    \item  {\bf Case (2):} $n_{\tt in} -1= n_{\tt base}$ and $r<r_{\tt base}$, indicating that $F$ only implements $\sem{\ccirc}$ using $n_{\tt base}$ individual inputs (which may not be minimal) for the \emph{non-minimal} parity size $r_{\tt base}$. Moreover, there is no PGM $\gmat_{k\times r}$ with minimal distance at least $d$ that can be implemented 
    using $m$ individual inputs for any $m\leq n_{\tt base}$.
\end{itemize}

Depending upon {\bf Case (1)} and {\bf Case (2)}, the third while-loop at lines~\ref{alg:overallwithgreedy:while5}--\ref{alg:overallwithgreedy:while6} proceeds as follows.  

For {\bf Case (1)}, while $n_{\tt in}$ is the minimal number of individual inputs for the (minimal) parity size $r$, 
$n_{\tt in}$ may be reduced further by increasing the parity size $r$~\cite{MullerM24}.
As the number of individual inputs 
is more important, we gradually  
increase $r$ as long as $n_{\tt in}$ decreases. 
Here, the procedure $\text{\sc FindCoorFunc}(k,d,n_{\tt in}-1,r+1)$ checks if $n_{\tt in}$ can be reduced by $1$ after increasing $r$ by $1$. 
(Note that $n_{\tt in}$ may be reduced further by increasing $r$, but for efficiency consideration it is not attempted, meaning that only local optimality guarantees are given. To achieve global optimality, we could keep increasing $r$ until $r=n_{\tt in}$, when each coordinate function uses one input.)

For {\bf Case (2)}, the while-loop
first identifies a parity size $r'$
starting from $r+1$ such that 
a PGM $\gmat_{k\times r'}$ with minimal distance at least $d$ exists and can be implemented by an $r'$-tuple $F'$ of coordinate functions using $n_{\tt base}$ individual inputs. If such an $r'< r_{\tt base}$ can be found, it 
proceeds the same as in {\bf Case (1)}.

If no $r'< r_{\tt base}$ is found, we conclude that there is no PGM $\gmat_{k\times r'}$ with minimal distance at least $d$ that can be implemented by an $r'$-tuple $F'$ of coordinate functions using $m$ individual inputs for any $m\leq n_{\tt base}$ and $r'< r_{\tt base}$.
At this moment, $F$ implements $\sem{\ccirc}$ using $n_{\tt base}$ individual inputs (but may not be minimal) for the \emph{non-minimal} parity size $r_{\tt base}$. Thus, we 
decrease $n_{\tt base}$ 
for the parity size $r_{\tt base}$ (line~\ref{alg:overallwithgreedyreachbase}). 


\begin{theorem}\label{thm:alg3} 
The following two statements hold:
\begin{itemize}
    \item  {\bf Termination.} Alg.~\ref{alg:overallwithgreedy} always terminates. \item {\bf Local optimality.} Suppose $F=(f_1,\cdots,f_r)=\text{\toolsol}(k,d)$
  and $m=\sum_{i\in[r]} |{\tt supp}(f_i)|$.
For any $r'\leq r+1$ and $m'<m$, 
    the \synprob problem $(k,d)$
    has no 
    solutions $\ccirc$ that uses
    $m'$ individual inputs
    and $r'$ parity size. 
\end{itemize}
\end{theorem}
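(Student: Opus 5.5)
For termination, I will bound each loop of Alg.~\ref{alg:overallwithgreedy} separately. {\sc GreedySyn} terminates because the inner while-loop of Alg.~\ref{alg:greedyalg} only increments $\myvec{x}'$. It always finds a suitable parity, since with enough parity bits some value is at distance at least $d$ from the finitely many stored codewords. {\sc FindCoorFunc} terminates because $\parti$ is finite. For each partition, {\sc Gen} enumerates a finite subset of the combination tree. Each SMT query is a quantified formula over the finite domain $\dom^k$ with uninterpreted functions of bounded arity, so it is decidable, and I take the solver to be a decision procedure for it.

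The loop counters then give explicit variants. The first loop strictly decreases $r$ and is guarded by $r>\max(k,d-1)$. The second loop strictly decreases $n_{\tt in}$ and is guarded by $n_{\tt in}>\max(k,r)$. The third loop is the delicate one, since its branches either decrease $n_{\tt in}$ or increase $r$. For it I will use the lexicographic measure $(n_{\tt in}, C-r)$ with $C$ a suitable constant. The guard $n_{\tt in}>r+1$ bounds $r$ above by $n_{\tt in}$, hence by $n_{\tt base}+1$, so $C=n_{\tt base}+1$ works. I will check that every non-breaking iteration either decreases $n_{\tt in}$ or keeps $n_{\tt in}$ and increases $r$. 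This includes the line~\ref{alg:overallwithgreedyreachbase} step, which decreases $n_{\tt in}$.

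Local optimality is where the work lies. Let $(r,m)$ be the final values. For each pair $(r',m')$ with $m'<m$ and $r'\le r+1$, I will exhibit a negative {\sc FindCoorFunc} answer that excludes it, and then combine Proposition~\ref{prop:SMTbuild} with Propositions~\ref{prop:symreduction} and~\ref{prop:reducedtree}. The reduced tree loses no equivalence class, so a negative answer means no $F$ exists with exactly $m'$ individual inputs and parity $r'$. I will first establish a monotonicity lemma: if a solution exists with $(r',m')$, then one exists with $(r',m'+1)$ whenever $m'<k r'$, and one exists with $(r'+1,m'+1)$. For the first, add a dummy support variable to a coordinate whose support is not yet $[k]$. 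The PGM is unchanged, but the support count goes up only if the function genuinely depends on the new variable, so this may need care. The safe alternative is to append a coordinate $f_{r'+1}(x_i)=x_i$: this preserves linearity and injectivity and can only increase the weight. Given the lemma, the single failing query $\text{\sc FindCoorFunc}(k,d,m-1,r+1)$, which caused the break in the third loop, rules out every $(r'',m'')$ with $r''\le r+1$ and $m''\le m-1$. Padding $(r'',m'')$ with $r+1-r''$ identity coordinates gives parity $r+1$ and inputs $m''+(r+1-r'')$, but this need not equal $m-1$, so the padding lemma by itself does not close the argument. I will therefore combine it with the failing queries of the second loop at $(r,m-1)$ and of the first loop for $r'<r$. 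I will also use the lemma that adding an input to an existing coordinate keeps the support count exact, which holds if that coordinate becomes $f\oplus x_i$ with $i\notin{\tt supp}(f)$ and the weight does not drop.

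The main obstacle is exactly this monotonicity in $m$ at fixed $r'$. Changing $f$ to $f\oplus x_i$ may reduce the minimum distance, and that would break the chain. If I cannot prove it, I will instead argue from the loop invariants: every pair $(r',m')$ excluded by the statement with $r'\le r+1$ was queried directly or is dominated along the path the algorithm took. That requires a careful case analysis of Cases (1) and (2) and of the $r_{\tt base}$ bookkeeping branches. I expect this to be the hardest step, and possibly the point where the theorem's claim needs the hedging mentioned in the text about local guarantees.
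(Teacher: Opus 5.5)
Your termination argument is fine. The gap is in local optimality, at exactly the lemma you flag. You never prove that a solution with parity $r'$ and $m'<kr'$ individual inputs yields one with $m'+1$ inputs at the same parity. This is the paper's Proposition~\ref{prop:cutpoint}(1), and the whole argument depends on it: it is what lets one failed query {\sc FindCoorFunc}$(k,d,n-1,p)$ exclude \emph{every} count below $n$ at parity $p$.

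The missing observation is that the count is syntactic. ${\tt supp}_y(\ccirc)$ is the set of inputs wired into $\ccirc_y$. In {\sc GenSpec}, each $f_i$ is declared on the tuple $X_i$ but need not depend on all of it, and Proposition~\ref{prop:SMTbuild} takes ${\tt supp}(f_i)=X_i$. Under this convention, the first idea you set aside is the whole proof. Wire an unused input into some coordinate with $|X_i|<k$ (the paper writes $\ccirc_y(\myvec{x})\oplus x\oplus x$); this changes neither the PGM, nor injectivity, nor the distance. The $f\oplus x_i$ detour is therefore unnecessary. Moreover, no construction can save the semantic reading, because there the lemma is false. For $(k,d)=(2,2)$ and $r'=2$, $(x_1\oplus x_2,\,x_1)$ is a solution with three inputs. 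Four semantically used inputs would force $f_1=f_2=x_1\oplus x_2$, which is not injective.

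The second missing piece is the combination step, which you only announce. As you noticed, identity padding moves $m'$ together with $r'$. The paper instead argues from the loop history:
\begin{itemize}
\item By Proposition~\ref{prop:cutpoint}, the first loop stops at the least feasible parity, so no solution exists below it.
\item Every parity the later loops pass through is abandoned only after a failed query $(p,n_{\tt in}-1)$ at that parity $p$, or at the trivial bound $m'\ge\max(k,r')$ of a loop guard. By (1), such a failed query excludes all smaller counts at $p$.
\item $n_{\tt in}$ never increases after line~\ref{alg:overallwithgreedy:inincr1}, so these exclusions persist below the final $m$.
\item The failed query at $(r+1,m-1)$ that precedes the final break covers $r'=r+1$.
\end{itemize}
When you write this out, check that each failed query $(p,n-1)$ you invoke satisfies $n-1\le kp$, as (1) requires. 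The first query of the second loop in Case (ii), at $(r,n_{\tt base})$, need not satisfy it, and the paper's sketch passes over this point.

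Alternatively, you can repair your single-query idea with a padding that adds a coordinate without adding inputs. Take a coordinate with $|X_i|\ge 2$ and split it along $X_i=A\sqcup B$. Let $g$ on $A$ and $h$ on $B$ be the XORs of the variables of the linear $f_i$ that lie in $A$ and in $B$, respectively. Then $g\oplus h=f_i$ and $g(\myvec{x})+h(\myvec{x})\ge f_i(\myvec{x})$ pointwise, so neither weight nor injectivity is lost. This gives $(r',m')\Rightarrow(r'+1,m')$ whenever $m'>r'$. Combined with (1), for $k\ge 2$ (the case $k=1$ is immediate), a failed query at $(r+1,m-1)$ with $r+1\le m-1$ excludes every $(r',m')$ with $r'\le r+1$ and $m'<m$.
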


 \begin{proof}
We first prove the following proposition.

\begin{proposition}\label{prop:cutpoint}
(1) If an \synprob instance $(k,d)$ has no solutions that
use $n_{\tt in}$ individual inputs and $r$ parity size where $n_{\tt in}\leq k*r$,
then it has no solutions that use $n_{\tt in}'=n_{\tt in}-1$ individual inputs with the same parity size $r$.
(2) If it has a solution with parity size $r$, then it also has
a solution with parity size $r+1$ (without constraining on individual inputs). 
\end{proposition}

\noindent
{\bf Proof of Proposition~\ref{prop:cutpoint}.}
Suppose an \synprob instance $(k,d)$ has no solutions that
use $n_{\tt in}$ individual inputs and $r$ parity size where $n_{\tt in}\leq k*r$, but it has a solution $\ccirc$ that uses $n_{\tt in}'=n_{\tt in}+1$ individual inputs with the same parity size $r$.

Since $n_{\tt in}'<n_{\tt in} \leq k*r$, there must exist
an output $y$ in $\ccirc$ such that $|{\tt supp}_y(\ccirc)|< k$.
Let $\ccirc_y$ be the sub-circuit of $\ccirc$ for computing
$y$. Then, we can add an additional input $x$ into $\ccirc_y$, leading to a new sub-circuit  $\ccirc_y'$, and moreover
$\ccirc_y'(\myvec{x}):=\ccirc_y'(\myvec{x})\oplus x\oplus x$.

By replacing the sub-circuit  $\ccirc_y$ in $\ccirc$
with the sub-circuit $\ccirc_y'$, we obtain a new circuit $\ccirc_{+1}$ that uses one more individual input than $\ccirc$.
$\ccirc_{+1}$ implements the same injective PGM as $\ccirc$,
and is independent, thus is a new solution of the  \synprob instance $(k,d)$ using 
the same parity size $r$ as $\ccirc$.

\smallskip
Now, we prove that if it has a solution with parity size $r$, then it also has a solution with parity size $r+1$ (without constraining on individual inputs).

Let $\ccirc$ be a solution using parity size $r$.
Then, we can add a sub-circuit $\ccirc_{r+1}$ into $\ccirc'$,  leading to a new circuit $\ccirc'$, where $\ccirc_{r+1}$ always outputs $0$, namely, all the parities of $\ccirc$ are appended by $0$. Obviously, $\ccirc'$ is always a solution of the problem. 

\medskip
\noindent
{\bf Proof of Termination.}
The first while-loop either reduces $r$ by $1$ or breaks during each iteration, and it can iterate
at most $r_{\tt base}-\max(k,d-1)$ times. 
The second while-loop either reduces $n_{\tt in}$ by $1$ or breaks during each iteration, and it can iterate
at most $n_{\tt base}+1-\max(k,r)$ times. 
The third while-loop either reduces $n_{\tt in}$ by $1$ or increases $r$ by $1$ or breaks during each iteration, thus
the loop condition $n_{\tt in}> \max(k,r+1)$ will not hold eventually. Thus, Alg.~\ref{alg:overallwithgreedy} always terminates.

\medskip
\noindent{\bf Proof of local optimality.}
   %
Suppose there is a solution $\ccirc'$ that has $m'$ number of individual inputs and $r'$ parity size for some $m'<m$ and $r'\leq r+1$.

The proof proceeds by distinguishing if
$r'\leq r$ or $r'=r+1$.

 If $r'\leq r$,  by Proposition~\ref{prop:cutpoint}(1),
an $r'$-tuple of coordinate functions $F'=(\sem{\ccirc_1'},\cdots,\sem{\ccirc_{r'}'})$ must be found in either the second or the third while-loop
and thus $F$ should not be returned in Alg.~\ref{alg:overallwithgreedy}.
By Proposition~\ref{prop:cutpoint}(2), 
the first while-loop identifies the minimal
the parity size $r$ w.r.t. the minimum distance $d$.
Observe that Alg.~\ref{alg:overallwithgreedy}
increases the parity size $r$ at most one from
the minimal parity size identified in the first while-loop.
If $r'=r$, then $F$ has been found before finding $F'$ and Alg.~\ref{alg:overallwithgreedy} will try to find another 
one using less number of individual inputs in either the second or the third while-loop, eventually 
finds $F'$. If $r'<r$, then $F'$ should be found before finding $F$. 
Whenever $F'$ has been found, the number $n_{\tt in}$ of individual inputs will not increase any more (i.e., either keep the same or decrease), thus $F$ should not be returned in Alg.~\ref{alg:overallwithgreedy}.

   If $r'=r+1$,  the third while-loop would find better one $F'$, thus $F$ should not be returned.
\end{proof}

\section{Evaluation}\label{sec:experiments}

We implement our algorithm as a prototype tool, 
based on the SMT solver CVC5, and integrate it with Yosys~\cite{Yosys} for compiling and optimizing the generated Verilog  RTL modules into Verilog netlist circuits. Particularly, input combinations are processed in parallel using the producer-consumer pattern with OpenMP, a widely used API for shared-memory parallel programming\footnote{\url{https://www.openmp.org}.}. 

Belowe, we thoroughly evaluate the efficacy of 
\tool  in generating optimal code circuits 
by varying
the message size
$k$ from $1$ to $6$ and the minimum distance
$d$ from $2$ to $5$. (Recall that $1\le k \le 4$ is recommended~\cite{MullerM24}.)
We first evaluate the design choices of \tool for partition generation and input combination generation,
then compare \tool with AGEFA$_{\tt g}$ 
and AGEFA$_{\tt bf}$~\cite{MullerM24} (cf. Section~\ref{sect:greedy}).


All experiments were conducted on a machine
with an Intel(R) Xeon(R) Platinum 8375C CPU (32 cores, 2.90 GHz), 755 GB RAM and Ubuntu 20.04.6 LTS.
(755 GB RAM is the memory for evaluation instead of consumption.) 
Both AGEFA$_{\tt bf}$ and \tool uses 16 threads for finding
coordinate functions with 48-hour timeout, others use single-thread (Wall time for multi-thread and CPU time for single-thread). When timeout occurs, we use the current-best coordinate functions produced by \tool and AGEFA$_{\tt bf}$
for building circuits.

\subsection{Evaluation of Partition Generation}
 
\begin{table}[t]
    \centering\setlength{\tabcolsep}{7pt} 
    \renewcommand{\arraystretch}{1.1}
    \caption{Results of \tool w.r.t. two partition orders.}
    \vspace{-2mm}
    \label{tab:partitionorders}
  \scalebox{0.9}{  
  \begin{tabular}{|c||r|r|r|r|r|| r|r|r|r|r|}
    \hline
    \multirow{2}*{($k,d$)} &  
    \multicolumn{5}{c||}{Lexicographical ascending order} & \multicolumn{5}{c|}{Lexicographical descending order} \\ \cline{2-11}
    & \#Part & \#Comb & Area & Len & Time
    & \#Part & \#Comb & Area & Len & Time \\ \hline
        (1,2) & 0 & 0 & 0 & 1 & 0.001 & 0 & 0 & 0 & 1 & 0.001 \\ \hline
        (1,3) & 0 & 0 & 0 & 1 & 0.001 & 0 & 0 & 0 & 1 & 0.001 \\ \hline
        (1,4) & 0 & 0 & 0 & 1 & 0.001 & 0 & 0 & 0 & 1 & 0.001 \\ \hline
        (1,5) & 0 & 0 & 0 & 1 & 0.001 & 0 & 0 & 0 & 1 & 0.001 \\ \hline
        (2,2) & 0 & 0 & 0 & 1 & 0.001 & 0 & 0 & 0 & 1 & 0.001 \\ \hline
        (2,3) & 1 & 2 & 5 & 4 & 0.06 & 1 & 2 & 5 & 4 & 0.06 \\ \hline
        (2,4) & 2 & 7 & 10 & 4 & {\bf 0.09} & 2 & 7 & 10 & 4 & 0.10 \\ \hline
        (2,5) & 2 & 11 & 10 & 4 & 0.14 & 2 & 11 & 10 & 4 & 0.14 \\ \hline
        (3,2) & 0 & 0 & 0 & 1 & 0.001 & 0 & 0 & 0 & 1 & 0.001 \\ \hline
        (3,3) & 6 & 27 & 10 & 4 & 0.34 & 6 & 27 & 10 & 4 & {\bf 0.32} \\ \hline
        (3,4) & 5 & 40 & 32 & 6 & 0.30 & 5 & 40 & 32 & 6 & {\bf 0.28} \\ \hline
        (3,5) & 5 & 197 & 32 & 6 & 4.37 & 5 & 197 & 32 & 6 & 4.37 \\ \hline
        (4,2) & 0 & 0 & 0 & 1 & 0.001 & 0 & 0 & 0 & 1 & 0.001 \\ \hline
        (4,3) & 10 & 164 & \textbf{15} & \textbf{4} & \textbf{2.81} & 10 & 164 & 47 & 12 & 2.84 \\ \hline
        (4,4) & 10 & 215 & 68 & 6 & 1.56 & 10 & 215 & 68 & 6 & 1.59 \\ \hline
        (4,5) & 17 & 5898 & 59 & 6 & \textbf{87.92} & 17 & 5898 & 59 & 6 & 91.59 \\ \hline
        (5,2) & 0 & 0 & 0 & 1 & 0.003 & 0 & 0 & 0 & 1 & 0.003 \\ \hline
        (5,3) & 16 & 1120 & \textbf{20} & \textbf{4} & \textbf{29.48} & 16 & 1120 & 119 & 21 & 31.75 \\ \hline
        (5,4) & 24 & 6274 & 128 & 12 & 141.41 & 24 & 6274 & 128 & 12 & \textbf{130.5} \\ \hline
        (5,5) & 45 & 511479 & 113 & 12 & \textbf{8848.96} & 45 & 511479 & 113 & 12 & 9469.83 \\ \hline
        (6,2) & 0 & 0 & 0 & 1 & 0.001 & 0 & 0 & 0 & 1 & 0.001 \\ \hline
        (6,3) & 25 & 8967 & \textbf{25} & \textbf{4} & 467.62 & 25 & \textbf{8137} & 287 & 38 & \textbf{386.09} \\ \hline
        (6,4) & 51 & 252065 & 272 & 21 & 14878.6 & 51 & 252065 & 272 & 21 & \textbf{14134.58} \\ \hline
        (6,5) & 4 & 606240 &136 &6 & \textbf{Timeout} & \multicolumn{4}{c|}{N/A} & \textbf{Timeout} \\ \hline
    \end{tabular}}
\end{table}

We compare the effectiveness of two partition orders in \tool: 
lexicographical ascending order vs. lexicographical descending order,
where the equivalence-based reduction is enabled and 
input combinations are generated according to component sizes.

The results are reported in Table~\ref{tab:partitionorders},
including the number of partitions and the number of input combinations processed until finding an optimal solution,
the number of gates (Area) and the length of the longest path in the code circuit (Len), and execution time in seconds.
We note that for $(k,r)=(6,5)$, \tool runs out of time (48 hours)
with lexicographical ascending order, we report the results when
the better code circuit is found.

We can observe that the number of partitions and the number of input combinations processed by \tool using two different partition order
are almost the same, except for the case $(k,d)=(6,3)$. 
At first glance, it is strange, thus we checked the coordinate functions and indeed they are often different, evidenced by different circuit area and latency (i.e., Len).  In terms of execution time,
they are also almost comparable. Finally, we consider that  
the lexicographical ascending order is better than the lexicographical descending order, because the inputs are more balanced across coordinate functions and thus the code circuits
use fewer gates with lower latency.
 

\begin{table}[t]
    \centering\setlength{\tabcolsep}{12pt} 
    \renewcommand{\arraystretch}{1.1}
    \caption{Results of \tool w.r.t. two partition orders.}
    \vspace{-2mm}
    \label{tab:inputcombinationorders}
  \scalebox{0.9}{  
  \begin{tabular}{|c||r|r|r|r||r|r|}
    \hline
    \multirow{3}*{($k,d$)} &  
    \multicolumn{4}{c||}{With reduction} & 
    \multicolumn{2}{c|}{Without reduction} \\ \cline{2-7}
    & \multicolumn{2}{c|}{Component size} & \multicolumn{2}{c||}{Comb. numbers} & \multicolumn{2}{c|}{Component size}  \\ \cline{2-7}
    &\#Comb & Time & \#Comb & Time &\#Comb & Time  \\ \hline
        (1,2) & 0 & 0.001 & 0 & 0.001 & 0 & 0.001 \\ \hline
        (1,3) & 0 & 0.001 & 0 & 0.001 & 0 & 0.001 \\ \hline
        (1,4) & 0 & 0.001 & 0 & 0.001 & 0 & 0.001 \\ \hline
        (1,5) & 0 & 0.001 & 0 & 0.001 & 0 & 0.001 \\ \hline
        (2,2) & 0 & 0.001 & 0 & 0.001 & 0 & 0.001 \\ \hline
        (2,3) & 2 & 0.06 & 2 & 0.06 & 6 & 0.07 \\ \hline
        (2,4) & 7 & 0.09 & 7 & 0.09 & 38 & 0.13 \\ \hline
        (2,5) & 11 & 0.14 & 11 & 0.14 & 158 & 0.4 \\ \hline
        (3,2) & 0 & 0.001 & 0 & 0.001 & 0 & 0.001 \\ \hline
        (3,3) & 27 & \textbf{0.34} & 27 & 0.37 & 300 & 0.88 \\ \hline
        (3,4) & 40 & \textbf{0.3} & 40 & 0.33 & 426 & 1.75 \\ \hline
        (3,5) & 197 & 4.37 & 197 & \textbf{4.21} & 11787 & 103.51 \\ \hline
        (4,2) & 0 & 0.001 & 0 & 0.001 & 0 & 0.001 \\ \hline
        (4,3) & \textbf{164} & 2.81 & 168 & 2.81 & 7368 & 56.34 \\ \hline
        (4,4) & \textbf{215} & 1.56 & 243 & \textbf{1.48} & 9960 & 84.12 \\ \hline
        (4,5) & \textbf{5898} & \textbf{87.92} & 7990 & 120.65 & 2413260 & 33517.52 \\ \hline
        (5,2) & 0 & 0.003 & 0 & \textbf{0.002} & 0 & 0.007 \\ \hline
        (5,3) & \textbf{1120} & \textbf{29.48} & 1153 & 30.6 & 470094 & 6503.46 \\ \hline
        (5,4) & \textbf{6274} & \textbf{141.41} & 6850 & 151.44 & N/A & \textbf{Timeout} \\ \hline
        (5,5) & \textbf{511479} & \textbf{8848.96} & 586722 & 9877.27 & N/A & \textbf{Timeout} \\ \hline
        (6,2) & 0 & 0.001 & 0 & 0.001 & 0 & 0.001 \\ \hline
        (6,3) & 8967 & 467.62 & \textbf{8136} & \textbf{379.69} & N/A & \textbf{Timeout} \\ \hline
        (6,4) & \textbf{252065} & \textbf{14878.6} & 311209 & 18952.01 & N/A & \textbf{Timeout} \\ \hline
        (6,5) & N/A & \textbf{Timeout} & N/A & \textbf{Timeout} & N/A & \textbf{Timeout} \\ \hline
    \end{tabular}}
\end{table}

\subsection{Evaluation of Input Combination Generation}

We evaluate the effectiveness of our equivalence-based reduction
for input combination generation, as well as two orders
of input combinations, where partitions are produced in the lexicographical ascending order.

The results are reported in Table~\ref{tab:inputcombinationorders},
including the number of input combinations processed until finding an optimal solution and execution time in seconds.

First of all, we can observe that a large number of equivalent input combinations are identified by our equivalence-based reduction so that only fewer input 
combinations are checked via SMT solving. Thus,
our equivalence-based reduction significantly reduces execution time. Between two orders of input combinations, we can observe
that the descending order according to component sizes often performs better than that according to combinational numbers,
allowing us to find an optimal solution using fewer input combinations.



\begin{table}[t]
    \centering\setlength{\tabcolsep}{4pt} 
    \renewcommand{\arraystretch}{1.1}
    \caption{Results of AGEFA$_{\tt g}$, AGEFA$_{\tt bf}$, and
    \tool for generating code circuits.}
    \vspace{-2mm}
    \label{tab:code-circuit}
  \scalebox{0.9}{  \begin{tabular}{|c||r|c|r|r|r||r|c|r|r|r||r|c|r|r|r|}
    \hline
   \multirow{2}*{($k,d$)} & 
   \multicolumn{5}{c||}{AGEFA$_{\tt g}$} & 
   \multicolumn{5}{c||}{AGEFA$_{\tt bf}$} &
   \multicolumn{5}{c|}{\tool}\\   \cline{2-16} 
    & \#In & r & Area & Len & Time &
      \#In & r & Area & Len & Time  &
     \#In & r & Area & Len & Time \\ \hline\hline
        (1,2) & 1 & 1 & 0 & 1 & 0.001  & 1 & 1 & 0 & 1 & 0.003  & 1 & 1 & 0 & 1 & 0.002  \\ \hline
        (1,3) & 2 & 2 & 0 & 1 & 0.001  & 2 & 2 & 0 & 1 & 0.003  & 2 & 2 & 0 & 1 & 0.002  \\ \hline
        (1,4) & 3 & 3 & 0 & 1 & 0.002  & 3 & 3 & 0 & 1 & 0.004  & 3 & 3 & 0 & 1 & 0.002  \\ \hline
        (1,5) & 4 & 4 & 0 & 1 & 0.002  & 4 & 4 & 0 & 1 & 0.004  & 4 & 4 & 0 & 1 & 0.003  \\ \hline
        (2,2) & 2 & 2 & 0 & 1 & 0.002  & 2 & 2 & 0 & 1 & 0.004  & 2 & 2 & 0 & 1 & 0.002  \\ \hline
        (2,3) & 4 & 3 & 5 & 4 & 0.002  & 4 & 3 & 5 & 4 & 0.004  & 4 & 3 & 5 & 4 & 0.061  \\ \hline
        (2,4) & 6 & 4 & 10 & 4 & 0.002  & 6 & 4 & 10 & 4 & 0.007  & 6 & 4 & 10 & 4 & 0.091  \\ \hline
        (2,5) & 8 & 6 & 10 & 4 & 0.004  & 8 & 6 & 10 & 4 & 0.032  & 8 & 6 & 10 & 4 & 0.142  \\ \hline
        (3,2) & 3 & 3 & 0 & 1 & 0.002  & 3 & 3 & 0 & 1 & 0.004  & 3 & 3 & 0 & 1 & 0.002  \\ \hline
        {\bf (3,3)} & 7 & 3 & 27 & 6 & 0.003  & {\bf 6} & {\bf 4} & 17 & 6 & 0.012  &  {\bf 6} & {\bf 4} &  {\bf 10} & {\bf 4} & 0.341  \\ \hline
        (3,4) & 9 & 4 & 32 & 6 & 0.004  & 9 & 4 & 32 & 6 & 0.013  & 9 & 4 & 32 & 6 & 0.301  \\ \hline
        (3,5) & 12 & 7 & 32 & 6 & 0.006  & 12 & 7 & 32 & 6 & 0.150  & 12 & 7 & 32 & 6 & 4.372  \\ \hline
        (4,2) & 4 & 4 & 0 & 1 & 0.005  & 4 & 4 & 0 & 1 & 0.007  & 4 & 4 & 0 & 1 & 0.003  \\ \hline
        {\bf (4,3)} & 9 & 4 & 57 & 12 & 0.006  & {\bf 8} & {\bf 5} & 47 & 12 & 0.016  & {\bf 8} & {\bf 5} & {\bf 15} & {\bf 4} & 2.812  \\ \hline
        (4,4) & 12 & 4 & 68 & 6 & 0.005  & 12 & 4 & 68 & 6 & 0.438  & 12 & 4 & 68 & 6 & 1.562  \\ \hline
        (4,5) & 16 & 7 & 59 & 6 & 0.007  & 16 & 7 & 59 & 6 & 113.641  & 16 & 7 & 59 & 6 & {\bf 87.922}  \\ \hline
        (5,2) & 5 & 5 & 0 & 1 & 0.012  & 5 & 5 & 0 & 1 & 0.017  & 5 & 5 & 0 & 1 & 0.006  \\ \hline
        {\bf (5,3)} & 11 & 5 & 129 & 21 & 0.014  & {\bf 10} & {\bf 6} & 119 & 21 & 0.069  &  {\bf 10} & {\bf 6} & {\bf 20} & {\bf 4} & 29.482  \\ \hline
        (5,4) & 15 & 5 & 128 & 12 & 0.010  & 15 & 5 & 128 & 12 & {\bf Timeout} & 15 & 5 & 128 & 12 & {\bf 141.412}  \\ \hline
        (5,5) & 20 & 8 & 113 & 12 & 0.018  & 20 & 8 & 113 & 12 & 12539.946  & 20 & 8 & 113 & 12 & {\bf 8849.711}  \\ \hline
        (6,2) & 6 & 6 & 0 & 1 & 0.041  & 6 & 6 & 0 & 1 & 0.051  & 6 & 6 & 0 & 1 & 0.009  \\ \hline
        {\bf (6,3)} & 13 & 6 & 297 & 38 & 0.044  & {\bf 12} & {\bf 7} & 287 & 38 & 4.394  & {\bf 12} & {\bf 7} & {\bf 25} & {\bf 4} & 467.626  \\ \hline
        (6,4) & 18 & 6 & 272 & 21 & 0.038  & 18 & 6 & 272 & 21 & {\bf Timeout} & 18 & 6 & 272 & 21 & {\bf 14878.606}  \\ \hline
        {\bf (6,5)} & 25 & 8 & 244 & 21 & 0.035  & 25 & 8 & 244 & 21 & {\bf Timeout} & {\bf 24} & {\bf 8} & {\bf 136} & {\bf 6} & {\bf Timeout} \\ \hline
    \end{tabular}}
\end{table}



\subsection{Comparison between \tool, AGEFA$_{\tt g}$ and AGEFA$_{\tt bf}$}
We compare \tool with the best setting over AGEFA$_{\tt g}$ and AGEFA$_{\tt bf}$.
%
The results are reported in Table~\ref{tab:code-circuit},
including the number of individual inputs (\#In),
parity size $r$, number of gates (Area),
length of the longest path (Len) in the circuit,
and execution time in seconds. 

First, it is unsurprising that AGEFA$_{\tt g}$ is the most efficient one, because both \tool and AGEFA$_{\tt bf}$ are built upon AGEFA$_{\tt g}$.
We also observe that some code circuits produced by AGEFA$_{\tt g}$, are already (locally) optimal, i.e., the non-boldfaced pairs of $(k,d)$. Thus, the three tools give the same code circuits for these cases. 
But, their local optimality still should be proved by our tool \tool. 
 
For the other cases, both AGEFA$_{\tt g}$ and \tool can 
reduce individual inputs
and AGEFA$_{\tt g}$ can also produce code circuits
that are proved of optimal by \tool. 
However, AGEFA$_{\tt g}$ always has to increase the parity size $r$.
In contrast, \tool can reduce individual inputs
without increasing the parity size, e.g., for $(k,d)=(6,5)$,
on which in 24 hours, \tool finds a better code circuit than AGEFA$_{\tt g}$ and AGEFA$_{\tt bf}$.
\tool is also significantly more efficient than AGEFA$_{\tt bf}$ for hard instances, as 
highlighted in boldface. Moreover,
\tool produces smaller code circuit with lower latency,
attributed to fewer individual inputs, smaller parity size
and more importantly, balanced inputs across coordinate
functions.

We also compare the cryptographic circuits 
generated via an automated tool for error correction using the block cipher PRESENT-80~\cite{BogdanovKLPPRSV07}, 
where both the tool and original PRESENT-80 circuit are provided by~\cite{MullerM24}, while code circuits are provided by \tool and AGEFA$_{\tt g}$ and AGEFA$_{\tt bf}$, respectively. 

The results are reported in Table~\ref{tab:finalcircuits}, including 
the number of individual inputs (\#In) and parity size ($r$)
of the code circuit, the number of gates (Area) and 
length of the longest path (Len) of the final cryptographic circuit before and after applying
the Yosys's {\tt opt} macro command which runs a number of optimization passes. 

Recall that for the cases $(k,d)\in \{(3,5),(4,5),(5,5)\}$,
the code circuits produced by AGEFA$_{\tt g}$ are already optimal,
thus the corresponding cryptographic circuits are the same for three tools. For the other cases, we observe that \tool can significantly reduce the circuit size and latency no matter whether the cryptographic circuit is optimized by Yosys or not except for $(k,d)=(3,3)$.
For $(k,d)=(3,3)$,  while the code circuit produced by \tool
is smaller than that produced by AGEFA$_{\tt bf}$ with the same number of individual inputs and parity size, but their code circuits are different. It indicates that a smaller code circuit does not necessarily results in a smaller cryptographic circuit when some inputs are used multiple times, because their input signals must be individually generated  to satisfy the independence requirement. 
To resolve this issue, the sizes of the sub-circuits generating their input signals should be incorporated into the \synprob problem.
We leave it as interesting future work.
We finally remark that optimizing the entire cryptographic circuit 
may result in vulnerable circuits~\cite{AghaieMRSSS20}, thus 
their security should be proved by dedicated
formal verification tools~\cite{TGCSW23,TanYSCW24}.

\begin{table}[t]
    \centering\setlength{\tabcolsep}{2.5pt} 
    \renewcommand{\arraystretch}{1.1}
    \caption{Results of AGEFA$_{\tt g}$, AGEFA$_{\tt bf}$, and
    \tool for final cryptographic circuits.}
    \vspace{-2mm}
    \label{tab:finalcircuits}
  \scalebox{0.9}{  
  \begin{tabular}{|c||r|r|r|r|r|r||r|r|r|r|r|r||r|r|r|r|r|r|}
    \hline
    \multirow{3}*{($k,d$)} &  
    \multicolumn{6}{c||}{AGEFA$_{\tt g}$} & 
    \multicolumn{6}{c||}{AGEFA$_{\tt bf}$} &
    \multicolumn{6}{c|}{\tool} \\ \cline{2-19}
    & \multirow{2}*{\#In}  & \multirow{2}*{r}& \multicolumn{2}{c|}{Original} & \multicolumn{2}{c||}{Yosys-Opt}& \multirow{2}*{\#In}  & \multirow{2}*{r}& \multicolumn{2}{c|}{Original} & \multicolumn{2}{c||}{Yosys-Opt}& \multirow{2}*{\#In}  & \multirow{2}*{r}& \multicolumn{2}{c|}{Original} & \multicolumn{2}{c|}{Yosys-Opt} \\ \cline{4-7} \cline{10-13} \cline{16-19}
    & & & Area & Len  &  Area & Len 
    & & & Area & Len  &  Area & Len 
    & & & Area & Len  &  Area & Len  \\ \hline
    (3,3) & 7 & 3 & 38035  & 35  & 5536  & 35  & 
            6 & 4 & \textbf{20214}  & 29  & 4097  & 29  & 
            6 & 4 & 24439 & \textbf{27} & \textbf{4001} & \textbf{27} \\ \hline
   (3,5) & 12 & 7 & 131547  & 37  & 11012  & 37  & 
           12 & 7 & 131547 & 37 & 11012 & 37 & 
           12 & 7 & 131547 & 37 & 11012 & 37 \\ \hline
    (4,3) & 9 & 4 & 52984  & 50  & 7205  & 50  & 
            8 & 5 & 27106  & 38  & 4938  & 38  & 
            8 & 5 & \textbf{24827} & \textbf{27} & \textbf{4218} & \textbf{27} \\ \hline
    (4,5) & 16 & 7 & 256649  & 43  & 15932  & 43  & 
            16 & 7 & 256649 & 43 & 15932 & 43 & 
            16 & 7 & 256649 & 43 & 15932 & 43 \\ \hline
    (5,3) & 11 & 5 & 85905  & 78  & 10033  & 78  & 
            10 & 6 & 41483  & 56  & 6331  & 56  & 
            10 & 6 & \textbf{34847} & \textbf{28} & \textbf{4525} & \textbf{28} \\ \hline
        (5,5) & 20 & 8 & 464042  & 64  & 20381  & 64  & 
                20 & 8 & 464042  & 64  & 20381  & 64  & 
                20 & 8 & 464042  & 64  & 20381  & 64  \\ \hline
        (6,3) & 13 & 6 & 139530  & 130  & 15127  & 130  & 
                12 & 7 & 68804  & 90  & 8864  & 90  & 
                12 & 7 & \textbf{65133} & \textbf{33} & \textbf{5744} & \textbf{33} \\ \hline
        (6,5) & 25 & 8 & 545928 & 81 & 22989 & 81 & 
                25 & 8 & 545928 & 81 & 22989 & 81 & 
                \textbf{24} & 8 & \textbf{486484} & \textbf{47} & \textbf{20517} & \textbf{47} \\ \hline
    \end{tabular}}
\end{table}

\section{Related Work}\label{sec:relatedwork}

Fault injection attacks 
are typically mitigated by incorporating redundancy,
including  duplication redundancy that recomputes the output multiple times in parallel or consecutively (e.g.,~\cite{BarElCNTW06,Malkin2006ACC,guo2015security,KarriWMK02,DaemenDEGMP20} and information redundancy from coding theory for which both linear code (e.g.,~\cite{KarriKG03,BertoniBKMP03,OcheretnijKKG05,AzziBCV17,AghaieMRSSS20,ShahmirzadiR020,RasoolzadehS021,BreierKHL20})
and non-linear code (e.g.,~\cite{AkdemirWKS12,KulikowskiWK08})
have been adopted. 
In this work, we consider linear code due to their efficiency, effectiveness and popularity.

Since the design and implementation of optimal countermeasures are intricate and error-prone, 
various approaches were proposed to verify countermeasures or detect flaws
(e.g.,~\cite{nahiyan2016avfsm,saha2018expfault,arribas2020cryptographic,srivastava2020solomon,wang2021sofi,nasahl2022synfi,BDFGZ14,WangMGF18,RichterBrockmann2021FIVERR,TGCSW23,TanYSCW24,GaoSC24,TollecACHJ23}) or repair flaws (e.g.,~\cite{EldibWW16,wang2021sofi,RoyRHB20,roy2022avatar}).  
Prior repair approaches 
 either do not provide security guarantees (e.g.,~\cite{wang2021sofi,RoyRHB20}) or 
are limited to faults induced only by clock glitches~\cite{EldibWW16}). In contrast, our approach automatically synthesizes correct and secure code
circuits for the consolidated fault model~\cite{RichterBrockmann2023RevisitingFA,MullerM24}.

Logic optimization has been extensively studied for optimizing circuits (cf.~\cite{Tu2024,LaMeres2024,Jantsch2023,HungSYYP06}.
It typically transforms circuits
in logical and/or graphical representations (e.g., sum-of-product, product-of-sum, truth table, and directed acyclic graph) and then optimizes by applying transformations (e.g., Boolean algebra rules, Espresso algorithm and Quine-McCluskey algorithm).
Following~\cite{MullerM24},
we use the Quine-McCluskey algorithm to minimize sum-of-products.
However, optimizing code circuits solely is not sufficient in our setting due to domain-specific requirements, while optimizing the entire
cryptographic circuit 
may result in vulnerable circuits~\cite{AghaieMRSSS20}.

\hide{ 
\smallskip
\noindent{\bf SyGuS}.
As a unifying framework to express several synthesis problems, SyGuS has been widely studied in the past decade which are typically based
enumeration~\cite{AlurBJMRSSSTU13}, constraint-solving~\cite{Solar-Lezama13,BarbosaBBKLMMMN22}, and stochastic search~\cite{Schkufza0A16}.
Enumerative techniques generate candidates by increasing size and rely on input-output examples for pruning; constraint-solving techniques
reduce the problem to the solving of SMT constraints; and stochastic techniques randomly search the space for a solution.
Recently, large language models have also been utilized to solve the SyGuS problem~\cite{LiPP24}.
Preference of solutions was studied using, e.g., ranking functions~\cite{PolozovG15,Gulwani11} and
optimization objectives with a different specification mechanism or a restricted grammar than SyGuS~\cite{SinghGS13,BornholtTGC16,HuD18}.
In this work, we formulate \synprob, a domain-specific
synthesis problem and solve it by harnessing a (Q)SyGuS solver.
}

\hide{
{\bf Minimal Binary Linear Codes}~\cite{DingHZ18,ZhangYW19,LiY20,ZhangPRW22,LiPKZ23,MesnagerQCY23,DuDJZ22,ShaikhJRP25}

{\bf Minimal sum-of-products and product-of-sums (Boolean formulas in DNF and CNF)} Boolean Algebra is a simple and effective way of representing the switching action of standard logic gates and a set of rules or laws have been invented to help reduce the number of logic gates needed to perform a particular logical operation. Sum-of-Product (resp. product-of-sum) form is a Boolean Algebra expression in which different product (resp. sum) terms from inputs are summed  (resp. sum) together.}

\section{Conclusion}\label{sec:conclusion}
We have formulated \synprob, the optimal code circuit synthesis problem of linear codes
for error detection/correction
and proposed a novel algorithm \tool to solve the  problem, prioritizing the minimization
of individual inputs. 
Our algorithm provides not only correct-by-construction and secure-by-construction, but also local optimality guarantees distinguishing from the state-of-the-art tool AGEFA.  
We have implemented our approach in a tool
 and showcased its performance 
for generating optimal code circuits. 
Particularly, the circuits generated by \tool are  better than those generated
by AGEFA.

In future, we plan to improve the scalability and efficiency 
so that it can feature global optimality and 
broadens its applications that go beyond cryptographic circuits, e.g., reliable data transmission for resource-limited devices. 

\begin{acks}
 This study was partially supported by the National Cryptologic Science Foundation of China (Grant No. 2025NCSF01012).
\end{acks}

\bibliographystyle{ACM-Reference-Format}

\begin{thebibliography}{63}


\ifx \showCODEN    \undefined \def \showCODEN     #1{\unskip}     \fi
\ifx \showISBNx    \undefined \def \showISBNx     #1{\unskip}     \fi
\ifx \showISBNxiii \undefined \def \showISBNxiii  #1{\unskip}     \fi
\ifx \showISSN     \undefined \def \showISSN      #1{\unskip}     \fi
\ifx \showLCCN     \undefined \def \showLCCN      #1{\unskip}     \fi
\ifx \shownote     \undefined \def \shownote      #1{#1}          \fi
\ifx \showarticletitle \undefined \def \showarticletitle #1{#1}   \fi
\ifx \showURL      \undefined \def \showURL       {\relax}        \fi
\providecommand\bibfield[2]{#2}
\providecommand\bibinfo[2]{#2}
\providecommand\natexlab[1]{#1}
\providecommand\showeprint[2][]{arXiv:#2}

\bibitem[Aghaie et~al\mbox{.}(2020)]%
        {AghaieMRSSS20}
\bibfield{author}{\bibinfo{person}{Anita Aghaie}, \bibinfo{person}{Amir
  Moradi}, \bibinfo{person}{Shahram Rasoolzadeh}, \bibinfo{person}{Aein~Rezaei
  Shahmirzadi}, \bibinfo{person}{Falk Schellenberg}, {and}
  \bibinfo{person}{Tobias Schneider}.} \bibinfo{year}{2020}\natexlab{}.
\newblock \showarticletitle{Impeccable Circuits}.
\newblock \bibinfo{journal}{\emph{{IEEE} Trans. Computers}}
  \bibinfo{volume}{69}, \bibinfo{number}{3} (\bibinfo{year}{2020}),
  \bibinfo{pages}{361--376}.
\newblock
\href{https://doi.org/10.1109/TC.2019.2948617}{doi:\nolinkurl{10.1109/TC.2019.2948617}}


\bibitem[Agoyan et~al\mbox{.}(2010)]%
        {AgoyanDNRT10}
\bibfield{author}{\bibinfo{person}{Michel Agoyan}, \bibinfo{person}{Jean{-}Max
  Dutertre}, \bibinfo{person}{David Naccache}, \bibinfo{person}{Bruno
  Robisson}, {and} \bibinfo{person}{Assia Tria}.}
  \bibinfo{year}{2010}\natexlab{}.
\newblock \showarticletitle{When Clocks Fail: On Critical Paths and Clock
  Faults}. In \bibinfo{booktitle}{\emph{Proceedings of the 9th {IFIP} {WG}
  8.8/11.2 International Conference ({CARDIS})}}. \bibinfo{pages}{182--193}.
\newblock


\bibitem[Akdemir et~al\mbox{.}(2012)]%
        {AkdemirWKS12}
\bibfield{author}{\bibinfo{person}{Kahraman~D. Akdemir}, \bibinfo{person}{Zhen
  Wang}, \bibinfo{person}{Mark~G. Karpovsky}, {and} \bibinfo{person}{Berk
  Sunar}.} \bibinfo{year}{2012}\natexlab{}.
\newblock \showarticletitle{Design of Cryptographic Devices Resilient to Fault
  Injection Attacks Using Nonlinear Robust Codes}.
\newblock In \bibinfo{booktitle}{\emph{Fault Analysis in Cryptography}},
  \bibfield{editor}{\bibinfo{person}{Marc Joye} {and} \bibinfo{person}{Michael
  Tunstall}} (Eds.). \bibinfo{publisher}{Springer}, \bibinfo{pages}{171--199}.
\newblock
\href{https://doi.org/10.1007/978-3-642-29656-7\_11}{doi:\nolinkurl{10.1007/978-3-642-29656-7\_11}}


\bibitem[Arribas et~al\mbox{.}(2020)]%
        {arribas2020cryptographic}
\bibfield{author}{\bibinfo{person}{Victor Arribas}, \bibinfo{person}{Felix
  Wegener}, \bibinfo{person}{Amir Moradi}, {and} \bibinfo{person}{Svetla
  Nikova}.} \bibinfo{year}{2020}\natexlab{}.
\newblock \showarticletitle{Cryptographic fault diagnosis using VerFI}. In
  \bibinfo{booktitle}{\emph{Proceedings of the IEEE International Symposium on
  Hardware Oriented Security and Trust ({HOST})}}. \bibinfo{pages}{229--240}.
\newblock


\bibitem[Atzori et~al\mbox{.}(2010)]%
        {AtzoriIM10}
\bibfield{author}{\bibinfo{person}{Luigi Atzori}, \bibinfo{person}{Antonio
  Iera}, {and} \bibinfo{person}{Giacomo Morabito}.}
  \bibinfo{year}{2010}\natexlab{}.
\newblock \showarticletitle{The Internet of Things: {A} survey}.
\newblock \bibinfo{journal}{\emph{Comput. Networks}} \bibinfo{volume}{54},
  \bibinfo{number}{15} (\bibinfo{year}{2010}), \bibinfo{pages}{2787--2805}.
\newblock


\bibitem[Azzi et~al\mbox{.}(2017)]%
        {AzziBCV17}
\bibfield{author}{\bibinfo{person}{Sabine Azzi}, \bibinfo{person}{Bruno
  Barras}, \bibinfo{person}{Maria Christofi}, {and} \bibinfo{person}{David
  Vigilant}.} \bibinfo{year}{2017}\natexlab{}.
\newblock \showarticletitle{Using linear codes as a fault countermeasure for
  nonlinear operations: application to {AES} and formal verification}.
\newblock \bibinfo{journal}{\emph{J. Cryptogr. Eng.}} \bibinfo{volume}{7},
  \bibinfo{number}{1} (\bibinfo{year}{2017}), \bibinfo{pages}{75--85}.
\newblock
\href{https://doi.org/10.1007/S13389-016-0138-1}{doi:\nolinkurl{10.1007/S13389-016-0138-1}}


\bibitem[Bar{-}El et~al\mbox{.}(2006)]%
        {BarElCNTW06}
\bibfield{author}{\bibinfo{person}{Hagai Bar{-}El}, \bibinfo{person}{Hamid
  Choukri}, \bibinfo{person}{David Naccache}, \bibinfo{person}{Michael
  Tunstall}, {and} \bibinfo{person}{Claire Whelan}.}
  \bibinfo{year}{2006}\natexlab{}.
\newblock \showarticletitle{The Sorcerer's Apprentice Guide to Fault Attacks}.
\newblock \bibinfo{journal}{\emph{Proc. {IEEE}}} \bibinfo{volume}{94},
  \bibinfo{number}{2} (\bibinfo{year}{2006}), \bibinfo{pages}{370--382}.
\newblock
\href{https://doi.org/10.1109/JPROC.2005.862424}{doi:\nolinkurl{10.1109/JPROC.2005.862424}}


\bibitem[Barbosa et~al\mbox{.}(2022)]%
        {BarbosaBBKLMMMN22}
\bibfield{author}{\bibinfo{person}{Haniel Barbosa}, \bibinfo{person}{Clark~W.
  Barrett}, \bibinfo{person}{Martin Brain}, \bibinfo{person}{Gereon Kremer},
  \bibinfo{person}{Hanna Lachnitt}, \bibinfo{person}{Makai Mann},
  \bibinfo{person}{Abdalrhman Mohamed}, \bibinfo{person}{Mudathir Mohamed},
  \bibinfo{person}{Aina Niemetz}, \bibinfo{person}{Andres N{\"{o}}tzli},
  \bibinfo{person}{Alex Ozdemir}, \bibinfo{person}{Mathias Preiner},
  \bibinfo{person}{Andrew Reynolds}, \bibinfo{person}{Ying Sheng},
  \bibinfo{person}{Cesare Tinelli}, {and} \bibinfo{person}{Yoni Zohar}.}
  \bibinfo{year}{2022}\natexlab{}.
\newblock \showarticletitle{cvc5: {A} Versatile and Industrial-Strength {SMT}
  Solver}.
  \bibinfo{howpublished}{\url{https://github.com/abdoo8080/cvc5/tree/weights}}.
  In \bibinfo{booktitle}{\emph{Proceedings of the 28th International Conference
  on Tools and Algorithms for the Construction and Analysis of Systems
  ({TACAS})}}, Vol.~\bibinfo{volume}{13243}. \bibinfo{publisher}{Springer},
  \bibinfo{pages}{415--442}.
\newblock
\href{https://doi.org/10.1007/978-3-030-99524-9\_24}{doi:\nolinkurl{10.1007/978-3-030-99524-9\_24}}


\bibitem[Barthe et~al\mbox{.}(2014)]%
        {BDFGZ14}
\bibfield{author}{\bibinfo{person}{Gilles Barthe},
  \bibinfo{person}{Fran{\c{c}}ois Dupressoir}, \bibinfo{person}{Pierre{-}Alain
  Fouque}, \bibinfo{person}{Benjamin Gr{\'{e}}goire}, {and}
  \bibinfo{person}{Jean{-}Christophe Zapalowicz}.}
  \bibinfo{year}{2014}\natexlab{}.
\newblock \showarticletitle{Synthesis of Fault Attacks on Cryptographic
  Implementations}. In \bibinfo{booktitle}{\emph{Proceedings of the {ACM}
  {SIGSAC} Conference on Computer and Communications Security}}.
  \bibinfo{pages}{1016--1027}.
\newblock


\bibitem[Bartkewitz et~al\mbox{.}(2022)]%
        {BartkewitzBMMS22}
\bibfield{author}{\bibinfo{person}{Timo Bartkewitz}, \bibinfo{person}{Sven
  Bettendorf}, \bibinfo{person}{Thorben Moos}, \bibinfo{person}{Amir Moradi},
  {and} \bibinfo{person}{Falk Schellenberg}.} \bibinfo{year}{2022}\natexlab{}.
\newblock \showarticletitle{Beware of Insufficient Redundancy An Experimental
  Evaluation of Code-based {FI} Countermeasures}.
\newblock \bibinfo{journal}{\emph{{IACR} Trans. Cryptogr. Hardw. Embed. Syst.}}
  \bibinfo{volume}{2022}, \bibinfo{number}{3} (\bibinfo{year}{2022}),
  \bibinfo{pages}{438--462}.
\newblock
\href{https://doi.org/10.46586/TCHES.V2022.I3.438-462}{doi:\nolinkurl{10.46586/TCHES.V2022.I3.438-462}}


\bibitem[Bertoni et~al\mbox{.}(2003)]%
        {BertoniBKMP03}
\bibfield{author}{\bibinfo{person}{Guido Bertoni}, \bibinfo{person}{Luca
  Breveglieri}, \bibinfo{person}{Israel Koren}, \bibinfo{person}{Paolo
  Maistri}, {and} \bibinfo{person}{Vincenzo Piuri}.}
  \bibinfo{year}{2003}\natexlab{}.
\newblock \showarticletitle{Error Analysis and Detection Procedures for a
  Hardware Implementation of the Advanced Encryption Standard}.
\newblock \bibinfo{journal}{\emph{{IEEE} Trans. Computers}}
  \bibinfo{volume}{52}, \bibinfo{number}{4} (\bibinfo{year}{2003}),
  \bibinfo{pages}{492--505}.
\newblock
\href{https://doi.org/10.1109/TC.2003.1190590}{doi:\nolinkurl{10.1109/TC.2003.1190590}}


\bibitem[Biham and Shamir(1997)]%
        {biham1997differential}
\bibfield{author}{\bibinfo{person}{Eli Biham} {and} \bibinfo{person}{Adi
  Shamir}.} \bibinfo{year}{1997}\natexlab{}.
\newblock \showarticletitle{Differential fault analysis of secret key
  cryptosystems}. In \bibinfo{booktitle}{\emph{Proceedings of the 17th Annual
  International Cryptology Conference Santa Barbara {(CRYPTO)}}}.
  \bibinfo{pages}{513--525}.
\newblock


\bibitem[Blahut(2003)]%
        {Blahut2003}
\bibfield{author}{\bibinfo{person}{Richard~E. Blahut}.}
  \bibinfo{year}{2003}\natexlab{}.
\newblock \bibinfo{booktitle}{\emph{Algebraic Codes for Data Transmission}}.
\newblock \bibinfo{publisher}{Cambridge University Press}.
\newblock


\bibitem[Bogdanov et~al\mbox{.}(2007)]%
        {BogdanovKLPPRSV07}
\bibfield{author}{\bibinfo{person}{Andrey Bogdanov}, \bibinfo{person}{Lars~R.
  Knudsen}, \bibinfo{person}{Gregor Leander}, \bibinfo{person}{Christof Paar},
  \bibinfo{person}{Axel Poschmann}, \bibinfo{person}{Matthew J.~B. Robshaw},
  \bibinfo{person}{Yannick Seurin}, {and} \bibinfo{person}{C. Vikkelsoe}.}
  \bibinfo{year}{2007}\natexlab{}.
\newblock \showarticletitle{{PRESENT:} An Ultra-Lightweight Block Cipher}. In
  \bibinfo{booktitle}{\emph{Proceedings of the 9th International Workshop on
  Cryptographic Hardware and Embedded Systems ({CHES}}},
  \bibfield{editor}{\bibinfo{person}{Pascal Paillier} {and}
  \bibinfo{person}{Ingrid Verbauwhede}} (Eds.), Vol.~\bibinfo{volume}{4727}.
  \bibinfo{pages}{450--466}.
\newblock
\href{https://doi.org/10.1007/978-3-540-74735-2\_31}{doi:\nolinkurl{10.1007/978-3-540-74735-2\_31}}


\bibitem[Boneh et~al\mbox{.}(1997)]%
        {BonehDL97}
\bibfield{author}{\bibinfo{person}{Dan Boneh}, \bibinfo{person}{Richard~A.
  DeMillo}, {and} \bibinfo{person}{Richard~J. Lipton}.}
  \bibinfo{year}{1997}\natexlab{}.
\newblock \showarticletitle{On the Importance of Checking Cryptographic
  Protocols for Faults (Extended Abstract)}. In
  \bibinfo{booktitle}{\emph{Proceeding of the International Conference on the
  Theory and Application of Cryptographic Techniques ({EUROCRYPT})}}
  \emph{(\bibinfo{series}{Lecture Notes in Computer Science},
  Vol.~\bibinfo{volume}{1233})}, \bibfield{editor}{\bibinfo{person}{Walter
  Fumy}} (Ed.). \bibinfo{publisher}{Springer}, \bibinfo{pages}{37--51}.
\newblock
\href{https://doi.org/10.1007/3-540-69053-0\_4}{doi:\nolinkurl{10.1007/3-540-69053-0\_4}}


\bibitem[Breier et~al\mbox{.}(2020)]%
        {BreierKHL20}
\bibfield{author}{\bibinfo{person}{Jakub Breier}, \bibinfo{person}{Mustafa
  Khairallah}, \bibinfo{person}{Xiaolu Hou}, {and} \bibinfo{person}{Yang Liu}.}
  \bibinfo{year}{2020}\natexlab{}.
\newblock \showarticletitle{A Countermeasure Against Statistical Ineffective
  Fault Analysis}.
\newblock \bibinfo{journal}{\emph{{IEEE} Trans. Circuits Syst.}}
  \bibinfo{volume}{67-II}, \bibinfo{number}{12} (\bibinfo{year}{2020}),
  \bibinfo{pages}{3322--3326}.
\newblock
\href{https://doi.org/10.1109/TCSII.2020.2989184}{doi:\nolinkurl{10.1109/TCSII.2020.2989184}}


\bibitem[Clarke et~al\mbox{.}(1998)]%
        {ClarkeEJS98}
\bibfield{author}{\bibinfo{person}{Edmund~M. Clarke}, \bibinfo{person}{E.~Allen
  Emerson}, \bibinfo{person}{Somesh Jha}, {and} \bibinfo{person}{A.~Prasad
  Sistla}.} \bibinfo{year}{1998}\natexlab{}.
\newblock \showarticletitle{Symmetry Reductions in Model Checking}. In
  \bibinfo{booktitle}{\emph{Proceedings of the 10th International Conference on
  Computer Aided Verification ({CAV})}},
  \bibfield{editor}{\bibinfo{person}{Alan~J. Hu} {and}
  \bibinfo{person}{Moshe~Y. Vardi}} (Eds.), Vol.~\bibinfo{volume}{1427}.
  \bibinfo{pages}{147--158}.
\newblock
\href{https://doi.org/10.1007/BFB0028741}{doi:\nolinkurl{10.1007/BFB0028741}}


\bibitem[Clavier(2007)]%
        {clavier2007secret}
\bibfield{author}{\bibinfo{person}{Christophe Clavier}.}
  \bibinfo{year}{2007}\natexlab{}.
\newblock \showarticletitle{Secret external encodings do not prevent transient
  fault analysis}. In \bibinfo{booktitle}{\emph{Proceedings of the 9th
  International Workshop on Cryptographic Hardware and Embedded Systems
  {(CHES)}}}. \bibinfo{pages}{181--194}.
\newblock


\bibitem[Daemen et~al\mbox{.}(2020)]%
        {DaemenDEGMP20}
\bibfield{author}{\bibinfo{person}{Joan Daemen}, \bibinfo{person}{Christoph
  Dobraunig}, \bibinfo{person}{Maria Eichlseder}, \bibinfo{person}{Hannes
  Gro{\ss}}, \bibinfo{person}{Florian Mendel}, {and} \bibinfo{person}{Robert
  Primas}.} \bibinfo{year}{2020}\natexlab{}.
\newblock \showarticletitle{Protecting against Statistical Ineffective Fault
  Attacks}.
\newblock \bibinfo{journal}{\emph{{IACR} Trans. Cryptogr. Hardw. Embed. Syst.}}
  \bibinfo{volume}{2020}, \bibinfo{number}{3} (\bibinfo{year}{2020}),
  \bibinfo{pages}{508--543}.
\newblock
\href{https://doi.org/10.13154/TCHES.V2020.I3.508-543}{doi:\nolinkurl{10.13154/TCHES.V2020.I3.508-543}}


\bibitem[Dehbaoui et~al\mbox{.}(2012)]%
        {DehbaouiDRT12}
\bibfield{author}{\bibinfo{person}{Amine Dehbaoui}, \bibinfo{person}{Jean{-}Max
  Dutertre}, \bibinfo{person}{Bruno Robisson}, {and} \bibinfo{person}{Assia
  Tria}.} \bibinfo{year}{2012}\natexlab{}.
\newblock \showarticletitle{Electromagnetic Transient Faults Injection on a
  Hardware and a Software Implementations of {AES}}. In
  \bibinfo{booktitle}{\emph{Proceedings of the Workshop on Fault Diagnosis and
  Tolerance in Cryptography ({FDTC})}}. \bibinfo{pages}{7--15}.
\newblock


\bibitem[Dobraunig et~al\mbox{.}(2018)]%
        {dobraunig2018statistical}
\bibfield{author}{\bibinfo{person}{Christoph Dobraunig}, \bibinfo{person}{Maria
  Eichlseder}, \bibinfo{person}{Hannes Gro{\ss}}, \bibinfo{person}{Stefan
  Mangard}, \bibinfo{person}{Florian Mendel}, {and} \bibinfo{person}{Robert
  Primas}.} \bibinfo{year}{2018}\natexlab{}.
\newblock \showarticletitle{Statistical ineffective fault attacks on masked
  {AES} with fault countermeasures}. In \bibinfo{booktitle}{\emph{Proceedings
  of the 24th International Conference on the Theory and Application of
  Cryptology and Information Security ((ASIACRYPT))}}.
  \bibinfo{pages}{315--342}.
\newblock


\bibitem[Eldib et~al\mbox{.}(2016)]%
        {EldibWW16}
\bibfield{author}{\bibinfo{person}{Hassan Eldib}, \bibinfo{person}{Meng Wu},
  {and} \bibinfo{person}{Chao Wang}.} \bibinfo{year}{2016}\natexlab{}.
\newblock \showarticletitle{Synthesis of Fault-Attack Countermeasures for
  Cryptographic Circuits}. In \bibinfo{booktitle}{\emph{Proceedings of the 28th
  International Conference on Computer Aided Verification ({CAV})}}.
  \bibinfo{pages}{343--363}.
\newblock


\bibitem[Frandsen and Miltersen(2005)]%
        {FrandsenM05}
\bibfield{author}{\bibinfo{person}{Gudmund~Skovbjerg Frandsen} {and}
  \bibinfo{person}{Peter~Bro Miltersen}.} \bibinfo{year}{2005}\natexlab{}.
\newblock \showarticletitle{Reviewing bounds on the circuit size of the hardest
  functions}.
\newblock \bibinfo{journal}{\emph{Inf. Process. Lett.}} \bibinfo{volume}{95},
  \bibinfo{number}{2} (\bibinfo{year}{2005}), \bibinfo{pages}{354--357}.
\newblock
\href{https://doi.org/10.1016/J.IPL.2005.03.009}{doi:\nolinkurl{10.1016/J.IPL.2005.03.009}}


\bibitem[Fuhr et~al\mbox{.}(2013)]%
        {fuhr2013fault}
\bibfield{author}{\bibinfo{person}{Thomas Fuhr}, \bibinfo{person}{{\'E}liane
  Jaulmes}, \bibinfo{person}{Victor Lomn{\'e}}, {and} \bibinfo{person}{Adrian
  Thillard}.} \bibinfo{year}{2013}\natexlab{}.
\newblock \showarticletitle{Fault attacks on {AES} with faulty ciphertexts
  only}. In \bibinfo{booktitle}{\emph{Proceedings of the Workshop on Fault
  Diagnosis and Tolerance in Cryptography}}. \bibinfo{pages}{108--118}.
\newblock


\bibitem[Gao et~al\mbox{.}(2024)]%
        {GaoSC24}
\bibfield{author}{\bibinfo{person}{Pengfei Gao}, \bibinfo{person}{Fu Song},
  {and} \bibinfo{person}{Taolue Chen}.} \bibinfo{year}{2024}\natexlab{}.
\newblock \showarticletitle{Compositional Verification of First-Order Masking
  Countermeasures against Power Side-Channel Attacks}.
\newblock \bibinfo{journal}{\emph{{ACM} Trans. Softw. Eng. Methodol.}}
  \bibinfo{volume}{33}, \bibinfo{number}{3} (\bibinfo{year}{2024}),
  \bibinfo{pages}{79:1--79:38}.
\newblock


\bibitem[Ghalaty et~al\mbox{.}(2014)]%
        {GhalatyYTS14}
\bibfield{author}{\bibinfo{person}{Nahid~Farhady Ghalaty},
  \bibinfo{person}{Bilgiday Yuce}, \bibinfo{person}{Mostafa Taha}, {and}
  \bibinfo{person}{Patrick Schaumont}.} \bibinfo{year}{2014}\natexlab{}.
\newblock \showarticletitle{Differential Fault Intensity Analysis}. In
  \bibinfo{booktitle}{\emph{Proceeding of the Workshop on Fault Diagnosis and
  Tolerance in Cryptography}}. \bibinfo{pages}{49--58}.
\newblock
\href{https://doi.org/10.1109/FDTC.2014.15}{doi:\nolinkurl{10.1109/FDTC.2014.15}}


\bibitem[Guo et~al\mbox{.}(2015)]%
        {guo2015security}
\bibfield{author}{\bibinfo{person}{Xiaofei Guo}, \bibinfo{person}{Debdeep
  Mukhopadhyay}, \bibinfo{person}{Chenglu Jin}, {and} \bibinfo{person}{Ramesh
  Karri}.} \bibinfo{year}{2015}\natexlab{}.
\newblock \showarticletitle{Security analysis of concurrent error detection
  against differential fault analysis}.
\newblock \bibinfo{journal}{\emph{J. Cryptogr. Eng.}} \bibinfo{volume}{5},
  \bibinfo{number}{3} (\bibinfo{year}{2015}), \bibinfo{pages}{153--169}.
\newblock
\href{https://doi.org/10.1007/S13389-014-0092-8}{doi:\nolinkurl{10.1007/S13389-014-0092-8}}


\bibitem[Hung et~al\mbox{.}(2006)]%
        {HungSYYP06}
\bibfield{author}{\bibinfo{person}{William N.~N. Hung}, \bibinfo{person}{Xiaoyu
  Song}, \bibinfo{person}{Guowu Yang}, \bibinfo{person}{Jin Yang}, {and}
  \bibinfo{person}{Marek~A. Perkowski}.} \bibinfo{year}{2006}\natexlab{}.
\newblock \showarticletitle{Optimal synthesis of multiple output Boolean
  functions using a set of quantum gates by symbolic reachability analysis}.
\newblock \bibinfo{journal}{\emph{{IEEE} Trans. Comput. Aided Des. Integr.
  Circuits Syst.}} \bibinfo{volume}{25}, \bibinfo{number}{9}
  (\bibinfo{year}{2006}), \bibinfo{pages}{1652--1663}.
\newblock
\href{https://doi.org/10.1109/TCAD.2005.858352}{doi:\nolinkurl{10.1109/TCAD.2005.858352}}


\bibitem[Jantsch(2023)]%
        {Jantsch2023}
\bibfield{author}{\bibinfo{person}{Axel Jantsch}.}
  \bibinfo{year}{2023}\natexlab{}.
\newblock \bibinfo{booktitle}{\emph{Logic Optimization}}.
\newblock \bibinfo{publisher}{Springer Nature Switzerland},
  \bibinfo{address}{Cham}, \bibinfo{pages}{179--225}.
\newblock


\bibitem[Karri et~al\mbox{.}(2003)]%
        {KarriKG03}
\bibfield{author}{\bibinfo{person}{Ramesh Karri}, \bibinfo{person}{Grigori
  Kuznetsov}, {and} \bibinfo{person}{Michael G{\"{o}}ssel}.}
  \bibinfo{year}{2003}\natexlab{}.
\newblock \showarticletitle{Parity-Based Concurrent Error Detection of
  Substitution-Permutation Network Block Ciphers}. In
  \bibinfo{booktitle}{\emph{Proceedings of the 5th International Workshop on
  Cryptographic Hardware and Embedded Systems ({CHES})}}.
  \bibinfo{pages}{113--124}.
\newblock


\bibitem[Karri et~al\mbox{.}(2002)]%
        {KarriWMK02}
\bibfield{author}{\bibinfo{person}{Ramesh Karri}, \bibinfo{person}{Kaijie Wu},
  \bibinfo{person}{Piyush Mishra}, {and} \bibinfo{person}{Yongkook Kim}.}
  \bibinfo{year}{2002}\natexlab{}.
\newblock \showarticletitle{Concurrent error detection schemes for fault-based
  side-channel cryptanalysis of symmetric block ciphers}.
\newblock \bibinfo{journal}{\emph{{IEEE} Trans. Comput. Aided Des. Integr.
  Circuits Syst.}} \bibinfo{volume}{21}, \bibinfo{number}{12}
  (\bibinfo{year}{2002}), \bibinfo{pages}{1509--1517}.
\newblock
\href{https://doi.org/10.1109/TCAD.2002.804378}{doi:\nolinkurl{10.1109/TCAD.2002.804378}}


\bibitem[Kulikowski et~al\mbox{.}(2008)]%
        {KulikowskiWK08}
\bibfield{author}{\bibinfo{person}{Konrad~J. Kulikowski}, \bibinfo{person}{Zhen
  Wang}, {and} \bibinfo{person}{Mark~G. Karpovsky}.}
  \bibinfo{year}{2008}\natexlab{}.
\newblock \showarticletitle{Comparative Analysis of Robust Fault Attack
  Resistant Architectures for Public and Private Cryptosystems}. In
  \bibinfo{booktitle}{\emph{Proceedings of the 5th International Workshop on
  Fault Diagnosis and Tolerance in Cryptography ({FDTC}}},
  \bibfield{editor}{\bibinfo{person}{Luca Breveglieri}, \bibinfo{person}{Shay
  Gueron}, \bibinfo{person}{Israel Koren}, \bibinfo{person}{David Naccache},
  {and} \bibinfo{person}{Jean{-}Pierre Seifert}} (Eds.).
  \bibinfo{publisher}{{IEEE} Computer Society}, \bibinfo{pages}{41--50}.
\newblock
\href{https://doi.org/10.1109/FDTC.2008.13}{doi:\nolinkurl{10.1109/FDTC.2008.13}}


\bibitem[LaMeres(2024)]%
        {LaMeres2024}
\bibfield{author}{\bibinfo{person}{Brock~J. LaMeres}.}
  \bibinfo{year}{2024}\natexlab{}.
\newblock \bibinfo{booktitle}{\emph{Combinational Logic Design}}.
\newblock \bibinfo{publisher}{Springer International Publishing},
  \bibinfo{address}{Cham}, \bibinfo{pages}{93--154}.
\newblock
\showISBNx{978-3-031-42547-9}
\href{https://doi.org/10.1007/978-3-031-42547-9_4}{doi:\nolinkurl{10.1007/978-3-031-42547-9_4}}


\bibitem[Li et~al\mbox{.}(2010)]%
        {LiSGFTO10}
\bibfield{author}{\bibinfo{person}{Yang Li}, \bibinfo{person}{Kazuo Sakiyama},
  \bibinfo{person}{Shigeto Gomisawa}, \bibinfo{person}{Toshinori Fukunaga},
  \bibinfo{person}{Junko Takahashi}, {and} \bibinfo{person}{Kazuo Ohta}.}
  \bibinfo{year}{2010}\natexlab{}.
\newblock \showarticletitle{Fault Sensitivity Analysis}. In
  \bibinfo{booktitle}{\emph{Proceedings of the 12th International Workshop on
  Cryptographic Hardware and Embedded Systems ({CHES})}},
  \bibfield{editor}{\bibinfo{person}{Stefan Mangard} {and}
  \bibinfo{person}{Fran{\c{c}}ois{-}Xavier Standaert}} (Eds.),
  Vol.~\bibinfo{volume}{6225}. \bibinfo{publisher}{Springer},
  \bibinfo{pages}{320--334}.
\newblock
\href{https://doi.org/10.1007/978-3-642-15031-9\_22}{doi:\nolinkurl{10.1007/978-3-642-15031-9\_22}}


\bibitem[Lupanov(1958)]%
        {Lupanov58}
\bibfield{author}{\bibinfo{person}{O.B. Lupanov}.}
  \bibinfo{year}{1958}\natexlab{}.
\newblock \showarticletitle{The synthesis of contact circuits}.
\newblock \bibinfo{journal}{\emph{Dokl. Akad. Nauk SSSR}}
  \bibinfo{volume}{119}, \bibinfo{number}{1} (\bibinfo{year}{1958}),
  \bibinfo{pages}{23--26}.
\newblock


\bibitem[MacWilliams and Sloane(1977)]%
        {MacWilliamsS77}
\bibfield{author}{\bibinfo{person}{F.~Jessie MacWilliams} {and}
  \bibinfo{person}{Neil J.~A. Sloane}.} \bibinfo{year}{1977}\natexlab{}.
\newblock \bibinfo{booktitle}{\emph{The Theory of Error-Correcting Codes}}.
\newblock \bibinfo{publisher}{North-Holland Pub. Co.}
\newblock


\bibitem[Malkin et~al\mbox{.}(2006)]%
        {Malkin2006ACC}
\bibfield{author}{\bibinfo{person}{Tal Malkin},
  \bibinfo{person}{Fran{\c{c}}ois{-}Xavier Standaert}, {and}
  \bibinfo{person}{Moti Yung}.} \bibinfo{year}{2006}\natexlab{}.
\newblock \showarticletitle{A Comparative Cost/Security Analysis of Fault
  Attack Countermeasures}. In \bibinfo{booktitle}{\emph{Proceedings of the 3rd
  International Workshop on Fault Diagnosis and Tolerance in Cryptography}}.
  \bibinfo{pages}{159--172}.
\newblock


\bibitem[McCluskey(1956)]%
        {mccluskey1956}
\bibfield{author}{\bibinfo{person}{Edward~J McCluskey}.}
  \bibinfo{year}{1956}\natexlab{}.
\newblock \showarticletitle{Minimization of Boolean functions}.
\newblock \bibinfo{journal}{\emph{The Bell System Technical Journal}}
  \bibinfo{volume}{35}, \bibinfo{number}{6} (\bibinfo{year}{1956}),
  \bibinfo{pages}{1417--1444}.
\newblock


\bibitem[M{\"{u}}ller and Moradi(2024)]%
        {MullerM24}
\bibfield{author}{\bibinfo{person}{Nicolai M{\"{u}}ller} {and}
  \bibinfo{person}{Amir Moradi}.} \bibinfo{year}{2024}\natexlab{}.
\newblock \showarticletitle{Automated Generation of Fault-Resistant Circuits}.
\newblock \bibinfo{journal}{\emph{{IACR} Trans. Cryptogr. Hardw. Embed. Syst.}}
  \bibinfo{volume}{2024}, \bibinfo{number}{3} (\bibinfo{year}{2024}),
  \bibinfo{pages}{136--173}.
\newblock
\href{https://doi.org/10.46586/TCHES.V2024.I3.136-173}{doi:\nolinkurl{10.46586/TCHES.V2024.I3.136-173}}


\bibitem[Nahiyan et~al\mbox{.}(2016)]%
        {nahiyan2016avfsm}
\bibfield{author}{\bibinfo{person}{Adib Nahiyan}, \bibinfo{person}{Kan Xiao},
  \bibinfo{person}{Kun Yang}, \bibinfo{person}{Yeir Jin},
  \bibinfo{person}{Domenic Forte}, {and} \bibinfo{person}{Mark Tehranipoor}.}
  \bibinfo{year}{2016}\natexlab{}.
\newblock \showarticletitle{AVFSM: A framework for identifying and mitigating
  vulnerabilities in FSMs}. In \bibinfo{booktitle}{\emph{Proceedings of the
  53rd Annual Design Automation Conference}}. \bibinfo{pages}{1--6}.
\newblock


\bibitem[Nasahl et~al\mbox{.}(2022)]%
        {nasahl2022synfi}
\bibfield{author}{\bibinfo{person}{Pascal Nasahl}, \bibinfo{person}{Miguel
  Osorio}, \bibinfo{person}{Pirmin Vogel}, \bibinfo{person}{Michael Schaffner},
  \bibinfo{person}{Timothy Trippel}, \bibinfo{person}{Dominic Rizzo}, {and}
  \bibinfo{person}{Stefan Mangard}.} \bibinfo{year}{2022}\natexlab{}.
\newblock \showarticletitle{{SYNFI:} Pre-Silicon Fault Analysis of an
  Open-Source Secure Element}.
\newblock \bibinfo{journal}{\emph{{IACR} Transactions on Cryptographic Hardware
  and Embedded Systems}} \bibinfo{volume}{2022}, \bibinfo{number}{4}
  (\bibinfo{year}{2022}), \bibinfo{pages}{56--87}.
\newblock


\bibitem[Ocheretnij et~al\mbox{.}(2005)]%
        {OcheretnijKKG05}
\bibfield{author}{\bibinfo{person}{Vitalij Ocheretnij}, \bibinfo{person}{G.
  Kouznetsov}, \bibinfo{person}{Ramesh Karri}, {and} \bibinfo{person}{Michael
  G{\"{o}}ssel}.} \bibinfo{year}{2005}\natexlab{}.
\newblock \showarticletitle{On-Line Error Detection and {BIST} for the {AES}
  Encryption Algorithm with Different S-Box Implementations}. In
  \bibinfo{booktitle}{\emph{Proceedings of the {IEEE} International On-Line
  Testing Symposium {(IOLTS})}}. \bibinfo{publisher}{{IEEE} Computer Society},
  \bibinfo{pages}{141--146}.
\newblock
\href{https://doi.org/10.1109/IOLTS.2005.51}{doi:\nolinkurl{10.1109/IOLTS.2005.51}}


\bibitem[Quine(1952)]%
        {quine1952}
\bibfield{author}{\bibinfo{person}{Willard~V Quine}.}
  \bibinfo{year}{1952}\natexlab{}.
\newblock \showarticletitle{The problem of simplifying truth functions}.
\newblock \bibinfo{journal}{\emph{The American mathematical monthly}}
  \bibinfo{volume}{59}, \bibinfo{number}{8} (\bibinfo{year}{1952}),
  \bibinfo{pages}{521--531}.
\newblock


\bibitem[Rasoolzadeh et~al\mbox{.}(2021)]%
        {RasoolzadehS021}
\bibfield{author}{\bibinfo{person}{Shahram Rasoolzadeh},
  \bibinfo{person}{Aein~Rezaei Shahmirzadi}, {and} \bibinfo{person}{Amir
  Moradi}.} \bibinfo{year}{2021}\natexlab{}.
\newblock \showarticletitle{Impeccable Circuits {III}}. In
  \bibinfo{booktitle}{\emph{Proceedings of the {IEEE} International Test
  Conference ({ITC})}}. \bibinfo{publisher}{{IEEE}}, \bibinfo{pages}{163--169}.
\newblock
\href{https://doi.org/10.1109/ITC50571.2021.00024}{doi:\nolinkurl{10.1109/ITC50571.2021.00024}}


\bibitem[Richter-Brockmann et~al\mbox{.}(2023)]%
        {RichterBrockmann2023RevisitingFA}
\bibfield{author}{\bibinfo{person}{Jan Richter-Brockmann},
  \bibinfo{person}{Pascal Sasdrich}, {and} \bibinfo{person}{Tim G{\"u}neysu}.}
  \bibinfo{year}{2023}\natexlab{}.
\newblock \showarticletitle{Revisiting Fault Adversary Models - Hardware Faults
  in Theory and Practice}.
\newblock \bibinfo{journal}{\emph{IEEE Trans. Comput.}}  \bibinfo{volume}{72}
  (\bibinfo{year}{2023}), \bibinfo{pages}{572--585}.
\newblock


\bibitem[Richter-Brockmann et~al\mbox{.}(2021)]%
        {RichterBrockmann2021FIVERR}
\bibfield{author}{\bibinfo{person}{Jan Richter-Brockmann},
  \bibinfo{person}{Aein~Rezaei Shahmirzadi}, \bibinfo{person}{Pascal Sasdrich},
  \bibinfo{person}{Amir Moradi}, {and} \bibinfo{person}{Tim G{\"u}neysu}.}
  \bibinfo{year}{2021}\natexlab{}.
\newblock \showarticletitle{FIVER - Robust Verification of Countermeasures
  against Fault Injections}.
\newblock \bibinfo{journal}{\emph{{IACR} Transactions on Cryptographic Hardware
  and Embedded Systems}}  \bibinfo{volume}{2021} (\bibinfo{year}{2021}),
  \bibinfo{pages}{447--473}.
\newblock


\bibitem[Roy et~al\mbox{.}(2020)]%
        {RoyRHB20}
\bibfield{author}{\bibinfo{person}{Indrani Roy}, \bibinfo{person}{Chester
  Rebeiro}, \bibinfo{person}{Aritra Hazra}, {and} \bibinfo{person}{Swarup
  Bhunia}.} \bibinfo{year}{2020}\natexlab{}.
\newblock \showarticletitle{{SAFARI:} Automatic Synthesis of Fault-Attack
  Resistant Block Cipher Implementations}.
\newblock \bibinfo{journal}{\emph{{IEEE} Trans. Comput. Aided Des. Integr.
  Circuits Syst.}} \bibinfo{volume}{39}, \bibinfo{number}{4}
  (\bibinfo{year}{2020}), \bibinfo{pages}{752--765}.
\newblock


\bibitem[Roy et~al\mbox{.}(2022)]%
        {roy2022avatar}
\bibfield{author}{\bibinfo{person}{Prithwish~Basu Roy},
  \bibinfo{person}{Patanjali Slpsk}, {and} \bibinfo{person}{Chester Rebeiro}.}
  \bibinfo{year}{2022}\natexlab{}.
\newblock \showarticletitle{Avatar: Reinforcing fault attack countermeasures in
  EDA with fault transformations}. In \bibinfo{booktitle}{\emph{2022 27th Asia
  and South Pacific Design Automation Conference (ASP-DAC)}}. IEEE,
  \bibinfo{pages}{417--422}.
\newblock


\bibitem[Saha et~al\mbox{.}(2018)]%
        {saha2018expfault}
\bibfield{author}{\bibinfo{person}{Sayandeep Saha}, \bibinfo{person}{Debdeep
  Mukhopadhyay}, {and} \bibinfo{person}{Pallab Dasgupta}.}
  \bibinfo{year}{2018}\natexlab{}.
\newblock \showarticletitle{ExpFault: An Automated Framework for Exploitable
  Fault Characterization in Block Ciphers}.
\newblock \bibinfo{journal}{\emph{{IACR} Transactions on Cryptographic Hardware
  and Embedded Systems}} \bibinfo{volume}{2018}, \bibinfo{number}{2}
  (\bibinfo{year}{2018}), \bibinfo{pages}{242--276}.
\newblock


\bibitem[Selmane et~al\mbox{.}(2008)]%
        {elmaneGD08}
\bibfield{author}{\bibinfo{person}{Nidhal Selmane}, \bibinfo{person}{Sylvain
  Guilley}, {and} \bibinfo{person}{Jean{-}Luc Danger}.}
  \bibinfo{year}{2008}\natexlab{}.
\newblock \showarticletitle{Practical Setup Time Violation Attacks on {AES}}.
  In \bibinfo{booktitle}{\emph{Proceedings of the 7th European Dependable
  Computing Conference ({EDCC})}}. \bibinfo{pages}{91--96}.
\newblock


\bibitem[Shahmirzadi et~al\mbox{.}(2020)]%
        {ShahmirzadiR020}
\bibfield{author}{\bibinfo{person}{Aein~Rezaei Shahmirzadi},
  \bibinfo{person}{Shahram Rasoolzadeh}, {and} \bibinfo{person}{Amir Moradi}.}
  \bibinfo{year}{2020}\natexlab{}.
\newblock \showarticletitle{Impeccable Circuits {II}}. In
  \bibinfo{booktitle}{\emph{Proceedings of the {ACM/IEEE} Design Automation
  Conference ({DAC})}}. \bibinfo{publisher}{{IEEE}}, \bibinfo{pages}{1--6}.
\newblock
\href{https://doi.org/10.1109/DAC18072.2020.9218615}{doi:\nolinkurl{10.1109/DAC18072.2020.9218615}}


\bibitem[Skorobogatov and Anderson(2003)]%
        {skorobogatov2003optical}
\bibfield{author}{\bibinfo{person}{Sergei~P Skorobogatov} {and}
  \bibinfo{person}{Ross~J Anderson}.} \bibinfo{year}{2003}\natexlab{}.
\newblock \showarticletitle{Optical fault induction attacks}. In
  \bibinfo{booktitle}{\emph{Proceedings of the 4th International Workshop
  Redwood Shores on Cryptographic Hardware and Embedded Systems ({CHES})}}.
  \bibinfo{pages}{2--12}.
\newblock


\bibitem[Srivastava et~al\mbox{.}(2020)]%
        {srivastava2020solomon}
\bibfield{author}{\bibinfo{person}{Milind Srivastava},
  \bibinfo{person}{Patanjali Slpsk}, \bibinfo{person}{Indrani Roy},
  \bibinfo{person}{Chester Rebeiro}, \bibinfo{person}{Aritra Hazra}, {and}
  \bibinfo{person}{Swarup Bhunia}.} \bibinfo{year}{2020}\natexlab{}.
\newblock \showarticletitle{SOLOMON: An automated framework for detecting fault
  attack vulnerabilities in hardware}. In \bibinfo{booktitle}{\emph{Proceedings
  of the Design, Automation \& Test in Europe Conference \& Exhibition
  ({DATE})}}. \bibinfo{pages}{310--313}.
\newblock


\bibitem[Tan et~al\mbox{.}(2023)]%
        {TGCSW23}
\bibfield{author}{\bibinfo{person}{Huiyu Tan}, \bibinfo{person}{Pengfei Gao},
  \bibinfo{person}{Taolue Chen}, \bibinfo{person}{Fu Song}, {and}
  \bibinfo{person}{Zhilin Wu}.} \bibinfo{year}{2023}\natexlab{}.
\newblock \showarticletitle{{SAT}-based Formal Fault-Resistance Verification of
  Cryptographic Circuits}.
\newblock \bibinfo{journal}{\emph{CoRR}}  \bibinfo{volume}{abs/2307.00561}
  (\bibinfo{year}{2023}).
\newblock


\bibitem[Tan et~al\mbox{.}(2024)]%
        {TanYSCW24}
\bibfield{author}{\bibinfo{person}{Huiyu Tan}, \bibinfo{person}{Xi Yang},
  \bibinfo{person}{Fu Song}, \bibinfo{person}{Taolue Chen}, {and}
  \bibinfo{person}{Zhilin Wu}.} \bibinfo{year}{2024}\natexlab{}.
\newblock \showarticletitle{Compositional Verification of Cryptographic
  Circuits Against Fault Injection Attacks}. In
  \bibinfo{booktitle}{\emph{Proceedings of the 26th International Symposium on
  Formal Methods ({FM}), Part {II}}},
  \bibfield{editor}{\bibinfo{person}{Andr{\'{e}} Platzer},
  \bibinfo{person}{Kristin~Yvonne Rozier}, \bibinfo{person}{Matteo Pradella},
  {and} \bibinfo{person}{Matteo Rossi}} (Eds.), Vol.~\bibinfo{volume}{14934}.
  \bibinfo{pages}{189--207}.
\newblock


\bibitem[Tollec et~al\mbox{.}(2023)]%
        {TollecACHJ23}
\bibfield{author}{\bibinfo{person}{Simon Tollec}, \bibinfo{person}{Mihail
  Asavoae}, \bibinfo{person}{Damien Courouss{\'{e}}}, \bibinfo{person}{Karine
  Heydemann}, {and} \bibinfo{person}{Mathieu Jan}.}
  \bibinfo{year}{2023}\natexlab{}.
\newblock \showarticletitle{{\(\mu\)}ARCHIFI: Formal Modeling and Verification
  Strategies for Microarchitectural Fault Injections}. In
  \bibinfo{booktitle}{\emph{Proceedings of the Formal Methods in Computer-Aided
  Design}}, \bibfield{editor}{\bibinfo{person}{Alexander Nadel} {and}
  \bibinfo{person}{Kristin~Yvonne Rozier}} (Eds.). \bibinfo{pages}{101--109}.
\newblock


\bibitem[Tu et~al\mbox{.}(2024)]%
        {Tu2024}
\bibfield{author}{\bibinfo{person}{Kaihui Tu}, \bibinfo{person}{Xifan Tang},
  \bibinfo{person}{Cunxi Yu}, \bibinfo{person}{Lana Josipovi{\'{c}}}, {and}
  \bibinfo{person}{Zhufei Chu}.} \bibinfo{year}{2024}\natexlab{}.
\newblock \bibinfo{booktitle}{\emph{Logic Synthesis}}.
\newblock \bibinfo{publisher}{Springer Nature Singapore},
  \bibinfo{address}{Singapore}, \bibinfo{pages}{135--164}.
\newblock


\bibitem[Tyagi and Sreenath(2021)]%
        {TYAGI202122}
\bibfield{author}{\bibinfo{person}{Amit~Kumar Tyagi} {and} \bibinfo{person}{N.
  Sreenath}.} \bibinfo{year}{2021}\natexlab{}.
\newblock \showarticletitle{Cyber Physical Systems: Analyses, challenges and
  possible solutions}.
\newblock \bibinfo{journal}{\emph{Internet of Things and Cyber-Physical
  Systems}}  \bibinfo{volume}{1} (\bibinfo{year}{2021}),
  \bibinfo{pages}{22--33}.
\newblock
\showISSN{2667-3452}


\bibitem[van Lint(1999)]%
        {Lint88}
\bibfield{author}{\bibinfo{person}{J.H. van Lint}.}
  \bibinfo{year}{1999}\natexlab{}.
\newblock \bibinfo{booktitle}{\emph{Introduction to coding theory}
  (\bibinfo{edition}{3rd rev. and expanded ed.} ed.)}.
\newblock \bibinfo{publisher}{Springer}.
\newblock
\showISBNx{3-540-64133-5}


\bibitem[Wang et~al\mbox{.}(2021)]%
        {wang2021sofi}
\bibfield{author}{\bibinfo{person}{Huanyu Wang}, \bibinfo{person}{Henian Li},
  \bibinfo{person}{Fahim Rahman}, \bibinfo{person}{Mark~M Tehranipoor}, {and}
  \bibinfo{person}{Farimah Farahmandi}.} \bibinfo{year}{2021}\natexlab{}.
\newblock \showarticletitle{{SoFI}: Security property-driven vulnerability
  assessments of {IC}s against fault-injection attacks}.
\newblock \bibinfo{journal}{\emph{IEEE Transactions on Computer-Aided Design of
  Integrated Circuits and Systems}} \bibinfo{volume}{41}, \bibinfo{number}{3}
  (\bibinfo{year}{2021}), \bibinfo{pages}{452--465}.
\newblock


\bibitem[Wang et~al\mbox{.}(2018)]%
        {WangMGF18}
\bibfield{author}{\bibinfo{person}{Peikun Wang}, \bibinfo{person}{Conrad~J.
  Moore}, \bibinfo{person}{Amir~Masoud Gharehbaghi}, {and}
  \bibinfo{person}{Masahiro Fujita}.} \bibinfo{year}{2018}\natexlab{}.
\newblock \showarticletitle{An {ATPG} Method for Double Stuck-At Faults by
  Analyzing Propagation Paths of Single Faults}.
\newblock \bibinfo{journal}{\emph{{IEEE} Trans. Circuits Syst. {I} Regul.
  Pap.}} \bibinfo{volume}{65-I}, \bibinfo{number}{3} (\bibinfo{year}{2018}),
  \bibinfo{pages}{1063--1074}.
\newblock


\bibitem[Wolf({[n.\,d.]})]%
        {Yosys}
\bibfield{author}{\bibinfo{person}{Claire Wolf}.}
  \bibinfo{year}{[n.\,d.]}\natexlab{}.
\newblock \bibinfo{title}{Yosys Open SYnthesis Suite}.
\newblock \bibinfo{howpublished}{\url{https://yosyshq.net/yosys}}.
\newblock


\bibitem[Zussa et~al\mbox{.}(2013)]%
        {zussa2013power}
\bibfield{author}{\bibinfo{person}{Loic Zussa}, \bibinfo{person}{Jean-Max
  Dutertre}, \bibinfo{person}{Jessy Clediere}, {and} \bibinfo{person}{Assia
  Tria}.} \bibinfo{year}{2013}\natexlab{}.
\newblock \showarticletitle{Power supply glitch induced faults on FPGA: An
  in-depth analysis of the injection mechanism}. In
  \bibinfo{booktitle}{\emph{Proceedings of the {IEEE} 19th International
  On-Line Testing Symposium (IOLTS)}}. \bibinfo{pages}{110--115}.
\newblock


\end{thebibliography}

%


\end{document}